\def\dOi{13(1:10)2017}
\theoremstyle{plain}
\newcommand{\mi}[1]{\mathit{#1}}
\newcommand{\mrm}[1]{\mathrm{#1}}
\newcommand{\Inference}[2]{\begin{array}{@{}c@{}}#1\\[0em]\hline\\[-0.9em]#2\\ \end{array}}
\newcommand{\MATCH}[2]{\mathit{match}(#1,#2)}
\newcommand{\subst}[2]{{#1}/{#2}}
\newcommand{\MUNION}{\uplus}
\newcommand{\UL}{\underline}
\newcommand{\LST}{\widetilde}
\newcommand{\LSTStruct}[2]{\langle\!\!|~\!{#1}~\!|\!\!\rangle_{{#2}}}
\newcommand{\LSTParenL}{{\langle\!\!|}}
\newcommand{\LSTParenR}{{|\!\!\rangle}}
\newcommand{\Comment}[1]{}
\newcommand{\lloc}{\ell}
\newcommand{\llocConst}{l}
\newcommand{\ttloc}{{\tt l}}
\newcommand{\NIL}{{\sf nil}}
\newcommand{\NILN}{{\sf nil}_{\mrm{N}}}
\newcommand{\NILP}{{\sf nil}_{\mrm{P}}}
\newcommand{\PREF}[2]{{#1}.{#2}}
\newcommand{\PAR}[2]{{#1}|{#2}}
\newcommand{\CALL}[2]{#1(#2)}
\newcommand{\FOREACH}[5]{{\sf foreach}({#1},{#2},{#3},{#4}):{#5}}
\newcommand{\ordr}{\preceq}
\newcommand{\SEQ}[2]{{#1};{#2}}
\newcommand{\OUT}[2]{{\sf out} (#1)@{#2}}
\newcommand{\IN}[2]{{\sf in} (#1)@{#2}}
\newcommand{\READ}[2]{{\sf read} (#1)@{#2}}
\newcommand{\EVAL}[2]{{\sf eval} (#1)@{#2}}
\newcommand{\INS}[3]{{\sf insert}({#2},{#1}@{#3})}
\newcommand{\DEL}[4]{{\sf delete}({#1}@{#4},{#2},{#3})}
\newcommand{\EVALP}[2]{{\sf eval}({#1})@{#2}}
\newcommand{\CREATENEW}[2]{{\sf create}({#1},{#2})}
\newcommand{\DROP}[2]{{\sf drop}({#1}@{#2})}
\newcommand{\UPDATE}[5]{{\sf update}({#1}@{#5},{#2},{#3},{#4})}
\newcommand{\AGGR}[6]{{\sf aggr}({#1}@{#6},{#2},{#3},{#4},{#5})}
\newcommand{\PRED}{\psi}
\newcommand{\TRUE}{{\sf true}}
\newcommand{\PARNET}[2]{{#1} || {#2}}
\newcommand{\RESNET}[2]{(\nu {#1}){#2}}
\newcommand{\LOCATED}[3]{{#1}\!::_{#2}\!{#3}}
\newcommand{\TBV}{\mi{tbv}}
\newcommand{\CALTB}{\mi{TB}}
\newcommand{\TBID}{\mi{tid}}
\newcommand{\TBSK}{\mi{sk}}
\newcommand{\LOCS}{\mathcal{L}}
\newcommand{\VLOCS}{\mathcal{U}}
\newcommand{\BV}[1]{\mi{bv}({#1})}
\newcommand{\FV}[1]{\mi{fv}({#1})}
\newcommand{\FL}[1]{\mi{fl}({#1})}
\newcommand{\str}{\mi{str}}
\newcommand{\num}{\mi{num}}
\newcommand{\aop}{\mi{aop}}
\newcommand{\cop}{\mi{cop}}
\newcommand{\strcon}{~\!\cdot~\!}
\newcommand{\ErrNet}{\mi{ERR}}
\newcommand{\SEL}[5]{{\sf select}({#1},{#2},{#3},{#4},{#5})}
\newcommand{\TBDst}{\mi{TB}}
\newcommand{\avar}{\mi{var}}
\newcommand{\bd}{W}
\newcommand{\SQLINSERT}{\mathtt{INSERT}}
\newcommand{\SQLINTO}{\mathtt{INTO}}
\newcommand{\SQLVALUES}{\mathtt{VALUES}}
\newcommand{\SQLSELECT}{\mathtt{SELECT}}
\newcommand{\SQLFROM}{\mathtt{FROM}}
\newcommand{\SQLAS}{\mathtt{AS}}
\newcommand{\SQLWHERE}{\mathtt{WHERE}}
\newcommand{\SQLUPDATE}{\mathtt{UPDATE}}
\newcommand{\SQLSET}{\mathtt{SET}}
\newcommand{\SQLCREATE}{\mathtt{CREATE}}
\newcommand{\SQLTABLE}{\mathtt{TABLE}}
\newcommand{\SQLDROP}{\mathtt{DROP}}
\newcommand{\SQLDELETE}{\mathtt{DELETE}}
\newcommand{\val}{\mi{val}}
\newcommand{\flattensk}{\mi{flatten}_{\mrm{s}}}
\newcommand{\flattendt}{\mi{flatten}_{\mrm{d}}}
\newcommand{\prodSK}{\otimes_{\mrm{sk}}}
\newcommand{\prodR}{\otimes_{\mrm{R}}}
\newcommand{\Err}{\mi{err}}
\newcommand{\EVALT}[1]{\mathcal{E}\llbracket {#1} \rrbracket}
\newcommand{\EVALPred}[1]{\mathcal{E}\llbracket {#1} \rrbracket}
\newcommand{\sattsk}{\Vdash}
\newcommand{\nsattsk}{\Vdash\!\!\!\!\!{/}~\!}
\newcommand{\REDINS}{(\mrm{INS})}
\newcommand{\REDDEL}{(\mrm{DEL})}
\newcommand{\REDSEL}{(\mrm{SEL})}
\newcommand{\REDUPD}{(\mrm{UPD})}
\newcommand{\REDAGGR}{(\mrm{AGR})}
\newcommand{\REDCREATE}{(\mrm{CRT})}
\newcommand{\REDDROP}{(\mrm{DRP})}
\newcommand{\REDPAR}{(\mrm{PAR})}
\newcommand{\REDRES}{(\mrm{RES})}
\newcommand{\REDEQUIV}{(\mrm{EQUIV})}
\newcommand{\REDDELBND}{(\mrm{DEL})}
\newcommand{\REDEVALP}{(\mrm{EVL})}
\newcommand{\SUM}{\mi{sum}}
\newcommand{\AVG}{\mi{avg}}
\newcommand{\REDFORTT}{(\mrm{FOR^{tt}})}
\newcommand{\REDFORFF}{(\mrm{FOR^{ff}})}
\newcommand{\REDSEQTT}{(\mrm{SEQ^{tt}})}
\newcommand{\REDSEQFF}{(\mrm{SEQ^{ff}})}
\newcommand{\LIDN}[1]{\mi{Lid}(#1)}
\newcommand{\LIDC}[2]{\mi{Lid}(#1,#2)}
\newcommand{\TRANS}[2]{{#1}\rightarrow{#2}}
\newcommand{\NOREP}[1]{\mi{no\_rep}(#1)}
\newcommand{\TT}{\mi{tt}}
\newcommand{\FF}{\mi{ff}}
\newcommand{\EVALPredb}[3]{{\cal E}_{#1}\llbracket {#2} \rrbracket_{#3}} 
\newcommand{\MATCHb}[3]{\mi{match}_{#1}({#2},{#3})} 
\newcommand{\minimal}{\mi{Minimal}}
\newcommand{\envnet}{N_{\mrm{env}}}
\newcommand{\oknet}{\mi{ok}}
\newcommand{\dom}[1]{\mathbf{D}_{#1}}
\newcommand{\itfc}{\nabla}
\newcommand{\ext}[2]{{#1},{#2}}
\newcommand{\emptyTPEnv}{[]}
\newcommand{\jdgtE}[3]{{#1}\vdash {#2} {~\!\smalltriangleright~\!~\!\!} {#3}}
\newcommand{\jdgtpred}[2]{{#1}\vdash {#2} {~\!\smalltriangleright~\!~\!\!} {\booltp}}
\newcommand{\jdgtt}[3]{{#1}\vdash {#2} {~\!\smalltriangleright~\!~\!\!} {#3}}
\newcommand{\jdgtT}[3]{{#1}\vdash {#2} {~\!\smalltriangleright~\!~\!\!} {#3}}
\newcommand{\jdgta}[3]{{#1}\vdash {#2} {~\!\smalltriangleright~\!~\!\!} {#3}}
\newcommand{\jdgtP}[2]{{#1}\vdash {#2}}
\newcommand{\jdgtC}[2]{{#1}\vdash {#2}}
\newcommand{\jdgtN}[2]{{#1}\vdash {#2}}
\newcommand{\skproj}[3]{{#1\!\!}\downarrow^{#2}_{#3}}
\newcommand{\type}{\tau}
\newcommand{\rcdcon}{~\!,~\!}
\newcommand{\msett}[1]{{#1}~\mi{mset}}
\newcommand{\inttp}{\mi{Int}}
\newcommand{\strtp}{\mi{String}}
\newcommand{\loctp}{\mi{Loc}} 
\newcommand{\idtp}{\mi{Id}}
\newcommand{\booltp}{\mi{Bool}}
\newcommand{\sktp}{\tau}
\newcommand{\dttp}{\tau_{\mrm{d}}}
\newcommand{\msettp}{\tau_{\mrm{m}}}
\newcommand{\prodtp}{\tau_{\mrm{p}}}
\newcommand{\size}[1]{\mi{size}({#1})}
\newcommand{\sz}[1]{\mi{sz}\_\mi{asc}({#1})}
\newcommand{\restbid}{{\sf SSResult}}
\newcommand{\dcnet}{N_{\mrm{DC}}}
\newcommand{\dcnab}{\nabla_{\mrm{DC}}}
\newcommand{\CASE}{{\bf{Case}~}}
\newcommand{\IH}{{induction hypothesis }}
\newcommand{\STEP}[3]{{#1}\vdash {#2} \rightarrow {#3}}
\begin{document}

\title[A Coordination Language for Databases]{A Coordination Language for Databases}

\author[X.~Li]{Ximeng Li\rsuper a}
\address{{\lsuper{a,c,d,e}}Technical University of Denmark}
\email{\{ximl, albl, fnie, hrni\}@dtu.dk}

\author[X.~Wu]{Xi Wu\rsuper b}
\address{{\lsuper{b}}The University of Queensland\\
East China Normal University}
\email{xi.wu@uq.edu.au}

\author[A.~L.~Lafuente]{Alberto Lluch Lafuente\rsuper c}
\address{\vspace{-18 pt}}

\author[F.~Nielson]{Flemming Nielson\rsuper d}
\address{\vspace{-18 pt}}

\author[H,~R.~Nielson]{Hanne Riis Nielson\rsuper e}
\address{\vspace{-18 pt}}

\keywords{Coordination Language, Database, Distribution}
\subjclass{D.3.2 Design languages}
% \titlecomment{OPTIONAL comment concerning the title, \eg, if a variant
% or an extended abstract of the paper has appeared elsewehere}
%%%%%%%%%%%%%%%%%%%%%%%%%%%%%%%%%%%%%%%%%%%%%%%%%%%%%%%%%%%%%%%%%%%%%%%%%%

%% the abstract has to PRECEED the command \maketitle:
%% be sure not to issue the \maketitle command twice!
\begin{abstract}
We present a coordination language for the modeling of distributed database applications.
The language, baptized Klaim-DB, borrows the concepts of localities and nets of the coordination language Klaim
	but re-incarnates the tuple spaces of Klaim as databases.
It provides high-level abstractions and primitives for the access and manipulation of structured data,
	with integrity and atomicity considerations.
We present the formal semantics of Klaim-DB and develop a type system that avoids potential runtime errors
	such as certain evaluation errors and mismatches of data format in tables, which are monitored in the semantics.
The use of the language is illustrated in a scenario where the sales from different branches of a chain of department stores
	are aggregated from their local databases.
Raising the abstraction level and encapsulating integrity checks in the language primitives
	have benefited the modeling task considerably.
\end{abstract}
\maketitle

%!TEX root = ./paper.tex

\section*{Introduction}

Today's data-intensive applications are becoming increasingly distributed.
Multi-national collaborations on science, economics, military etc.,
  require the communication and aggregation of data extracted from databases that are geographically dispersed.
Distributed applications including websites frequently adopt the Model-View-Controller (MVC) design pattern
  in which the ``model'' layer is a database.
Fault tolerance and recovery in databases also favors the employment of distribution and replication.
The programmers of distributed database applications are faced with not only the challenge of writing good queries,
  but also that of dealing with the coordination of widely distributed databases.
It is commonly accepted in the formal methods community that the \emph{modeling} of complex systems in design
  can reduce implementation errors considerably~\cite{WoodcockLBF09_FM,Abrial07_FMTheoryPractice,Newcombe14_Amazon}.

Klaim~\cite{Nicola98_Klaim} is a kernel language facilitating the specification of distributed and coordinated processes.
In Klaim, processes and information repositories exist at different localities.
The information repositories are in the form of tuple spaces, that can hold data and code.
The processes can read tuples from (resp. write tuples to) local or remote tuple spaces,
  or spawn other processes to be executed at certain localities.
Many distributed programming paradigms can be modeled in Klaim~\cite{Nicola98_Klaim}.

Klaim thus provides an ideal ground for the modeling of networked applications in general.
However, the unstructured nature of tuple spaces and fine-grained operations mostly targeting individual tuples create difficulties
	in the description of the data-manipulation tasks that are usually performed using a high-level language such as SQL.
A non-exhaustive list of the disadvantages of staying at the original abstraction level of Klaim are given below.
\begin{itemize}
  \item A considerable amount of meta-data needed by databases has to be maintained as
        particular tuples or components of tuples.
  \item The sanity checks associated with database operations have to be borne in mind and
        performed manually by the programmer.
  \item Atomicity guarantees are difficult to deliver when batch operations are performed.
\end{itemize}

To support the modeling of applications operating on distributed, structured data,
	we propose the coordination language Klaim-DB.
The language is inspired by Klaim in which it allows the distribution of \emph{nodes}, and remote operations on data.
Its succinct syntax facilitates the precise formulation of a structural operational semantics,
	giving rigor to high-level formal specification and reasoning of distributed database applications.
The language also provides structured data organized as databases and tables,
	and high-level actions that accomplish the data-definition and data-manipulation tasks ubiquitous in these applications.

Since it is desirable to be able to predict and rule out certain classes of runtime errors statically,
we develop a type system that scrutinizes specifications for
	potential evaluation errors and format mismatches in manipulating structured data.
These errors are monitored in the semantics, and the safety of the type system (in the sense of subject reduction~\cite{Pierce02_Types}) and
	its soundness with respect to the monitoring are proved.
The efficiency of type checking is also evidenced by a formal result.

We will use database operations involved in the management of a large-scale chain of department stores as our running example.
Each individual store in the chain has its local database
  containing information about the current stock and sales of each kind of product.
The semantic rules for the core database operations will be illustrated by the local query and maintenance of these individual databases,
  and our final case study will be concerned with data aggregation across multiple local databases
  needed to generate statistics on the overall product sales.

Compared to the conference version~\cite{WuLLNN15_KlaimDB}, our new contribution is twofold.
\begin{enumerate}
  \item The language design is made more succinct and uniform, and
	is augmented with two important features: code mobility and join operations on tables
	(which were only discussed as a potential extension previously).
  \item The type system mentioned above and its theoretical properties were not developed in \cite{WuLLNN15_KlaimDB}.
  	To clarify the guarantees of well-typedness,
  		we have also adapted the semantics of \cite{WuLLNN15_KlaimDB} to signal run-time errors explicitly,
  		rather than get stuck, in case certain classes of abnormalities arise.
\end{enumerate}

\noindent This paper is structured as follows.
In Section~\ref{sec:background},
we briefly review the common database operations and the Klaim language,
and describe the design of Klaim-DB.
In Section~\ref{sec:syntax}, the syntax of Klaim-DB is presented,
which is followed by the structural operational semantics specified in Section~\ref{sec:semantics}.
In Section~\ref{sec:typing}, we present the type system and establish its theoretical properties.
Our case study is then presented in Section~\ref{sec:case_study}.
We discuss potential alternatives at several points in our development and related work in Section~\ref{sec:related}, and
conclude in Section~\ref{sec:conclusion}.
Multiset notations are explained in Appendix~\ref{app:multiset} and
the proofs of theoretical results can be found in Appendix~\ref{app:proofs}.

%%% Local Variables:
%%% mode: latex
%%% TeX-master: "paper"
%%% End:

%!TEX root = ./paper.tex

\section{Background}\label{sec:background}

In this section we give a brief overview of common database operations and of the coordination language Klaim.
On this basis, we introduce our ideas about the design of Klaim-DB.

Our discussion hereafter will be related to the following scenario.
Consider the management of a large-scale chain of department stores,
in which the head office can manage the sales of different imaginary brands (e.g., KLD, SH,...) in its individual branches,
as shown in Figure~\ref{case_fig}.
The localities $\ttloc_0$, ..., $\ttloc_n$ represent the distinct places where the database systems of the
head office and the branches are located.

\begin{figure}[!htb]
\caption{Running Example}\label{case_fig}
\begin{center}
  % Requires \usepackage{graphicx}
  \includegraphics[
%  width=0.7\textwidth,
scale=0.4
  % natwidth=635,natheight=317
  ]{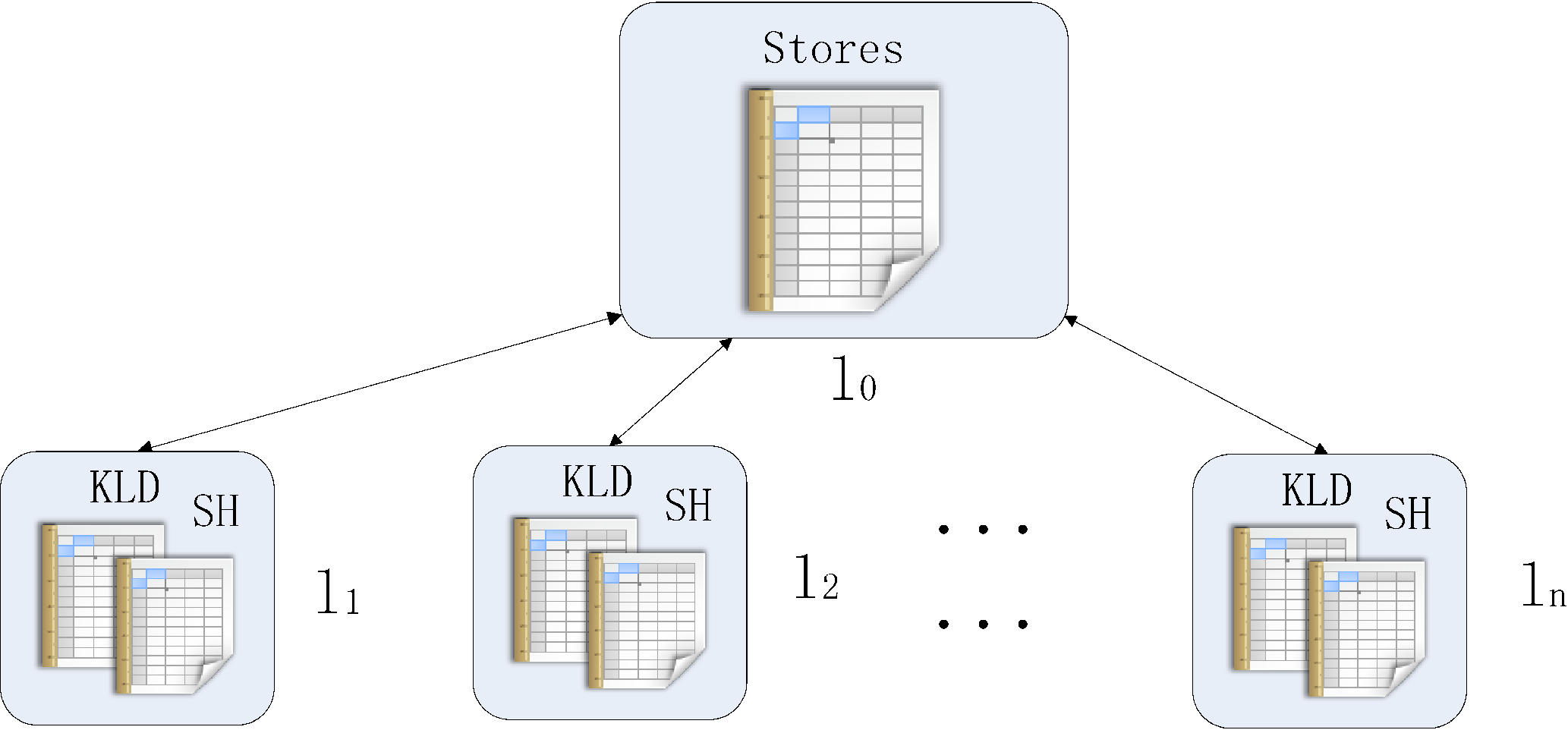}\\
\end{center}
\end{figure}

Suppose a table {\sf Stores} exists in the database of the head office (Figure~\ref{fig:Stores_table}).
Each row is a record of the information of one branch:
the city of the branch, its address and name, the brands sold in the branch and
the locality corresponding to the branch.

\begin{figure}[ht!]
  \caption{The Table {\sf Stores}}\label{fig:Stores_table}
  \centering
\begin{tabular}{|c|c|c|c|c|}
  \hline
  % after \\: \hline or \cline{col1-col2} \cline{col3-col4} ...
  $\mi{City}$ & $\mi{Address}$ & $\mi{Shop\_Name}$ & $\mi{Brands}$ & $\mi{Locality}$\\
  \hline\hline
  CPH & ~ABC DEF 2, 1050~  & ~Shop1~ & ~$\{{\sf KLD},{\sf SH},...\}$~ & $\ttloc_1$ \\
  \hline
  CPH & ~DEF HIJ 13, 2800~ & ~Shop2~ & ~$\{{\sf KLD},{\sf SH},...\}$~ & $\ttloc_2$\\
  \hline
  CPH & ~HIJ KLM 26, 1750~ & ~Shop3~ & ~$\{{\sf KLD},{\sf SH},...\}$~ & $\ttloc_3$\\
  \hline
  AAL & ~KLM NOP 3, 3570~ & ~Shop4~ & ~$\{{\sf LAM},{\sf IMK},...\}$~ & $\ttloc_4$\\
  \hline
  AAL & ~NOP QUW 18, 4500~ & ~Shop5~ & ~$\{{\sf LAM},{\sf IMK},...\}$~ & $\ttloc_5$\\
  \hline
  ... & ... & ... & ... & ...\\
\end{tabular}
\end{figure}

Similarly, each database of a branch has several tables identified by brand names.
Each such table records the stock and sales of all types of shoes of its corresponding brand.
The table {\sf KLD} in one of the branches, about the shoe brand KLD, is shown in Figure~\ref{fig:KLD_table},
where ``HB'' stands for ``High Boot'' and ``SB'' for ``Short Boot''.

\begin{figure}[ht!]
  \caption{The Table {\sf KLD} in one branch}\label{fig:KLD_table}\normalsize
  \centering\small
\begin{tabular}{|c|c|c|c|c|c|c|}
  \hline
  % after \\: \hline or \cline{col1-col2} \cline{col3-col4} ...
  $\mi{Shoe\_ID}$ & $\mi{Shoe\_type}$ & $\mi{Year}$ & $\mi{Color}$ & $\mi{Size}$ & $\mi{In}\hbox{-}\mi{stock}$ & $\mi{Sales}$\\
  \hline\hline
  ~001~ & ~HB~  & ~2015~ & ~red~ & ~38~& ~5~ & ~2~ \\
  \hline
  ~001~ & ~HB~  & ~2015~ & ~red~ & ~37~& ~8~ & ~5~ \\
  \hline
  ~001~ & ~HB~  & ~2015~ & ~red~ & ~36~& ~3~ & ~1~ \\
  \hline
  ~001~ & ~HB~  & ~2015~ & ~black~ & ~38~& ~3~ & ~2~ \\
  \hline
  ~001~ & ~HB~  & ~2015~ & ~black~ & ~37~& ~5~ & ~2~ \\
  \hline
  ~002~ & ~SB~  & ~2015~ & ~green~ & ~38~& ~2~ & ~0~ \\
  \hline
  ~002~ & ~SB~  & ~2015~ & ~brown~ & ~37~& ~4~ & ~3~ \\
  \hline
  ... & ... & ... & ... & ...& ... & ...\\
\end{tabular}
\end{figure}

\subsection*{A Brief Overview of Database Operations}

We now sketch the typical database operations involved
in the management of the chain of department stores.

\paragraph{Insertion}
Suppose a new shoe type of the brand KLD is added to stock, or a new branch has been opened up.
Then a new row is supposed to be inserted into either the table {\sf KLD} or the table {\sf Stores}.
\paragraph{Deletion}
When a shoe type has become outdated and removed from stock,
a row corresponding to this shoe type needs to be deleted from the table {\sf KLD}.
\paragraph{Selection}
From time to time, a manager at the head office may want to know the stock and sales figures
of certain types of KLD shoes.
Selection operations are well-suited to the manager's needs.
A concrete example is ``select the color, size, and sales of the high boots that are not red''.
\paragraph{Update}
After each pair of KLD shoes of some type is sold,
the row in table {\sf KLD} corresponding to that shoe type should be updated,
deducing one from the \textit{In-stock} field.
After a new brand is introduced into the inventory at certain branches,
the rows corresponding to these branches need to be updated,
adding the brand name to the \textit{Brand} field.
\paragraph{Aggregation}
Generating sales statistics is an important task, especially when the amount of data gets large.
Tasks such as totaling the number of KLD shoes colored black sold can be performed using
aggregation operations.
In general, the aggregation operation plays a key role in developing business analytics.
\paragraph{Table Creation}
After a new brand is introduced into the inventory at certain branches,
we need not only to update the records for these branches in the table {\sf Stores},
but also to create a new table for the brand at each of these branches.
\paragraph{Table Deletion}
If the company producing a shoe brand has gone out of business,
and all the shoes of that brand has been sold or disposed of at some branch,
then it would be desirable to be able to remove the table corresponding to that brand at that branch.

\subsection*{The Essentials of Klaim}
Before touching upon the design rationale of Klaim-DB,
we discuss the essential elements of the coordination language Klaim,
which is a fundamental inspiration for Klaim-DB.

Klaim~\cite{Nicola98_Klaim}
(\underline{K}ernel \underline{L}anguage for
 \underline{A}gents \underline{I}nteraction and \underline{M}obility)
is a core language derived from the Linda language~\cite{GelernterB82_GlobalBuf}.
Like Linda, Klaim features \emph{generative communication}~\cite{Gelernter85_Generative} between processes:
processes share ``tuple spaces'' and can generate tuples in them
for other processes to retrieve via pattern matching.
This allows a higher level of abstraction and expressiveness compared with
communication via shared variables or message passing.

Klaim supports programming with \emph{explicit use of localities}.
The action $\OUT{t}{l}$ generates the tuple $t$ in the tuple space \emph{located at $l$}.
The action $\IN{T}{l}$ looks for tuples $t$ matching the template $T$ that potentially contains formal fields,
in the tuple space \emph{located at $l$}.
If such tuples exist, the formal fields of $T$ are bound to the actual counterparts in $t$ for later use,
and $t$ is subsequently removed from the tuple space.
The action $\READ{T}{l}$ is a non-destructive version of $\IN{T}{l}$ ---
it does not remove matched tuples.

As an alternative to addressing a remote locality $l$ directly,
a process $P$ can be spawned at $l$ using the action $\EVAL{P}{l}$ to perform the tasks locally ---
this is a form of \emph{code mobility}.

\subsection*{The Design of Klaim-DB}

In Klaim-DB, we conceive tuple spaces as being structured into tables.
Each located tuple space in Klaim is now a located database consisting of a collection of tables.
A table has the structure $(I,R)$,
where $I$ is an \emph{interface} that contains the table identifier and describes the data format of the table,
and $R$ is a multiset containing the data records.

Correspondingly, the basic operations of Klaim-DB are the high-level database operations such as
insertion, selection, aggregation, etc., targeting whole tables,
rather than plain outputs and inputs of tuples.
Each of these database operations is executed in an \emph{atomic} step
and makes sure when necessary that the data involved
conforms to the format specified in the interface of the table targeted.
An example on atomic operations is an update operation
concerned with the $\mi{Brands}$ field of the table {\sf Stores},
that adds the brand {\sf PKL} to the set of brands sold in all shops.
This operation is performed as one holistic step, rather than in a record-by-record fashion.
Hence if it is performed simultaneously with another update
that removes the brand {\sf PKL} from all shops,
we will either find {\sf PKL} among the brands sold in all the shops, or none of them,
after both operations are completed.

In the operations of Klaim-DB, we often use Klaim-style pattern matching, combined with predicates,
to provide considerable expressive power.
Suppose we need to select from the table {\sf KLD} (of Figure~\ref{fig:KLD_table})
the color, size and sales of the types of high boots with ID ``001'', which are not red.
We can achieve this with the {\sf select} action of Klaim-DB,
using the template $(!\mi{id},!\mi{tp},!\mi{yr},!\mi{cr},!\mi{sz},!\mi{is},!\mi{ss})$ to match against
the 7-field rows of the table,
specifying the criterion with the predicate $\mi{id}=``001"\land \mi{tp}=``\mi{HB}"\land \mi{cr}\neq ``red"$,
and signaling that the only fields needed in the result are for ``color'', ``size'' and ``sales'' with
the tuple $(\mi{cr},\mi{sz},\mi{ss})$.

Joining tables is supported in Klaim-DB,
and we can in fact select data from the join of tables located at different places.
In addition, the different kinds of operations can be combined to form more powerful queries.
An example is guiding selections with the statistics generated in aggregation operations ---
consider the selection of shoes whose sales figures are above average,
where the average number is obtained via a preceding aggregation.

Klaim-DB inherits the code mobility primitive from Klaim;
so instead of operating on remote tables directly,
processes can be spawned onto remote localities, and
collect and manipulate data from tables at different places. Code mobility can yield typical parallel processing patterns, for example combining the {\sf eval} primitive with  {\sf foreach} loops to process all tuples in a table in parallel. 

\subsubsection*{Klaim-DB versus Standard Klaim}
One may be prompted to think about whether standard Klaim is already
sufficiently expressive to model the databases and their corresponding
management issues in the aforementioned scenario.
The answer is positive, but without defeating the purpose of Klaim-DB:
\begin{enumerate}
\item By providing data abstractions like tables,
  and language primitives such as selection and aggregation,
  Klaim-DB makes a designer \emph{think} in terms of the management of distributed databases,
  rather than the manipulation of individual tuples.
  For example, data selection is naturally a batch operation in Klaim-DB,
  where the system designer can specify complicated criteria via predicates,
  rather than a replication/recursion in which basic input operations are performed,
  as in Klaim.
\item Related to (1), Klaim-DB enables us to maintain certain meta-data
  in a more natural and efficient way.
  In standard Klaim, the membership of the tuple $(001,\mi{HB}, 2015,$ $\mi{red}, 38, 5, 2)$
  in the table {\sf KLD} can be represented by augmenting the tuple with the component {\sf KLD},
  e.g., as $({\sf KLD}, 001,\mi{HB}, 2015, \mi{red}, 38, 5, 2)$.
  Such a mixture of structural information and content information is not beneficial for
  addressing the meta-data by specified fields of language primitives ---
  instead of adhering to one single convention imposed by a language,
  one has to follow potentially different conventions each specific to a particular system.
  In addition, meta-data such as the table identifier need to be repeated in all tuples,
  reducing space efficiency.
\item Klaim-DB actions are atomic, which avoids race conditions existing in situations like
  updating a table over which a selection is being performed, or
  concurrently updating the same table, and reduces unexpected
  behaviors under concurrency in general.
  The ability to have fine-grained interleaving of tuple-by-tuple operations can be
  recovered using {\sf foreach} loops, \emph{only when efficiency is important}.
\item Access control types are developed in standard Klaim and
  datatypes are used in X-Klaim.
  We need \emph{dedicated} typing disciplines to ease database-oriented thinking and design.
  Supposing an action inserting the tuple $(001, \mi{HB}, 2015)$ into the table {\sf KLD}
  is placed in a piece of specification,
  it would be desirable to raise a typing error.
  In Klaim-DB tables,
  the separation of structural information in interfaces $I$
  and content information in data sets $R$
  facilitates maintaining the types of \emph{all} the rows of each table in its interface,
  thereby facilitating the detection of typing errors including but
  are not limited to the aforementioned kind.
\end{enumerate}

%%% Local Variables:
%%% mode: latex
%%% TeX-master: "paper"
%%% End:

%!TEX root = ./paper.tex

\section{Syntax}\label{sec:syntax}

This section focuses on the syntactic features of Klaim-DB. We start presenting the syntax of Klaim-DB in Section~\ref{sec:syntax-only}. We then present in Section~\ref{sec:syntax-types} a first description of the types that we shall consider in our type system. We also provide some additional remarks on Klaim (Section~\ref{sec:syntax-klaim}) and examples based on our case study  (Section~\ref{sec:syntax-example}).

\subsection{Klaim-DB syntax}\label{sec:syntax-only}
The syntax of Klaim-DB is presented in Figure~\ref{fig:syntax}. In the following we discuss all syntactic ingredients of the language following bottom-up order: from the bottom-most syntactic category $e$ for expressions to the top-most syntactic category $N$ for networks.

\begin{figure}[ht!]
  \centering
  \caption{The Syntax of Klaim-DB}\label{fig:syntax}
  \small
$$
\begin{array}{rcll}
\hline \\[-1.8ex]
N & ::= & \NILN \mid \ErrNet \mid \PARNET{N_1}{N_2} \mid \RESNET{\llocConst}{N} \mid \LOCATED{\llocConst}{}{C} & (\mi{networks}) \\[1ex]
C & ::= & P \mid (I,R) \mid \PAR{C_1}{C_2} & (\mi{components})\\[1ex]
P & ::= & \NILP \mid \PREF{a}{P} \mid \CALL{A}{\tilde{e}} \mid \FOREACH{\CALTB}{T}{\psi}{\ordr}{P} \mid
          \SEQ{P_{1}}{P_{2}} & (\mi{processes})\\[1ex]
a & ::= & \INS{\TBID}{t}{\lloc} \mid \DEL{\TBID}{T}{\psi}{\ell} \mid \SEL{\LST{\CALTB}}{T}{\psi}{t}{!\TBV} \mid & (\mi{actions})\\[1ex]
  &     & \UPDATE{\TBID}{T}{\PRED}{t}{\ell} \mid \AGGR{\TBID}{T}{\PRED}{f}{T'}{\ell}\mid \\[.5ex]
  &     & \CREATENEW{\TBID@\lloc}{\TBSK} \mid \DROP{\TBID}{\ell} \mid \\[.5ex]
  &     & \EVALP{P}{\ell}\\[1ex]
\CALTB & ::= & \TBID@\lloc \mid \TBV \mid (I,R) & (\mi{tables}) \\[1ex]
% \CALTB & ::= & \TBV \mid (I,R) \\[1ex]
I & ::= & (\TBID, \TBSK) & (\mi{interfaces})\\[1ex]
T & ::= & \bd_1,...,\bd_n \quad (n>0) & (\mi{templates}) \\[1ex]
\bd & ::= & !x \mid{} !u & (\mi{fields}) \\[1ex]
t & ::= & e_1,...,e_n \quad (n>0) & (\mi{tuples}) \\[1ex]
\psi & ::= & \TRUE \mid e_1~\!\cop~\!e_2 \mid e_1\in e_2 \mid
             \lnot \psi_1 \mid \psi_1\land \psi_2 & (\mi{predicates}) \\[1ex]
  e & ::= & \num \mid \str \mid \TBID \mid \llocConst \mid x \mid u \mid e_1 \strcon e_2 \mid
  e_1 ~\!\aop~\! e_2 \mid \{e_1,...,e_n\}\quad(n\ge 0) & (\mi{expressions}) \\\\[-2ex]
\hline
\end{array}
$$\normalsize
\end{figure}

% \mid \NEWTAB{u}{t} \mid \DROP{u} temporarily omitted

\paragraph{\bf Expressions.}
We assume a set $\LOCS$ of localities, and a set $\VLOCS$ of locality variables.
An \emph{expression} $e$ can be an integer $\num$, a string $\str$, a locality $\llocConst\in\LOCS$,
a table identifier $\TBID$, a variable for data ($x$) or for locality ($u\in\VLOCS$),
the concatenation $e_1\strcon e_2$ of (string) expressions $e_1$ and $e_2$,
an arithmetic operation $\aop$ applied on sub-expressions $e_1$ and $e_2$, or
a multiset $\{e_1,...,e_n\}$ of expressions $e_1$, ..., $e_n$ which are themselves multiset-free.
We will follow \cite{Nicola98_Klaim} to write $\ell$ for either a locality $l$ or a locality variable $u$.

\paragraph{\bf Predicates.}
A \emph{predicate} $\psi$ can be $\TRUE$ for logical truth,
a comparison $e_1~\!\cop\!~e_2$ of two expressions $e_1$ and $e_2$,
a membership test of expression $e_1$ in $e_2$,
the negation of some predicate $\psi_1$, or
a conjunction of two predicates $\psi_1$ and $\psi_2$.
For a comparison $e_1~\! \cop~\! e_2$,
$\cop$ ranges over equality between integers, strings, or localities,
numerical ordering of integers, and alphabetical ordering of strings.
For the membership test $e_1\in e_2$ to be carried out, $e_2$ needs to evaluate to a multiset $M$.

\paragraph{\bf Tuples and Templates.}
We distinguish between \emph{tuples} $t$ and \emph{templates} $T$.
The components of tuples are arbitrary expressions.
On the other hand,
a template $T$ contains only formal fields $!x$ and $!u$
where $x$ is a variable that can be bound to data, and $u$ to localities as introduced above.
We assume templates $T$ are linear, i.e., each bound variable can only appear once in $T$.

\paragraph{\bf Interfaces.}
An \emph{interface} $I$ of a table has the structure $(\TBID,\TBSK)$ where
$\TBID$ is the table identifier and $\TBSK$ is a schema (or a type)
describing the data format of the table.
We will use the syntax $I.\TBID$ and $I.\TBSK$ to refer to the corresponding components in $I$,
i.e., $(\TBID',\TBSK').\TBID=\TBID'$ and $(\TBID',\TBSK').\TBSK=\TBSK'$.

\paragraph{\bf Tables.}
\emph{Tables} of different forms constitute the syntactical category $\CALTB$.
In more detail, a table can be a reference $\TBID@\lloc$, with identifier $\TBID$ and locality $\lloc$,
a table variable $\TBV$, or
a concrete table $(I,R)$ with interface $I$ and data set $R$.
For a concrete table $(I,R)$, the data set $R$ is a multiset of tuples whose components are constant values.
Each tuple in $R$ corresponds to one row of data in the table,
and is supposed to have $I.\TBSK$ as its type.

\paragraph{\bf Actions.}
There are eight different kinds of \emph{actions}, which are explained in detail below.
To help the readers familiar with the SQL language, we will discuss when put in a SQL-like syntax,
what the DB-oriented actions would be.
These \emph{SQL-like} queries used for the analogy differ from standard SQL queries in two ways:
located table references are used instead of plain table references, and
bound variables of templates are used (in predicates and tuples) instead of field names (or attribute names).
The motivation for this deviation is that a detailed encoding of
Klaim-DB actions into SQL would require, for example, to deal with the
conversion from position-based indexing in Klaim-DB schemas to name-based indexing in SQL.
Since such encoding would hamper readability and
the aim of the SQL analogy is to provide intuition rather than formal results,
we prefer to keep a lighter notation.

\paragraph{\textbf{Insertion:}} The action $\INS{\TBID}{t}{\lloc}$ is used to insert a new row $t$
into a table with identifier $\TBID$ inside the database at $\lloc$.
In SQL-like syntax, the insertion action can be written as
$\SQLINSERT~\SQLINTO~\TBID@\lloc~\SQLVALUES~t$.
\begin{exa}[Adding New Shoes]\label{ex:act_ins}
  The following action inserts into the table {\sf KLD} at $\ttloc_1$
  an entry for KLD high boots of a new color, white, sized ``37'', produced in 2015, with 6 in stock:
  \small$$\INS{{\sf KLD}}{(``001", ``\mi{HB}", ``2015", ``white", ``37", 6, 0)}{\ttloc_1}.$$\normalsize
\end{exa}
\paragraph{\textbf{Deletion}:}
The action $\DEL{\TBID}{T}{\psi}{\lloc}$ deletes all rows matching the template $T$,
resulting in bindings that satisfy the predicate $\psi$,
from the table referenced by $\TBID$ in the database located at $\lloc$.
This action corresponds to the SQL-like query $\SQLDELETE~\SQLFROM~\TBID@\lloc~\SQLAS~T~\SQLWHERE~\PRED$.\footnote{Readers familiar with SQL will notice our abuse of notation using SQL's $\SQLAS$ clause in combination with a template $T$. As we explained above, we prefer to keep a lighter notation and resort to the reader's intuition. This remark applies to all subsequent references to SQL-like operations, where we perform similar abuses of notation.}
\begin{exa}[Deleting Existing Shoes]\label{ex:act_del}
  The following action deletes all entries for white high boots of the brand KLD, sized ``37'',
  from the table ${\sf KLD}$ at $\ttloc_1$:
  \small
  $$
  \begin{array}{l}
  {\sf delete}({\sf KLD}@\ttloc_1,(!\mi{id},!\mi{tp},!\mi{yr},!\mi{cr},!\mi{sz},!\mi{is},!\mi{ss}),
    \mi{tp}=``\mi{HB}"\land\mi{cr}=``\mi{white}"\land\mi{sz}=``37").
  \end{array}
  $$
  \normalsize
\end{exa}

\paragraph{\textbf{Selection:}}
The action $\SEL{\LST{\CALTB}}{T}{\psi}{t}{!\TBV}$ picks from the Cartesian product of the data sets
of tables identified by the list elements of $\LST{\CALTB}$ all rows that match the template $T$, and
resulting in bindings that satisfy the predicate $\psi$.
The instantiations of $t$ with these bindings form a new table that is bound to $\TBV$.
Forming Cartesian products and specifying constraints with $\psi$ capture certain join operations of
the Relational Algebra~\cite{Codd70_Relational} and SQL.
The selection action corresponds to the SQL-like query
$\SQLSELECT~t~\SQLINTO~!\TBV~\SQLFROM~\CALTB_1,...,\CALTB_n~\SQLAS~T$ $\SQLWHERE~\PRED$.
\begin{exa}[Selection of Shoes in a Certain Color]\label{ex:act_sel}
  The following action selects the color, size, and sales of
  the types of high boots of the brand KLD that are not red,
  from the database at $\ttloc_1$, with the resulting table substituted into $\TBV$:
  \small
  $$
  \begin{array}{l}
    {\sf select}({\sf KLD}@~\!\ttloc_1,(!\mi{id},!\mi{tp},!\mi{yr},!\mi{cr},!\mi{sz},!\mi{is},!\mi{ss}), \\
    \qquad\qquad\qquad\!\! \mi{id}=``001"\land\mi{tp}=``\mi{HB}"\land\mi{cr}\neq ``red",
    (\mi{cr},\mi{sz},\mi{ss}), !\TBV).
  \end{array}
  $$
  \normalsize
\end{exa}

\paragraph{\textbf{Update:}}
The action $\UPDATE{\TBID}{T}{\PRED}{t}{\lloc}$ replaces each row matching $T$ and satisfying $\psi$
in table $\TBID$ (at $\lloc$) with a new row $t$,
while leaving the rest of the rows unchanged.
This action corresponds to the SQL-like query $\SQLUPDATE~\TBID@\lloc~\SQLAS~T~\SQLSET~t~\SQLWHERE~\PRED$.
\begin{exa}[Update of Shoes Information]\label{ex:act_upd}
  Suppose two more red KLD high boots sized $37$ are sold.
  The following action informs the database at $\ttloc_1$ of the corresponding changes
  in the stock and sales figures.
  \small$$
  \begin{array}{r@{~\!}l}
  {\sf update}({\sf KLD}@{\ttloc_1},
  & (!\mi{id},!\mi{tp},!\mi{yr},!\mi{cr},!\mi{sz},!\mi{is},!\mi{ss}),
    \mi{tp}=``\mi{HB}"\land \mi{cr}=``red"\land \mi{sz}=``37",\\
  & (\mi{id},\mi{tp},\mi{yr},\mi{cr},\mi{sz}, \mi{is} - 2, \mi{ss} + 2)).
  \end{array}
  $$\normalsize
\end{exa}

\paragraph{\textbf{Aggregation:}}
The action $\AGGR{\TBID}{T}{\psi}{f}{T'}{\lloc}$ applies the aggregator function $f$
on the multiset of all rows matching $T$ and satisfying $\psi$ in table $\TBID$ (at $\lloc$) and
binds the aggregation result to the pattern $T'$.
In SQL, aggregation is achieved via selection, and our aggregation action can be compared to
the SQL-like query $\SQLSELECT~f(t)~\SQLINTO~T'~\SQLFROM~\TBID@\lloc~\SQLAS~T~\SQLWHERE~\PRED$.

\begin{exa}[Aggregation of Sales Figures]\label{ex:act_aggr}
The following action produces the total sales (substituted into $\mi{res}$)
of shoes of the brand KLD with ID ``001'' at $\ttloc_1$:
\small
$$
{\sf aggr}({\sf KLD}@{\ttloc_1},(!\mi{id},!\mi{tp},!\mi{yr},!\mi{cr},!\mi{sz},!\mi{is},!\mi{ss}),
\mi{id}=``001", \SUM_7, !\mi{res}),
$$
\normalsize
where $\SUM_7=\lambda R.(\SUM(\{v_7|(v_1,...,v_7)\in R\}))$,
i.e., $\SUM_7$ is a function from multisets $R$
to unary tuples containing the summation results of the 7-th components of the tuples in $R$.
\end{exa}

\paragraph{\textbf{Table Creation and Removal}:}
The action $\CREATENEW{\TBID@\lloc}{\TBSK}$ creates a table with interface $(\TBID,\TBSK)$
at locality $\lloc$.
The SQL-like version is $\SQLCREATE~\SQLTABLE~\TBID@\lloc~(\TBSK)$.
The action $\DROP{\TBID}{\lloc}$ drops the table with identifier $\TBID$ at locality $\lloc$.
In a SQL-like syntax this could be written $\SQLDROP~\SQLTABLE~\TBID@\lloc$.

\begin{exa}[Creating Table for Turnovers]\label{ex:act_crt}
The following action creates a table with identifier {\sf Turnover} at the head office,
to record the yearly turnovers of the whole chain of department stores:
\small
$$
\CREATENEW{{\sf Turnover}@\ttloc_0}{\strtp\times \inttp}.
$$
\normalsize
The table only has two columns ---
one for the year, represented by strings, and the other for the amount, represented by integers.
\end{exa}

\paragraph{\textbf{Code Mobility}:}
As in Klaim, the action $\EVAL{P}{\lloc}$ spawns the process $P$ at the locality $\lloc$.
For our chain of department stores,
it can be useful to create a mobile process from the system of the head office
and to make it migrate among certain relevant branches to accomplish tasks including but
are not limited to the collection of data.
In general, the use of $\EVAL{P}{\lloc}$ enhances the ability to
parallelize and distribute the operations on databases, in particular when combined with the construct to perform iterations, as we shall see later on. \\[-2ex]

We will use $\FV{-}$ and $\BV{-}$, respectively,
to refer to the set of free variables and the set of bound variables of their arguments,
which can be expressions, predicates, tables, tuples, templates or processes.
We will also use $\FL{N}$ to refer to the set of free localities of net $N$.
We globally assume that bound variables and bound localities all have distinct names. \\[-2ex]

\paragraph{\bf Processes.}
A \emph{process} $P$ can be
an inert process $\NILP$,
an action-prefixed process $\PREF{a}{P}$,
a parameterized procedure invocation $\CALL{A}{\tilde{e}}$,
a looping process $\FOREACH{\CALTB}{T}{\psi}{\ordr}{P}$,
or a sequential composition $\SEQ{P_{1}}{P_{2}}$.
Looping is introduced in addition to recursion via process invocation,
to ease the task of traversing tables or data selection results in a natural way.
In more detail, the process $\FOREACH{\CALTB}{T}{\psi}{\ordr}{P}$ traverses the rows of $\CALTB$
that match the template $T$,
resulting in bindings that satisfy the predicate $\psi$,
and execute the process $P$ once for each such binding produced.
The parameter $\ordr$ is a partial order on tuples:
for $t_1$ and $t_2$ as rows of $\CALTB$ such that $t_1\ordr t_2$ and $t_1\neq t_2$,
$t_1$ is processed before $t_2$ is.
Sequential composition is allowed in addition to action-prefixing,
to support imperative-programming-style thinking.
We have chosen that prefixing binds more tightly than sequential composition,
i.e., $\PREF{a}{P_1};{P_2}$ represents $(\PREF{a}{P_1});{P_2}$, and that
in $P_1;P_2$, the scopes of the variables declared in $P_1$ are local to $P_1$.
Note that the explicit bracketing $\PREF{a}{(P_1;P_2)}$ allows variables bound in $a$
to be used in both processes $P_1$ and $P_2$,
unlike what is the case with $(\PREF{a}{P_1});{P_2}$.
Each procedure $A$ that is invoked with the list $\LST{e}$ of arguments
needs to be defined with $A(\avar_1:\tau_1,...,\avar_n:\tau_n)\triangleq P$,
where $\avar_1:\tau_1,...,\avar_n:\tau_n$ is a list of formal parameters $x$ or $u$,
associated with their types
(the types admitted in our language will be explained shortly),
and $P$ is the process acting as the body of the procedure.

\paragraph{\bf Components.}
A \emph{component} $C$ can be a process $P$,
a table of the form $(I,R)$,
or a parallel composition $\PAR{C_1}{C_2}$ of two components.

\paragraph{\bf Networks.}
A \emph{net} $N$ models a distributed database system
that may contain several data\-bases situated at different localities.
An empty net is represented by $\NILN$.
The construct $\llocConst::C$ represents a node of the net,
which captures the ensemble $C$ of processes and tables at $\llocConst$.
All the tables at a locality constitute the database at that locality.
The special net $\ErrNet$ signals that an error has happened in the execution.
The parallel composition of different nodes is represented using the $||$ operator.
With the restriction operator $\RESNET{\llocConst}{N}$, the scope of $\llocConst$ is restricted to $N$.

\paragraph{\bf Systems.}
% To explicate the definitions of procedures,
The specification of a Klaim-DB system takes the following form:
\small
$$
\begin{array}{l}
  {\sf let}~A_1(\avar_{11}:\tau_{11},...,\avar_{1k_1}:\tau_{1k_1})\triangleq P_1 \\
  \qquad\qquad ... \\
  \qquad\!\!\!\! A_n(\avar_{n1}\!:\tau_{n1},...,\avar_{nk_n}\!:\!\tau_{nk_n})\triangleq P_n \\
  {\sf in}~N%_\star
\end{array}
$$
\normalsize

\noindent
where each procedure used in $N$ has its own definition. We focus on \emph{closed} systems: i.e. systems where
all variables to be used in any execution must have been previously bound, and
there are no unknown components in the systems.

\subsection{A taste of types}\label{sec:syntax-types}
We introduce the types that our language admits ---
they have already appeared together with the formal parameters of procedures,
and will play a fundamental role in our type system to be developed in Section~\ref{sec:typing}.
The following types are allowed:
\small
$$
\begin{array}{r@{~\!~\!}l}
\type_{\mrm{d}} ::= & \inttp \mid \strtp \mid \loctp \mid \idtp \\[1ex]
\type_{\mrm{m}} ::= & \msett{\tau_{\mrm{d}}} \mid \type_{\mrm{d}} \\[1ex]
\type_{\mrm{p}} ::= & \tau^1_{\mrm{m}} \times ... \times \tau^n_{\mrm{m}} \\[1ex]
\type ::= & \type_{\mrm{p}} \mid \booltp \mid \msett{\tau_{\mrm{p}}}\rightarrow \type_{\mrm{p}}.
\end{array}
$$
\normalsize

A datatype $\dttp$ can be $\inttp$ for integers, $\strtp$ for strings,
$\idtp$ for table identifiers, and $\loctp$ for localities.
A multiset type $\msettp$ is either of the form $\msett{\dttp}$
for multisets containing elements of type $\dttp$, or just a data type.
A product type $\prodtp$ is of the form $\msettp^1\times...\times\msettp^n$,
and is the type for tuples.
Finally, a type $\tau$ can be a product type $\prodtp$ for strcutured data in tables,
a boolean type $\booltp$ for predicates,
or an arrow type $\msett{\prodtp}\rightarrow\prodtp$ for aggregator functions.
Note that multisets are involved in two different ways.
A component of a tuple can be a multiset,
which is the case for the fourth component of all the tuples in Figure~\ref{fig:Stores_table}.
In this situation, we restrict the types of the elements to be data types $\dttp$.
On the other hand, each data set of a table is a multiset of tuples;
hence the elements have product types $\prodtp$,
as is reflected in the signature of aggregator functions.

Throughout our presentation, a list of objects $o_1,...,o_n$ will be denoted by $\LST{o}$,
  or $\LSTStruct{o(i)}{i\le n}$ when the objects are structured.
We also admit $\LSTStruct{o(i)}{i}$ as a simpler version of the latter when the length is unimportant.
For example, a list of formal parameters and types for a procedure definition can be
$\LSTStruct{\avar_i:\tau_i}{i}$.
A list $\LST{\CALTB}$ of tables is a shorthand for $\LSTStruct{\CALTB_i}{i}$.
For an object $o$, $|o|$ will denote its size ---
  the number of elements of $o$ when $o$ is a list, and
  the number of components of $o$ when $o$ is a tuple or a template, etc.

\subsection{On the syntax of Klaim and Klaim-DB}\label{sec:syntax-klaim}

We comment here on two additional aspects in which Klaim-DB deviates from standard Klaim.

%\begin{rem}
First, in Klaim, there is a two-layer stratification of localities --- logical ones and physical ones.
If a logical locality is thought of as the URL of a site,
then its corresponding physical locality can be thought of as the IP address.
Allocation environments are needed at each physical site to record
which logical localities are locally mapped to which physical ones.
Compared to the conference version (\cite{WuLLNN15_KlaimDB}) of this paper,
we have replaced this two-layer structure with a single notion of ``locality'',
to be able to focus on the exposition of the semantics and typing of
database management constructs and primitives.
Recovering this missing mechanism would not create any technical
difficulties for the development.
%\end{rem}

%\begin{rem}
Second, in \cite{Nicola98_Klaim}, tuples can contain both formal fields and actual ones.
The distinction between templates that can contain formal fields and tuples that cannot,
as made in \cite{NicolaGHNNPP10_FLTP}, \cite{NicolaLPT14_SCEL} and the current paper,
has the effect of explicating certain constraints such as formal fields are not allowed
in the data to be output into a tuple space or inserted into a table,
when a tuple is used instead of a template.
The further constraint used in the current development,
that templates can only contain formal fields, is no real restriction ---
combining templates with \emph{predicates} provides more flexibility and expressiveness
than the matching of tuples/templates in
\cite{Nicola98_Klaim}, \cite{NicolaGHNNPP10_FLTP} and \cite{NicolaLPT14_SCEL}.
%\end{rem}

\subsection{Modeling the Chain of Department Stores} \label{sec:syntax-example}

Going back to the example with a chain of department stores introduced in Section~\ref{sec:background},
we now formally model the network of databases in Klaim-DB.
Recall that the head office maintains a database at locality $\ttloc_0$,
and the branches maintain their own databases at localities $\ttloc_1$ to $\ttloc_n$.

The table {\sf Stores} at the head office (Figure~\ref{fig:Stores_table}) can be represented as $(I_0,R_0)$,
where $I_0.\TBID={\sf Stores}$.
The header describes the purpose of each column of the table and
suggests its appropriate type that shows up in $I_0.\TBSK$, and the subsequent rows
constitute the multiset $R_0$, contain information of the different branches.
We have $I_0.\TBSK=\strtp\times \strtp\times \strtp\times (\msett{\idtp})\times \loctp$, where
each component type describes one column of the table.
Similarly, the table {\sf KLD} at locality $\ttloc_j$ ($1\le j\le n$) can be represented as $(I_j,R_j)$,
where $I_j.\TBID={\sf KLD}$ and
$I_j.\TBSK=\strtp\times \strtp\times \strtp\times \strtp\times \strtp\times \inttp\times \inttp$.
We assume that the data stored in these tables indeed have the types specified in the schemas
(quotation marks around strings are omitted in Figure~\ref{fig:Stores_table} and Figure~\ref{fig:KLD_table}).

To sum up, the databases and operating processes used by the chain of department stores constitute
the following net $\dcnet$:
\small
$$
\ttloc_0::((I_0,R_0) | C'_0) ~||~\ttloc_1::((I_1,R_1) | C'_1) ~||~ ... ~||~
\ttloc_n::((I_n,R_n) | C'_n),
$$
\normalsize
where for $j\in\{1,...,n\}$, $(I_j,R_j)$ describes the local table for the brand KLD inside its database at $\ttloc_j$,
and for each $k\in\{0,...,n\}$, $C'_k$ stands for the remaining
processes and tables at $\ttloc_j$.

\begin{rem}
The reader may have found that there is no formal counterpart of the field names (City, Address, etc.)
in the headers of the tables in Figure~\ref{fig:Stores_table} and Figure~\ref{fig:KLD_table}.
These field names were part of the schemas in the conference version~\cite{WuLLNN15_KlaimDB},
and getting rid of them in the current development is a deliberate choice ---
all the operations on schemas can be realized \emph{positionally},
and the existence of field names in the schemas would be purely symbolic.
\end{rem}

%%% Local Variables:
%%% mode: latex
%%% TeX-master: "paper"
%%% End:

%!TEX root = ./paper.tex

\section{Semantics}\label{sec:semantics}

We devise a structural operational semantics~\cite{PlotkinSOS} for Klaim-DB. 
The semantics is defined with the help of a structural congruence --- 
  the smallest congruence relation satisfying the rules in Figure~\ref{fig:congruence}, 
  where the $\alpha$-equivalence of $N$ and $N'$ is denoted by $N\equiv_\alpha N'$. 

\begin{figure}[ht!]
  \centering
  \caption{The Structural Congruence}\label{fig:congruence}\small
  \begin{tabular}{l l}
    \hline \\[-2ex]
    $\PARNET{N_1}{N_2}\equiv \PARNET{N_2}{N_1}$ &~~ $\RESNET{l_1}{\RESNET{l_2}{N}}\equiv \RESNET{l_2}{\RESNET{l_1}{N}}$\\
    $\PARNET{(\PARNET{N_1}{N_2})}{N_3}\equiv \PARNET{N_1}{(\PARNET{N_2}{N_3})}$ &~~ $\PARNET{N_1}{\RESNET{l}{N_2}}\equiv \RESNET{l}{(\PARNET{N_1}{N_2})}~~(\mrm{if}~l\not\in \FL{N_1})$  \\
    \Comment{$\PARNET{N}{\NIL}\equiv N$ &~~ $\LOCATED{l}{}{C}\equiv \LOCATED{l}{}{\PAR{C}{\NIL}}$\\}
    $N\equiv N'~~(\mrm{if}~N\equiv_{\alpha}N')$ &~~ $\LOCATED{l}{}{(\PAR{C_1}{C_2})}\equiv \LOCATED{l}{}{C_1}~||~\LOCATED{l}{}{C_2}$ \\
    $N||~\!\NILN\equiv N$ & ~~ $\llocConst::(P|\NILP) \equiv \llocConst:: P$ \\[.5ex]
    \hline \\ 
  \end{tabular}\vspace{-3ex}\normalsize
\end{figure}

We start by covering some preliminaries.
The evaluation of expressions, predicates and tuples using the function $\EVALT{-}$, 
  is inductively defined in Figure~\ref{fig:eval_ett}.
The evaluation result of an expression can be an integer, a string, a table identifier, a locality, or $\Err$,
the evaluation result of a predicate is a value in $\{\TT,\FF,\Err\}$, and
the evaluation result of a tuple is an evaluated tuple or $\Err$, 
where $\Err$ signals an evaluation error in all three cases. 
We globally assume that arithmetic operations are ideal; thus there are no overflow errors. 
As a simplification, we also assume that division by zero yields the integer zero.

\begin{figure}[ht!]
\caption{The Evaluation of Expressions, Predicates and Tuples}\label{fig:eval_ett}
\hrule~\\~\\[-2ex]\small
\hspace{-12cm}\framebox[1.2\width]{Expressions}\\[-2ex]
$$
\begin{array}{rl}
  \EVALT{\num}&=\num \\
  \EVALT{\str}&=\str \\
  \EVALT{\TBID} &= \TBID \\
  \EVALT{l}&=l \\
  \EVALT{e_1\strcon e_2}&=
                          \begin{cases}
                            \EVALT{e_1}\strcon \EVALT{e_2} & 
                            \mrm{if}~\EVALT{e_1}\mrm{~and~}\EVALT{e_2}\mrm{~are~strings}\\
                            \Err & \mrm{otherwise}
                          \end{cases} \\\\[-2ex]
  \EVALT{e_1~\!\aop~\! e_2}&=
                             \begin{cases}
                               \EVALT{e_1}~\!\aop~\!\EVALT{e_2} &
                               \mrm{if}~\EVALT{e_1} \mrm{~and~}\EVALT{e_2} \mrm{~are~integers} \\
                               \Err & \mrm{otherwise}
                             \end{cases} \\\\[-2ex]
  \EVALT{\{e_1,...,e_n\}} &=
                            \begin{cases}
                              \{\EVALT{e_1},...,\EVALT{e_n}\} &
                              \mrm{if}~\mrm{for~all~}j, ~\EVALT{e_j}~\mrm{are~}
                              \mrm{values~of~the~same~type~\dttp}
                            \\
                            \Err & \mrm{otherwise}
                            \end{cases}
\end{array}
$$
\hspace{-12cm}\framebox[1.2\width]{Predicates}\\[-2ex]
$$
\begin{array}{rl}
\EVALPred{\TRUE} & =\TT \\[.5ex]
\EVALPred{e_1~\!\cop~\! e_2} & =
  \begin{cases}
  \EVALT{e_1}~\!\cop~\!\EVALT{e_2} & 
    \mrm{if}~\EVALT{e_1} \mrm{~and~} \EVALT{e_2} 
    \mrm{~are~values~of~the~same~type~}\dttp \\
  \Err & \mrm{otherwise}
  \end{cases}
 \\\\[-2ex]
\EVALPred{e_1\in e_2} & = 
  \begin{cases}
  \EVALT{e_1} \in \EVALT{e_2} & 
    \mrm{if}~\EVALT{e_2}~\mrm{is~a~multiset} 
    \mrm{~of~values~of~the~type~of~}\EVALT{e_1} \\
  \Err & \mrm{otherwise}
  \end{cases} 
 \\\\[-2ex]
\EVALPred{\lnot\psi_1} & = 
  \begin{cases}
    \FF & \mrm{if}~\EVALPred{\psi_1}=\TT \\
    \TT & \mrm{if}~\EVALPred{\psi_1}=\FF \\
    \Err & \mrm{if}~\EVALPred{\psi_1}=\Err 
  \end{cases}
 \\\\[-2ex]
\EVALPred{\psi_1\land\psi_2} & = 
  \begin{cases}
    \TT & \mrm{if}~\EVALPred{\psi_1}=\TT \mrm{~and~}\EVALPred{\psi_2}=\TT \\
    \Err & \mrm{if}~\EVALPred{\psi_1}=\Err \mrm{~or~} \EVALPred{\psi_2}=\Err \\
    \FF & \mrm{otherwise}
  \end{cases}
\end{array}
$$
\Comment{
  \\\\[-2ex]
\EVALPredb{b}{\psi} = 
  \begin{cases}
  \EVALPred{\psi} & \mrm{if}~\EVALPred{\psi} \neq \Err \\
  b & \mrm{otherwise}
  \end{cases}
}
\hspace{-12cm}\framebox[1.2\width]{Tuples}\\[-3ex]
$$
\EVALT{e_1\rcdcon ...\rcdcon e_n} =
  \begin{cases}
  \EVALT{e_1} \rcdcon ...\rcdcon \EVALT{e_n} & 
    \mrm{if}~\forall j\in\{1,...,n\}: \EVALT{e_j}\neq\Err \\
  \Err & \mrm{otherwise}
  \end{cases}
$$
\Comment{
\\\\[-1ex]\hspace{-10cm}\framebox[1.2\width]{Templates}\\[-2ex]
$$
\begin{array}{rl}
\EVALT{!x} & =~\! !x \\
\EVALT{!u} & =~\! !u \\
\EVALT{T_1\rcdcon T_2} & = 
  \begin{cases}
  \EVALT{T_1}\rcdcon \EVALT{T_2} & \mrm{if}~
    \EVALT{T_1}\neq\Err\mrm{~and~}\EVALT{T_2} \neq\Err\\
  \Err & \mrm{otherwise}
  \end{cases}
\end{array}\\
$$
}
\smallskip\normalsize
\hrule
\end{figure}

Pattern matching of evaluated tuples $et$ against templates $T$, denoted $et/T$, 
is an operation that results in a substitution or $\Err$. 
The detailed definition is given in Figure~\ref{fig:match_pred}, which contains several rule schemas
  that need to be instantiated by replacing $\val$ with one of $\num$, $\str$ and $\TBID$. 
The replacement of $\val_j$ in the same multiset needs to be performed consistently, 
  e.g., $\{\num_1,\num_2\}$ is allowed, but not $\{\num_1,\str_2\}$. 
In case the format of $et$ does not match against that of $T$, the value $\Err$ is resulted. 
For example, we have 
\small
\begin{align*}
(5,7)/(!x,!y)= &~\! [5 / x, 7 / y]\\
(5,7)/(!x,!y,!z)= &~\! \Err
\end{align*}
\normalsize  
We also write $\psi\sigma$ for the result of applying the substitution $\sigma$ to the formula $\psi$.

\begin{figure}[ht!]
\caption{The Substitution $\subst{et}{T}$}\label{fig:match_pred}
\centering
\hrule
\small
\begin{tabular}{c}{~}\\[-1ex]
$\val/!x=[\val/x]$ \quad
$\{\val_1,...,\val_n\}/!x=[\{\val_1,...,\val_n\}/x]$ \\[1ex]
$\{l_1,...,l_n\}/!x=[\{l_1,...,l_n\}/x]$ \quad
$l/!u=[l/u]$ 
\\[1ex]
$l/!x=\Err$\quad
$\val/!u=\Err$\quad
$\{\val_1,...,\val_n\}/!u=\Err$ 
\\[1ex]
$
\subst{(e_1,...,e_n)}{(\bd_1,...,\bd_m)}=
\begin{cases}
  \sigma_1...\sigma_m & \mrm{if}~m=n\land \forall j:\subst{e_j}{\bd_j}=\sigma_j\land\sigma_j\neq\Err \\
  \Err & \mrm{otherwise}
\end{cases}
$
\Comment{
\\ \\[-2ex]
$\MATCHb{b}{eT}{et}=
  \begin{cases}
    \MATCH{eT}{et} & \mrm{if}~\MATCH{eT}{et}\neq \Err \\
    b & \mrm{otherwise}
  \end{cases}
$
}
\\\\[-1.5ex]
\end{tabular}
\normalsize
\hrule
\end{figure}

The well-sortedness of tuples and templates is specified in Figure~\ref{fig:sat_sk_etpl}.
The judgments $t\sattsk\prodtp$ and $T\sattsk\prodtp$
represent that the tuple $t$ and the template $T$, respectively,
are well-sorted under the sort $\type_{\mrm{p}}$, which is essentially a product type. 
In the case of templates, the different kinds of binders $!x$ and $!u$ are treated separately --- 
an $!x$ cannot have the sort $\loctp$, which is the sort of any $!u$. 

We will use $\uplus$, $\cap$ and $\setminus$ to represent the union, intersection and subtraction  
  of multisets, and the detailed definitions are given in Appendix~\ref{app:multiset}.

\begin{figure}[ht!]
\caption{Well-Sortedness of Evaluated Tuples and Templates}\label{fig:sat_sk_etpl}
\centering
\small
\hrule\smallskip\smallskip
\begin{tabular}{c}
$\num\sattsk \inttp$\quad $\str\sattsk \strtp$\quad $\TBID\sattsk \idtp$\quad 
$\Inference{e_1\sattsk \dttp~...~e_n\sattsk \dttp}{\{e_1,...,e_n\}\sattsk \msett{\dttp}}$ \quad 
$\llocConst\sattsk \loctp$ \quad
$\Inference{e_1\sattsk \sktp_1 ~~...~~ e_n\sattsk \sktp_n}
{e_1\rcdcon... \rcdcon e_n \sattsk \sktp_1\times ...\times \sktp_n}$ \\\\[-2ex]
$!x\sattsk \inttp$\quad $!x\sattsk \strtp$\quad $!x\sattsk \idtp$ \quad $!x\sattsk \msett{\dttp}$\quad
$!u\sattsk \loctp$ \quad 
$\Inference{\bd_1\sattsk \sktp_1~~...~~ \bd_n\sattsk \sktp_n}
{\bd_1\rcdcon...\rcdcon \bd_n \sattsk \sktp_1\times...\times \sktp_n}$
\end{tabular}
\smallskip
\hrule
\normalsize
\end{figure}

We will use $I.\TBSK\!\downarrow\!^T_t$ to represent the projection of the schema $I.\TBSK$ 
according to the template $T$ (matching the format requirements imposed by $I.\TBSK$) and the tuple $t$. 
The projection result is a new schema that describes only the columns referred to by the components of $t$, 
obtainable according to Definition~\ref{def:sk_proj}. 

\begin{defi}[Projection of Schema]\label{def:sk_proj}
Let $W_1,...,W_k$ be a template and $e_1,...,e_n$ be made of a constants or variables bound in some $\bd_j$, and such that $n \leq k$. The projection
$\skproj{(\sktp_1\times...\times\sktp_k)}{\bd_1,...,\bd_k}{~\!e_1,...,e_n}$ is defined as 
\small
$$
  \skproj{(\sktp_1\times...\times\sktp_k)}{\bd_1,...,\bd_k}{~\!e_1,...,e_n}=
    \sktp'_1\times...\times\sktp'_n\\
      \quad \mrm{where}~
        \sktp'_i=
          \begin{cases}
          \sktp & \mrm{where}~e_i\sattsk \sktp \\
          \sktp_j & \mrm{where}~\bd_j=~\! !e_i
          \end{cases}
$$
\normalsize
In all other cases the operation is undefined. 
\end{defi}

The following example demonstrates this projection operation on schemas. 

\begin{exa}\label{ex:projection} 
The table {\sf KLD} of Figure~\ref{fig:KLD_table} has the schema
 $\tau_{\mrm{KLD}}=\strtp \times \strtp \times \strtp \times \strtp \times \strtp \times \inttp \times \inttp$.
Suppose in addition that 
  \small
  $$T=(!\mi{id}, !\mi{tp}, !\mi{yr}, !\mi{cr}, !\mi{sz}, !\mi{is}, !\mi{ss})$$
  \normalsize
  and $t=(\mi{cr},\mi{sz},\mi{ss})$. 
Then $\skproj{\tau_{\mrm{KLD}}}{T}{t}$ gives the types for tuples containing data from three of the columns, 
  for the color, size and sales figures of KLD shoes. 
According to Definition~\ref{def:sk_proj}, we have $\skproj{\tau_{\mrm{KLD}}}{T}{t}=\strtp\times \strtp\times \inttp$. 
\qed
\end{exa}

To join tables, 
we define the operations \small$\prodSK(\LST{\CALTB},\LSTStruct{\llocConst_i::(I_i,R_i)}{i})$ \normalsize and 
\small$\prodR(\LST{\CALTB},\LSTStruct{\llocConst_i::(I_i,R_i)}{i})$ \normalsize that give the product 
of the schemas and data sets, respectively, of the list $\LST{\CALTB}$ of tables given in different forms. 
Elements of $\LST{\CALTB}$ that are table identifiers are regarded as references to tables 
that need to be provided in the list $\LSTStruct{\llocConst_i::(I_i,R_i)}{i}$. 

In Definition~\ref{def:prod_sk}, the operation $\flattensk(\cdot)$ 
eliminates one-level nesting in product types:
\small
$$
\flattensk((\msettp^{11}\times...\times\msettp^{1n_1})\times ...
\times (\msettp^{k1}\times...\times\msettp^{kn_k}))=
\msettp^{11}\times...\times\msettp^{1n_1}\times ...\times \msettp^{k1}\times...\times\msettp^{kn_k}
$$
\normalsize
In Definition~\ref{def:prod_R}, the operation $\flattendt(\cdot)$ takes a multiset of tuples 
as argument and eliminates one-level nesting in all the tuples in this argument set. 
\small
$$
\flattendt(R) = \{ (v_{11},...,v_{1n_1},...,v_{k1},...,v_{kn_k}) \mid 
 ((v_{11},...,v_{1n_1}),...,(v_{k1},...,v_{kn_k}))\in R \}
$$
\normalsize

\begin{defi}[$\prodSK(\LST{\CALTB},\LSTStruct{\llocConst_i::(I_i,R_i)}{i})$]\label{def:prod_sk} 
  Let $\LST{\CALTB}$ be a list of tables and $\LSTStruct{\llocConst_i::(I_i,R_i)}{i}$ be a list of localized tables. 
  \small
  $$
  \begin{array}{l}
    \prodSK(\LST{\CALTB},\LSTStruct{\llocConst_i::(I_i,R_i)}{i}) = \\
    \begin{cases}
      \mrm{undef} & \!\!\!\!
      \begin{array}{r@{~\!}l}
        \mrm{if} & (\exists \TBID,l,j:\CALTB_j=\TBID@l\land 
                    \forall i: l_i\neq l\lor I_i.\TBID\neq \TBID) ~\!\lor \\
                    & \LST{\CALTB}~\mrm{contains~some~}\TBV\mrm{~or~}\TBID@u 
        \end{array} 
      \\\\[-2ex]
      \flattensk(\sktp_1\times...\times \sktp_n) ~\mrm{where}~&  \mrm{otherwise} \\
      ~~n=|\LST{\CALTB}|~\mrm{and} & \\
      ~~\sktp_j=
      \begin{cases}
      I_k.\TBSK & \mrm{if}~\exists k:\CALTB_j=I_k.\TBID @l_k \\
      I_0.\TBSK & \mrm{if}~\exists R_0: \CALTB_j=(I_0,R_0) 
      \end{cases}
      &       
    \end{cases} 
  \end{array}
  $$
  \normalsize
\end{defi}

\begin{defi}[$\prodR(\LST{\CALTB},\LSTStruct{\llocConst_i::(I_i,R_i)}{i})$]\label{def:prod_R}
  Let $\LST{\CALTB}$ be a list of tables and $\LSTStruct{\llocConst_i::(I_i,R_i)}{i}$ be a list of localized tables. 
  \small
  $$
  \begin{array}{l}
    \prodR(\LST{\CALTB},\LSTStruct{\llocConst_i::(I_i,R_i)}{i}) = \\
    \begin{cases}
      \mrm{undef} & \!\!\!\!
      \begin{array}{r@{~\!}l}
         \mrm{if} & (\exists \TBID,l,j:\CALTB_j=\TBID@l\land 
                    \forall i: l_i\neq l\lor I_i.\TBID\neq \TBID) ~\!\lor \\
          & \LST{\CALTB}~\mrm{contains~some~}\TBV\mrm{~or~}\TBID@u 
        \end{array} 
      \\\\[-2ex]
      \flattendt(R_1\times...\times R_n) ~\mrm{where}~ & \mrm{otherwise} \\
      ~~n=|\LST{\CALTB}|~\mrm{and} & \\
      ~~R_j=
      \begin{cases}
      R_k & \mrm{if}~\exists k: \CALTB_j=I_k.id@l_k \\
      R_0 & \mrm{if}~\exists I_0: \CALTB_j=(I_0,R_0) 
      \end{cases}
      &     
    \end{cases} 
  \end{array}
  $$
  \normalsize
\end{defi}

\begin{exa}
Consider joining the table {\sf KLD} at the locality $\ttloc_1$ and 
the table {\sf KLD} at $\ttloc_2$. 
Suppose the first table is given by ${\sf KLD}@\ttloc_1$ and the second by $(I_2,R_2)$. 
We have \small
$\prodSK(\LSTParenL {\sf KLD}@\ttloc_1, (I_2,R_2) \LSTParenR,\LSTParenL \ttloc_1\!::\!(I_1,R_1) \LSTParenR)=
\flattensk(I_1.\TBSK\times I_2.\TBSK)=\strtp\times \strtp\times \strtp\times \strtp\times \strtp\times \inttp
\times \inttp\times \strtp\times \strtp\times \strtp\times \strtp\times \strtp\times \inttp\times \inttp$.  
\normalsize 
Here $I_1.\TBSK$ is obtained for the first table ${\sf KLD}@\ttloc_1$ 
from the singleton list $\LSTParenL \ttloc_1::(I_1,R_1) \LSTParenR$. 
We also have \small
$\prodR(\LSTParenL {\sf KLD}@\ttloc_1, (I_2,R_2) \LSTParenR,\LSTParenL \ttloc_1\!::\!(I_1,R_1) \LSTParenR)=
\flattendt(R_1\times R_2)$. 
We dispense with expanding the latter for its verbosity. 
\normalsize
\qed
\end{exa}

The semantics is designed to monitor evaluation errors and format mismatches 
such as between a tuple and a schema, that could arise at runtime. 
The side conditions of the transition rules are frequently structured as a case analysis --- 
if an evaluation error or a format mismatch occurs, then the net $\ErrNet$ is resulted; 
otherwise a normal transition step will take place. 
In Section~\ref{sec:typing}, we will develop a type system for which well-typedness implies the absence of situations where $\ErrNet$ is produced. 

The semantic judgment is of the form $\envnet\vdash N\rightarrow N'$, 
  representing that the net $N$ can make a transition into the net $N'$, 
  with the net $\envnet$ as the environment. 
We allow the abbreviation of $\NILN\vdash N\rightarrow N'$ as $\vdash N\rightarrow N'$. 
The environment $\envnet$ is consulted when creating a table at some locality $\llocConst$, 
  to check whether a table with the same identifier already exists at $\llocConst$. 

We proceed with a detailed explanation of the semantic rules for Klaim-DB actions 
in Figure~\ref{fig:sem_ins_del}, Figure~\ref{fig:sem_sel_upd_aggr} and Figure~\ref{fig:sem_create_drop}, 
and the semantic rules for processes and nets in Figure~\ref{fig:semantics_cont}. 
In the explanation, we will avoid reiterating that each table resides in a database located at some $l$, 
  but directly state ``table ... located at $l$''.
Note that an action that refers to a locality variable $u$, e.g., $\INS{\TBID}{t}{u}$, 
cannot be executed until $u$ is bound to some locality $l$.
Hence in our semantic rules the actions (which are ready for execution) always come with 
concrete localities $l$, rather than $u_0$, or $\ell$. 

Examples in the setting of the chain of department stores will be provided to illustrate these rules at work. 
The empty environment will be used for all the illustrations except that of {\sf create}, 
  since it is only with the {\sf create} action that the environment is consulted for information. 

\subsection*{Insertion and Deletion}

Recall that the action $\INS{\TBID}{t}{\lloc}$ 
inserts a new row $t$ into a table with identifier $\TBID$ inside the database at $\lloc$. 
In closed nets, $t$ cannot contain variables when the insertion action is about to be executed, 
just as $\lloc$ cannot contain locality variables.
The rule $\REDINS$ of Figure~\ref{fig:sem_ins_del} describes the performance of this action
from locality $\llocConst_1$, with the parameter $\lloc$ already replaced with some locality $\llocConst_2$. 
It is checked that the table identifier $\TBID$ needs to agree with 
that of a destination table $(I,R)$ already existing at $\llocConst_2$.
If such agreement exists, 
but the evaluation result of tuple $t$ does not satisfy the format requirements 
imposed by the schema $I.\TBSK$, 
then the net $\ErrNet$ is produced; 
otherwise the evaluated tuple is added into the data set $R$.

\begin{figure}[ht!]
  \centering
  \caption{The Semantics for Insertion and Deletion}\label{fig:sem_ins_del}
  \hrule  
  \hspace{-.0cm}\vbox{\small
  \begin{tabular}{l}{~}\\[-1ex]
  $
  \begin{array}{rl}
  \REDINS &
  \Inference{I.\TBID=\TBID}
     {\envnet \vdash \PARNET{\LOCATED{\llocConst_1}{}{\INS{\TBID}{t}{\llocConst_2}.P}~}{~\LOCATED{\llocConst_2}{}{(I,R)}}\rightarrow N'} \\\\[-1ex]
  & \mrm{where}~N'=
     \begin{cases}
      \ErrNet & \mrm{if}~\EVALT{t}\nsattsk I.\TBSK \\
      \PARNET{\LOCATED{\llocConst_1}{}{P}~}{~\LOCATED{\llocConst_2}{}{(I,R\MUNION\{\EVALT{t}\})}} & \mrm{otherwise}
      \end{cases}
  \end{array}
  $
  \\\\
  $\begin{array}{rl}
     \REDDELBND & 
      \Inference
      {I.\TBID=\TBID}                  
      {\envnet\vdash\PARNET{\LOCATED{\llocConst_1}{}{\DEL{\TBID}{T}{\psi}{\llocConst_2}.P}~}{~\LOCATED{\llocConst_2}{}{(I,R)}}\rightarrow N'} \\\\[-1ex]
     & \mrm{where}~N'=
     \begin{cases}
      \ErrNet & \mrm{if}~ T\nsattsk I.\TBSK \lor \exists t\in R:(\subst{t}{T}=\Err \lor
                          \EVALPred{\psi(\subst{t}{T})}=\Err) \\\\[-2.5ex]
      \LOCATED{\llocConst_1}{}{P} ~||~\LOCATED{\llocConst_2}{}{(I,R')} 
                  & \mrm{otherwise} \\
                  &\mrm{where}~R'=\{t\in R~|~\EVALPred{\psi(\subst{t}{T})}\neq \TT\} 
     \end{cases}
     \end{array}$  
    \\\\[-1ex]
  \end{tabular}
}
\hrule
\end{figure}

\begin{exa}[Adding New Shoes]\label{ex:adding_shoes}
  By rule $\REDINS$, the insertion action of Example~\ref{ex:act_ins} can be executed locally at $\ttloc_1$
  as follows.
  \small$$
  \begin{array}{l}
    \vdash \ttloc_1::\INS{{\sf KLD}}{(``001", ``\mi{HB}", ``2015", ``white", ``37", 6, 0)}{\ttloc_1}.
    \NILP~\!||~\!\ttloc_1::(I_1,R_1) \\ 
    \quad \rightarrow~\ttloc_1::\NILP~\!||~\! \ttloc_1::(I_1,R'_1), 
  \end{array}$$
  \normalsize
  where $$R'_1=R_1\uplus \{(``001", ``\mi{HB}", ``2015", ``white", ``37", 6, 0)\}.$$ 
  This is because $(``001", ``\mi{HB}", ``2015", ``white", ``37", 6, 0)\sattsk I_1.\TBSK$ holds. \qed 
\end{exa}

Recall that the action $\DEL{\TBID}{T}{\psi}{\lloc}$ deletes 
from table $\TBID$ at $\lloc$ all rows properly binding the pattern $T$ such that 
the predicate $\psi$ is satisfied.
The rule $\REDDEL$ in Figure~\ref{fig:sem_ins_del} describes the execution of this action from a locality $l_1$ with 
$\lloc$ already replaced with some concrete locality $l_2$. 
It is checked in the premise that 
the specified table identifier $\TBID$ needs to agree with that of the table $(I,R)$ targeted. 
In addition, it is examined whether the template $T$ satisfies the schema $I.\TBSK$, 
and for each row $t$, whether the evaluation of the predicate $\psi$ is erroneous under the substitution 
produced from the match. 
If the satisfaction of $I.\TBSK$ fails, or the evaluation of $\psi$ is sometimes erroneous, then 
the net $\ErrNet$ is produced; 
otherwise the deletion operation is carried out normally, and
the rows that do not match the pattern $T$ or do not satisfy the condition $\psi$ 
will constitute the resulting data set of the target. 
Note that it is most natural for 
the ``where'' clause in the rule $\REDDEL$ to follow the resulting net 
$\LOCATED{\llocConst_1}{}{P} ~||~ \LOCATED{\llocConst_2}{}{(I,R')}$. 
However, we have placed it below ``otherwise'' for space reasons, and will follow the same style for 
some other semantic rules to come. 

\begin{exa}[Deleting Existing Shoes]
  By rule $\REDDEL$, the deletion action of Example~\ref{ex:act_del} 
  can be executed locally at $\ttloc_1$ as follows, 
  targeting the table resulting from the insertion action of Example~\ref{ex:adding_shoes}.
  \small
  $$
  \begin{array}{r@{~\!}l}
    \vdash \ttloc_1:: & {\sf delete}({\sf KLD}@\ttloc_1,(!\mi{id},!\mi{tp},!\mi{yr},!\mi{cr},!\mi{sz},!\mi{is},!\mi{ss}),\\
    &\qquad\qquad\quad~\! \mi{tp}=``\mi{HB}"\land\mi{cr}=``\mi{white}"\land\mi{sz}=``37").\NILP~\!||~\!
    \ttloc_1::(I_1,R'_1) \\
    &\hspace{-.6cm} \rightarrow ~ \ttloc_1::\NILP~\! ||~\! \ttloc_1::(I_1,R_1).
  \end{array}
  $$
  \normalsize
  This reflects that the original table is recovered after the deletion. \qed  
\end{exa}

\subsection*{Selection, Update and Aggregation} 

\begin{figure}[ht!]
  \centering
  \vspace*{-.8cm}\vbox{  \caption{The Semantics for Selection, Update and Aggregation}\label{fig:sem_sel_upd_aggr}
    \hspace{-.8cm}\vbox{\footnotesize
      \begin{tabular}{l}
        \hline\\
   $\begin{array}{rl}
      \REDSEL &
      \Inference{
                \forall \TBID, l:\TBID@l\in \{\LST{\CALTB}\} \Rightarrow
                (\exists j\in\{1,...,k\}: I_{j}.\TBID=\TBID \land \llocConst_j=\llocConst)
      }
      {
        \envnet \vdash 
          \LOCATED{\llocConst_0}{}{\SEL{\LST{\CALTB}}{T}{\psi}{t}{!\TBV}}.P~||~
          \LOCATED{\llocConst_{1}}{}{\!(I_{1},R_{1})}~||~...~||~\LOCATED{\llocConst_{k}}{}{\!(I_{k},R_{k})} 
          \rightarrow N'
      } \\\\[-1ex]
      & \mrm{where}~N'=
      \begin{cases}
      \ErrNet  
        & \mrm{if}~ T\nsattsk \prodSK(\LST{\CALTB},\LSTStruct{\llocConst_i\!::\!(I_i,R_i)}{i})~ \lor \\
        & \quad \exists t'\in \prodR(\LST{\CALTB},\LSTStruct{\llocConst_i\!::\!(I_i,R_i)}{i}): \\
        & \qquad \forall \sigma=\subst{t'}{T}:
          (\sigma=\Err\lor \EVALPred{\psi\sigma}=\Err \lor \EVALT{t\sigma}=\Err) \\\\[-2ex]
      \LOCATED{\llocConst_0}{}{P\sigma'}~\!||~\!N''
        &  \mrm{otherwise} \\
        
        & \mrm{where} \\
        & \quad I'= (\bot,\prodSK(\LST{\CALTB},\LSTStruct{\llocConst_i\!::\!(I_i,R_i)}{i})\!\downarrow^{T}_{t}) \land \\ 
        & \quad R'=\{\EVALT{t\sigma}~|~
                \exists t'\in \prodR(\LST{\CALTB},\LSTStruct{\llocConst_i\!::\!(I_i,R_i)}{i}): \\
        & \qquad\qquad\qquad\qquad\qquad\subst{t'}{T}=\sigma \land ~\! \EVALPred{\psi\sigma}=\TT \}~\!\land \\
        & \quad     \sigma'=[(I',R')/\TBV] \land \\
        & \quad N''=\LOCATED{\llocConst_{1}}{}{(I_{1},R_{1})}~||~...~||~\LOCATED{\llocConst_{k}}{}{(I_{k},R_{k})} 
      \end{cases}
    \end{array}
   $
 \\\\
  $
  \begin{array}{rl}
  \REDUPD & 
  \Inference{I.\TBID=\TBID}
   {\envnet \vdash \PARNET{\LOCATED{\llocConst_1}{}{\UPDATE{\TBID}{T}{\psi}{t}{\llocConst_2}.P}~}{~\LOCATED{\llocConst_2}{}{(I,R)}}\rightarrow N'} \\\\[-1ex]
  & \mrm{where}~N'=
  \begin{cases}
  \ErrNet & \mrm{if}~T\nsattsk I.\TBSK ~\lor \\
  & \quad \exists t'\in R: \\
  & \quad \forall \sigma=\subst{t'}{T}: (\sigma=\Err\lor \EVALPred{\psi\sigma}=\Err \lor \EVALT{t\sigma}=\Err \lor \\
  & \qquad\qquad\qquad\qquad\qquad\quad~~ \EVALPred{\psi\sigma}=\TT \land \EVALT{t\sigma}\nsattsk I.\TBSK) 
  \\\\[-2ex]
  \LOCATED{\llocConst_1}{}{P} ~\!||~\!\LOCATED{\llocConst_2}{}{(I,R'_1\MUNION R'_2)} 
  & \mrm{otherwise} \\
    & \mrm{where}~ 
        R'_1=\{t'\in R~|~\EVALPred{\psi(\subst{t'}{T})}\neq \TT\} \land \\
    & \qquad\quad\! R'_2=\{\EVALT{t\sigma}~|~\exists t'\!:t'\in R \land \subst{t'}{T}=\sigma \land
      \EVALPred{\psi\sigma}=\TT\}
  \end{cases}
  \end{array}
  $
   \\\\
  $
   \begin{array}{rl}
    \REDAGGR & 
    \Inference{I.\TBID=\TBID}
    {\envnet \vdash \LOCATED{\llocConst_1}{}{\AGGR{\TBID}{T}{\psi}{f}{T'}{\llocConst_2}.P}~||~\LOCATED{\llocConst_2}{}{(I,R)}\rightarrow N'} \\\\[-1ex]
    & \mrm{where}~ N'= 
    \begin{cases}
    \ErrNet & \mrm{if}~T \nsattsk I.\TBSK ~\lor \\
    & \quad \exists t\in R: (\subst{t}{T}=\Err\lor\EVALPred{\psi(\subst{t}{T})}=\Err \lor \\
    & \qquad\qquad~~~ 
      \exists \tau_1,\tau_2: (f:\msett{\sktp_1}\rightarrow\tau_2~\land~ (t\nsattsk \sktp_1 \lor T'\nsattsk\sktp_2))) \\\\[-2ex]
    \LOCATED{\llocConst_1}{}{P\sigma'}~\!||~\!\LOCATED{\llocConst_2}{}{(I,R)} 
      & \mrm{otherwise} \\
      & \mrm{where}~t'=f(\{t\in R~|~\EVALPred{\psi(\subst{t}{T})}=\TT\})~\! \land \\
      & \qquad\quad\!\! \sigma'=\subst{t'}{T'} 
    \end{cases}
   \end{array}
  $
   \\\\[-1.5ex]
\hline
  \end{tabular}
  \normalsize
}}
\end{figure}

Recall that the action $\SEL{\LST{\CALTB}}{T}{\psi}{t}{!\TBV}$ picks from the Cartesian product of the data sets 
of tables identified by the list elements of $\LST{\CALTB}$ all rows that properly bind the template $T$  
such that the predicate $\psi$ is satisfied, 
and produces a new table (bound to $\TBV$) that contains the instantiations of $t$ with these bindings. 
The rule $\REDSEL$ in Figure~\ref{fig:sem_sel_upd_aggr} describes the execution of this action at locality $\llocConst_0$.  
It is checked that for each table reference of the form $\TBID@l$ in $\LST{\CALTB}$
with some table identifier $\TBID$ and locality $l$, 
the referenced table is one of those in $\LSTStruct{\llocConst_i::(I_i,R_i)}{i}$. 
In addition, it is examined whether the template $T$ satisfies the product of all the schemas of 
the tables in $\LST{\CALTB}$, and
whether for each tuple $t'$ in the Cartesian product of the data sets of tables in $\LST{\CALTB}$, 
the evaluation of both the predicate $\psi$ and the tuple $t$ under the substitution $t'/T$ can be performed 
successfully. 
In the case of schema mismatches or evaluation errors, the net $\ErrNet$ is produced, signaling the erroneous situation; 
otherwise the selection is carried out properly, with the resulting table $(I',R')$ 
substituted into each occurrence of the table variable $\TBV$ used in the continuation $P$. 
Here we have $\bot$ as the value for $I'.\TBID$, representing that a temporary table is generated, and
the projection of the product of the schemas of the source tables according to $T$ and $t$ as the value for $I'.\TBSK$.
On the other hand, $R'$ consists of all the instantiations of $t$ with the bindings produced by matching the tuples in the 
product data set against $T$, and satisfying the predicate $\psi$. 

\begin{exa}[Selection of Shoes in a Certain Color]\label{ex:select}
  According to the rule $\REDSEL$, the selection action of Example~\ref{ex:act_sel} can be 
  executed from the head office (at $\ttloc_0$) as follows:
  \small
  $$
  \begin{array}{l}
    \vdash \ttloc_0\!::{\sf select}({\sf KLD}@~\!\ttloc_1,(!\mi{id},!\mi{tp},!\mi{yr},!\mi{cr},!\mi{sz},!\mi{is},!\mi{ss}), \\
    ~~~\qquad\qquad\qquad\qquad~
    \mi{id}=``001"\land\mi{tp}=``HB"\land\mi{cr}\neq ``red", (\mi{cr},\mi{sz},\mi{ss}), !\TBV).\NILP~||~ 
   \ttloc_1::(I_1,R_1) \\
    \quad \rightarrow ~\ttloc_0::\NILP ~\!||~\!\ttloc_1::(I_1,R_1).
  \end{array}
  $$
  \normalsize
  The condition $I_1.\TBID={\sf KLD} \land \ttloc_1=\ttloc_1$ 
  as required by the premise of the rule is satisfied. 
  The $I'$ for the resulting table is such that $I'.\TBID=\bot$
  and $I'.\TBSK=\strtp\times \inttp\times \inttp$.
  The table variable $\TBV$ is replaced by $(I',R')$, 
  for some $R'=\{(``black",$ $``38", 2), (``black", ``37", 2) \}$. 
  Note that the selection is completed in one \emph{atomic} transition step, which implies, e.g., 
  that no concurrent update to the table ${\sf KLD}$ at $\ttloc_1$ can be performed in the process.
  \qed
\end{exa}

In the example above, the substitution of the result table $(I',R')$ into the table variable $\TBV$
does not affect the trivial continuation $\NILP$, in which $\TBV$ is not used.
The use of result tables for selection actions will be demonstrated in our case study to be presented in
Section~\ref{sec:case_study}. 
The following example illustrates selection from the join of multiple tables. 

\begin{exa}[Comparing Inventories]\label{ex:compare_ivt}
  The query below generates all pairs of shop names such that
  in each pair, both shops are in the same city and
  the brands sold in the first constitute a proper subset of the brands sold in the second. 
  \small
  $$
  \begin{array}{l}
    {\sf select}(\LSTParenL{\sf Stores}@\ttloc_0,{\sf Stores}@\ttloc_0\LSTParenR, 
    (!x,!y,!z,!w,!p,!x',!y',!z',!w',!p'),
    x=x'\land w\subset w',(z,z'),!\TBV) 
  \end{array}
  $$
  \normalsize
  This action needs to be executed with the parallel component $\ttloc_0::(I_0,R_0)$.
  The result of 
  $\prodSK(\LSTParenL {\sf Stores}@\ttloc_0,{\sf Stores}@\ttloc_0\LSTParenR,
  \LSTParenL \llocConst_0\!::\!(I_0,R_0) \LSTParenR)$ is $\flattensk(I_0.\TBSK\times I_0.\TBSK)$
  and the result of
  $\prodR(\LSTParenL {\sf Stores}@\ttloc_0,{\sf Stores}@\ttloc_0\LSTParenR,
  \LSTParenL \llocConst_0\!::\!(I_0,R_0) \LSTParenR)$ is
  $\flattendt(R_0\times R_0)$. 
  \qed
\end{exa}

The task accomplished in Example~\ref{ex:compare_ivt} can be achieved in SQL using selection 
with an \emph{inner join} operation. 
Although this example only illustrates joining tables at the same locality, 
in Klaim-DB, the join of multiple tables at different localities is generally allowed. 

Recall that the action $\UPDATE{\TBID}{T}{\PRED}{t}{\lloc}$ replaces each row matching $T$ and satisfying $\psi$ 
in table $\TBID$ at $\lloc$ with a new row $t$, while leaving the rest of the rows unchanged. 
The rule $\REDUPD$ in Figure~\ref{fig:sem_sel_upd_aggr} describes the execution of this action at $l_1$ with $\lloc$ already replaced 
with some constant locality $l_2$. 
The premise checks that the specified table identifier matches the actual one. 
In addition, it is examined 
if the template $T$ matches the schema $I.\TBSK$, and
for all tuples $t'$ in $R$, producing substitution $\sigma$ when matched against $T$, whether 
$\psi\sigma$ and $t\sigma$ evaluates properly to values, and whether 
the evaluation result of $t\sigma$ satisfies the schema $I.\TBSK$, whenever $\psi\sigma$ evaluates to $\TT$.
In case of any evaluation error or schema mismatch, 
the net $\ErrNet$ is produced, signaling the abnormal situation; 
otherwise the update is carried out properly.
In more detail, a row $t'$ of $R$ is updated only if it matches the template $T$, resulting in the substitution $\sigma$ that 
makes the predicate $\psi$ satisfied.
The row $t'$ is then updated by applying the substitution $\sigma$. 

\begin{exa}[Update of Shoes Information]
According to the rule $\REDUPD$, 
the update action of Example~\ref{ex:act_upd} can be executed from $\ttloc_1$ as follows: 
\small
$$
\begin{array}{r@{~\!}l}
  \vdash & \ttloc_1\!::\! {\sf update}({\sf KLD}@{\ttloc_1},
  (!\mi{id},!\mi{tp},!\mi{yr},!\mi{cr},!\mi{sz},!\mi{is},!\mi{ss}), \\
  & \qquad\quad\!\! \mi{tp}=``HB"\!\land\!\mi{cr}=``red"\!\land\!\mi{sz}=``37", 
    (\mi{id},\mi{tp},\mi{yr},\mi{cr},\mi{sz}, \mi{is} - 2, \mi{ss} + 2)).\NILP~||~
  \ttloc_1\!::\!(I,R)  \\
  & \rightarrow \ttloc_1::\NILP~\! ||~\!\ttloc_1::(I,R'_{1}\uplus R'_{2}).
\end{array}
$$
\normalsize
The multiset $R'_{1}$ consists of all the entries that are intact --- shoes that are not red high boots sized $37$, 
  while $R'_{2}$ contains all the updated items. \qed
\end{exa}
% set-notation comes close to the tuple relational calculus (possible reference), 
% atomic batch operations are facilitated, sacrificing flexibility in the manipulation of individual rows of data

Recall that the action $\AGGR{\TBID}{T}{\psi}{f}{T'}{\lloc}$ applies the aggregator function $f$ 
on the multiset of all rows matching $T$ and satisfying $\psi$ in table $\TBID$ (at $\lloc$) and 
binds the aggregation result to the pattern $T'$.
The rule $\REDAGGR$ in Figure~\ref{fig:sem_sel_upd_aggr} describes the performance of this action 
from the locality $l_1$, 
with $\lloc$ replaced with some constant locality $l_2$. 
The aggregation is over the table $(I,R)$. 
The matching of localities and table identifiers is still required. 
In addition, it is checked whether the template $T$ matches the schema $I.\TBSK$, and 
for all rows $t$ in $R$, 
whether $t$ matches $T$, 
whether the evaluation of $\psi$ can be performed properly under the substitution produced by $t/T$, and 
whether $t$ and the template $T'$ are respectively well-sorted under 
the domain and range types of the aggregator function $f$. 
If any of these additional checks fails, then the net $\ErrNet$ is produced, 
signaling the error that occurred;  
otherwise the aggregation is carried out properly by
applying $f$ to the multiset of all tuples matching $T$, 
resulting in a binding that satisfies the predicate $\psi$. 
The result $t'$ is bound to the specified template $T'$, 
producing a substitution that is further applied to the continuation $P$. 

\begin{exa}[Aggregation of Sales Figures]
According to the rule $\REDAGGR$, 
the aggregation action of Example~\ref{ex:act_aggr} can be executed 
from the head office as follows: 
\small
$$
\begin{array}{l}
  \vdash \ttloc_1::{\sf aggr}({\sf KLD}@{\ttloc_1},
    (!\mi{id},!\mi{tp},!\mi{yr},!\mi{cr},!\mi{sz},!\mi{is},!\mi{ss}), \mi{id}=``001", \SUM_7, !\mi{res}).\NILP~||~ 
    \ttloc_1::(I_1,R_1) \\
  \quad \rightarrow ~\ttloc_1::\NILP~\! ||~\! \ttloc_1::(I_1,R_1).
\end{array}
$$
\normalsize
The variable $!\mi{res}$ is then bound to the integer value $12$ ($2+5+1+2+2$). \qed
\end{exa}

\begin{exa}[Selection using Aggregation Results]
  Consider the query from $\ttloc_0$ that selects the colors, sizes and sales of all types of high boots
  whose sales figures are above average.
  This query can be modeled as a sequence of actions at $\ttloc_0$, as follows.
\small
$$
\begin{array}{rl}
  &\!\!\!\!\!\! \ttloc_0 :: \AGGR{{\sf KLD}}{T_0}{\mi{tp}=``HB"}{\mi{avg}_7}{!\mi{res}}{\ttloc_1}.
    \SEL{{\sf KLD}@\ttloc_1}{T'_0}{\mi{ss}'\ge\mi{res}}{(\mi{cr}',\mi{sz}',\mi{ss}')}{!\TBV}.\NILP
\end{array}
$$
\normalsize
where \small$T_0=(!\mi{id},!\mi{tp},!\mi{yr},!\mi{cr},!\mi{sz},!\mi{is},!\mi{ss})$, \normalsize
      \small$T'_0=(!\mi{id}',!\mi{tp}',!\mi{yr}',!\mi{cr}',!\mi{sz}',!\mi{is}',!\mi{ss}')$, \normalsize and
\small
$$
\AVG_7=\lambda R.\frac{\SUM_7(R)}{|R|}=\lambda R.(\frac{\SUM(\{v_7|(v_1,...,v_7)\in R\})}{|R|}).
$$
\normalsize
In other words, $\AVG_7$ is a function from multisets $R$ to unary tuples containing the average value of the 7-th components of the tuples in $R$. \qed
\end{exa}

\subsection*{Creating and Dropping Tables}

Recall that the action $\CREATENEW{\TBID@\lloc}{\TBSK}$ creates a table with interface $(\TBID,\TBSK)$ 
at locality $\lloc$. 
The rule $\REDCREATE$ in Figure~\ref{fig:sem_create_drop} describes the execution of this action from the locality $l_1$, 
with $\lloc$ already instantiated with some locality constant $\llocConst_2$. 
It is ascertained that no table having the same identifier $\TBID$ exists in the environment at the target locality. 
If this is the case, then a table with the interface $I=(\TBID,\TBSK)$ and an empty data set is created
at the specified locality; otherwise the creation is skipped. 
The aforementioned check for clashes of table identifiers at the same localities 
is realized with the help of the function $\LIDN{...}$ that 
gives the multiset of pairs of localities and table identifiers in nets and components. 
This function is overloaded on nets and components, and is defined inductively as follows.
\small
$$
\begin{array}{ll}
\LIDN{\NILN}=\emptyset ~& \LIDN{\LOCATED{\llocConst}{}{C}}=\LIDC{\llocConst}{C} \\
\LIDN{\ErrNet}=\emptyset ~& \LIDC{\llocConst}{P}=\emptyset \\
\LIDN{\PARNET{N_1}{N_2}}=\LIDN{N_1}\MUNION\LIDN{N_2} ~& \LIDC{\llocConst}{(I,R)}=\{(\llocConst,I.id)\} \\
\LIDN{\RESNET{\llocConst}{N}}=\LIDN{N} ~& \LIDC{\llocConst}{\PAR{C_1}{C_2}}=\LIDC{\llocConst}{C_1}\MUNION \LIDC{\llocConst}{C_2} \\
\end{array}
$$
\normalsize

Recall that the action $\DROP{\TBID}{\lloc}$ drops the table with identifier $\TBID$ at locality $\lloc$.
The rule $\REDDROP$ in Figure~\ref{fig:sem_create_drop} describes the performance of this action from the locality $l_1$, 
with $\lloc$ already replaced with some constant locality $l_2$. 
It checks that a table with the specified identifier $\TBID$ 
does exist at the specified locality $\llocConst_2$. 
Then the table is dropped by replacing it with $\NILP$.

\begin{figure}[ht!]
  \caption{The Semantics for Creating and Dropping Tables}\label{fig:sem_create_drop}
  \centering
  \hrule
  \hspace{-.0cm}\vbox{\small
    \begin{tabular}{l}{~}\\[-1ex]
      $
      \begin{array}{rl}
        \REDCREATE & 
                     \envnet \vdash \LOCATED{\llocConst_1}{}{\CREATENEW{\TBID@\llocConst_2}{\TBSK}.P}~||~
                     \LOCATED{\llocConst_2}{}{\NILP}\rightarrow N' \\\\[-2ex]
                   & \mrm{where}~
                     N'=
                     \begin{cases}
                       \LOCATED{\llocConst_1}{}{P}~||~\LOCATED{\llocConst_2}{}{((\TBID,\TBSK),\emptyset)} 
                       & \mrm{if}~(\llocConst_2,\TBID)\not\in\LIDN{\envnet} \\
                       \LOCATED{\llocConst_1}{}{P}~||~\LOCATED{\llocConst_2}{}{\NILP} & \mrm{otherwise}
                     \end{cases}
      \end{array}
  $
   \\\\[-1ex]
   $\REDDROP$ 
   $\Inference{I.\TBID=\TBID}
   {\envnet\vdash
    \LOCATED{\llocConst_1}{}{\DROP{\TBID}{\llocConst_2}.P}~||~\LOCATED{\llocConst_2}{}{(I,R)}\rightarrow 
    \LOCATED{\llocConst_1}{}{P}~||~\LOCATED{\llocConst_2}{}{\NILP}}$ \\\\[-1ex]
    \end{tabular}
  }
  \hrule
\end{figure}
\enlargethispage{\baselineskip}

\begin{exa}[Creating Table for Turnovers]\label{ex:create}
Consider the net $N_{\mrm{r}}$, which is 
\small
$$\ttloc_0::((I_0,R_0)|C''_0) || \ttloc_1::((I_1,R_1)|C'_1) || ... || \ttloc_n::((I_n,R_n)|C'_n),$$
\normalsize
where $C''_0$ is the residual component at $\ttloc_0$ --- the locality of the head office. 
Suppose $(\ttloc_0,{\sf Turnover})\not\in \LIDN{N_\mrm{r}}$, 
which indicates that there is no table with identifier {\sf Turnover} currently existing at $\ttloc_0$. 
According to the rule $\REDCREATE$, 
the table creation action of Example~\ref{ex:act_crt} can be executed 
from the head office, with $N_{\mrm{r}}$ as the environment, as follows: 
\small
$$
  \begin{array}{r@{~\!}l}
  N_{\mrm{r}}\vdash & \ttloc_0::\CREATENEW{{\sf Turnover}@\ttloc_0}{\strtp\times \inttp}.\NILP~\! ||~\! 
  \ttloc_0::\NILP\\
  & \rightarrow \ttloc_0::\NILP ~\!||~\! \ttloc_0::(({\sf Turnover},\strtp\times\inttp),\emptyset).
  \end{array}\vspace{-16 pt}
$$
\normalsize
\qed
\end{exa}

\subsection*{Code Mobility}
Recall that the action $\EVAL{P}{\lloc}$ spawns the process $P$ at the locality $\lloc$.
The rule $\REDEVALP$ describes the execution of this action 
at the locality $l_1$ with $\lloc$ already replaced by the concrete locality constant $\llocConst_2$. 
It is ensured that $P$ is a closed process that does not contain \emph{free variables}
before the action can be performed.  
\small
$$
   \REDEVALP
   \quad
   \envnet\vdash
    \LOCATED{\llocConst_1}{}{\EVALP{P}{\llocConst_2}.P'}~||~\LOCATED{\llocConst_2}{}{\NILP}\rightarrow 
    \LOCATED{\llocConst_1}{}{P'}~||~\LOCATED{\llocConst_2}{}{P} 
   \quad \mrm{if}~\FV{P}=\emptyset
$$   
\normalsize

In Klaim-DB, it is possible to spawn multiple processes that migrate among different sets of localities
to accomplish diverse database query and management tasks.
In this way, the parallelization and distribution of operations on databases is facilitated. 

\subsection*{Semantics for Processes and Nets}

Directing our attention to Figure~\ref{fig:semantics_cont}, the rules $\REDFORTT$ and $\REDFORFF$ specify 
  the continuation and completion of {\sf foreach} loops, respectively. 
The rule $\REDFORTT$ says that if there are still more tuples in the multiset $R$ 
  matching the pattern specified by the template $T$, 
  such that $\psi$ holds under the resulting substitution, 
  then we first execute one round of the loop, 
  instantiating variables in $T$ with a (non-deterministically chosen) minimal matching tuple $t_0$, 
  and then continue with the remaining rounds of the loop by removing $t_0$ from $R$. 
In the premise, $\minimal(R,\ordr)$ gives the set of all tuples that are minimal in $R$ 
  with respect to the partial order $\ordr$:
\small
$$
\minimal(R,\ordr) = \{t\in R\mid \forall t'\in R: t'\ordr t \Rightarrow t'=t\}.
$$
\normalsize
The rule $\REDFORFF$ says that if there are no more tuples in $R$ matching $T$, then the loop is completed. 

\begin{figure}[ht!]
  \centering
  \caption{The Semantics for Processes and Nets}\label{fig:semantics_cont}
 \vbox{\small
  \begin{tabular}{c}
\hline \\[-1ex]
$\REDFORTT$ $\Inference{
  t_0\in\minimal(R,\ordr)\quad \subst{t_0}{T}\neq\Err \quad \EVALPred{\psi(\subst{t_0}{T})}=\TT
}
{
  \envnet\vdash
  \LOCATED{\llocConst}{}{\FOREACH{(I,R)\!}{T}{\psi}{\ordr}{\!P}}\rightarrow \LOCATED{\llocConst}{}{P(t_0/T)}; 
  (\FOREACH{(I,R\setminus \{t_0\})}{T}{\psi}{\ordr}{P})
}$ 
\\ \\
$
\begin{array}{rl}
 \REDFORFF &
 \Inference{\lnot(\exists t_0\in R: \EVALPred{\psi(\subst{t_0}{T})}=\TT)}
           {\envnet\vdash\LOCATED{\llocConst}{}{\FOREACH{(I,R)\!}{T}{\psi}{\ordr}{\!P}\rightarrow N'}} \\
 & \mrm{where}~N'= 
  \begin{cases}
    \ErrNet & \mrm{if}~\exists t_0\in R: 
            (\subst{t_0}{T}=\Err \lor \EVALPred{\psi(\subst{t_0}{T})}=\Err) \\
    l::\NILP & \mrm{otherwise}
   \end{cases}
\end{array}
$ 
\\ \\
$\REDSEQTT$ 
$\Inference{
  \envnet \vdash \LOCATED{\llocConst}{}{P_{1}}~||~N \rightarrow\LOCATED{\llocConst}{}{P'_{1}}~||~N'}
{\envnet\vdash\LOCATED{\llocConst}{}{\SEQ{P_{1}}{P_{2}}}~||~N \rightarrow \LOCATED{\llocConst}{}{\SEQ{P'_{1}}{P_{2}}}~||~N'}$
\quad if~$P'_1\neq\NILP$
\\ \\ 
$\REDSEQFF$ $\Inference{
  \envnet \vdash \LOCATED{\llocConst}{}{P_{1}}~||~N \rightarrow \LOCATED{\llocConst}{}{\NILP}~||~N'}
  {\envnet\vdash \LOCATED{\llocConst}{}{\SEQ{P_{1}}{P_{2}}}~||~N \rightarrow \LOCATED{\llocConst}{}{P_{2}}~||~N'}$
\\ \\

$\begin{array}{rl}
  (\mrm{CALL}) & \envnet\vdash \llocConst:: A(\tilde{e})\rightarrow \llocConst:: P[v_1/\avar_1]...[v_n/\avar_n] \\
   & \quad \mrm{if}~A(\LSTStruct{\avar_i:\tau_i}{i\le n})\triangleq P\land \EVALT{\tilde{e}}=\tilde{v}
\end{array}
$
\\ \\
$\REDPAR$ 
  $\Inference{N_2\vdash N_1 \rightarrow N'_1}{\vdash N_1||N_2 \rightarrow N'_1||N_2}$ \\ \\
$\REDRES$ $\Inference{\envnet\vdash N\rightarrow N'}{\envnet\vdash \RESNET{\llocConst}{N}\rightarrow \RESNET{\llocConst}{N'}}$ \qquad
$\REDEQUIV$ 
  $\Inference{N_1\equiv N_2\quad \envnet\vdash N_2\rightarrow N_3\quad N_3\equiv N_4}
    {\envnet\vdash N_1\rightarrow N_4}$ \\ \\[-1ex]
\hline\normalsize
  \end{tabular}
}
\end{figure}

The rules $\REDSEQTT$ and $\REDSEQFF$ describe the transitions that can be performed by 
sequential compositions $\SEQ{P_1}{P_2}$. 
In particular, the rule $\REDSEQTT$ accounts for the case where $P_1$ cannot finish after one more step; 
whereas the rule $\REDSEQFF$ accounts for the opposite case. 
Note that by our language design, the variables declared in $P_1$ have local scopes; 
hence there is no need to apply any substitution on $P_2$ when the execution embarks on it. 

The rule $\REDPAR$ says that if a net $N_1$ can make a transition assuming a net $N_2$ as the environment, 
then the parallel composition of $N_1$ with $N_2$ can also make a transition, 
with an empty environment ($\NILN$). 
Having an empty environment in the conclusion indicates that the rule cannot be used 
multiple times with non-trivial effects. 
The idea is: 
to build a transition from a net with multiple parallel components, 
we use the structural congruence to extrude all the $\RESNET{l}{(\cdot)}$ to the outer level, 
and collect all the unaffected parallel nets into $N_2$. 
We can then build the transition using $\REDPAR$ only once, 
combined with further uses of $\REDRES$, and of $\REDEQUIV$. 

The rules $(\mrm{CALL})$, $\REDRES$ and $\REDEQUIV$ are self-explanatory.

It is a property of our semantics that no local repetition of table identifiers can be caused by a transition. 
Define $\NOREP{A}=(A=\mi{set}(A))$, which expresses that there is no repetition in the multiset $A$, 
thus $A$ coincides with its underlying set. 
This property is formalized in Lemma~\ref{lem:no_rep} whose proof can be found in Appendix~\ref{app:proofs}.
\begin{lem}\label{lem:no_rep} 
If $\NOREP{\LIDN{N}\uplus \LIDN{\envnet}}$ and $\envnet\vdash\TRANS{N}{N'}$, 
then it holds that $\NOREP{\LIDN{N'}\uplus \LIDN{\envnet}}$.
\end{lem}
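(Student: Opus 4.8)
The plan is to argue by induction on the derivation of $\envnet\vdash\TRANS{N}{N'}$, with a case analysis on the last rule applied. Two auxiliary facts will be used throughout. First, structural congruence preserves $\mi{Lid}$: inspecting the rules of Figure~\ref{fig:congruence}, the rules for $||$ and $\NILN$ use commutativity, associativity and neutrality of $\MUNION$; the rules for restriction use $\LIDN{\RESNET{\llocConst}{N}}=\LIDN{N}$; node splitting uses $\LIDC{\llocConst}{\PAR{C_1}{C_2}}=\LIDC{\llocConst}{C_1}\MUNION\LIDC{\llocConst}{C_2}$; and $\llocConst::(P|\NILP)\equiv\llocConst::P$ uses in addition that $\LIDC{\llocConst}{P}=\emptyset$ for processes. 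The only rule deserving a word is $\alpha$-renaming of bound localities: it induces a bijective relabelling of the pairs collected by $\mi{Lid}$, hence preserves repetition-freeness, and by the convention that bound localities are fresh it does not touch $\LIDN{\envnet}$. Second, any submultiset of a repetition-free multiset is repetition-free; in particular $\NOREP{\LIDN{N}\MUNION\LIDN{\envnet}}$ entails $\NOREP{\LIDN{\envnet}}$, and $\LIDN{\ErrNet}=\emptyset$, so whenever $N'=\ErrNet$ the conclusion is immediate.

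For the structural rules, $\REDRES$ is immediate because $\mi{Lid}$ discards the $\RESNET{\llocConst}{\cdot}$ binder; $\REDEQUIV$ follows by rewriting $\mi{Lid}$ along the two congruences (first fact) and applying the induction hypothesis to the middle premise; and $\REDPAR$ is the place where the ``rest of the net'' is absorbed into the environment --- since the conclusion has $\envnet=\NILN$ and the premise has environment $N_2$, one has $\LIDN{N_1||N_2}\MUNION\LIDN{\NILN}=\LIDN{N_1}\MUNION\LIDN{N_2}$ and likewise for $N_1'$, so the hypothesis is exactly what the induction hypothesis for the premise consumes and its conclusion is exactly what is wanted. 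The rules $\REDSEQTT$ and $\REDSEQFF$ are analogous: passing from $P_1;P_2$ to $P_1$ (resp. from $P_1';P_2$ to $P_1'$, $P_2$) at a node leaves $\mi{Lid}$ unchanged, since a node carrying a process contributes $\emptyset$, so the hypothesis and conclusion transfer verbatim across the transition premise.

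For the remaining axioms the governing observation is that $\mi{Lid}$ only counts components of the form $(I,R)$ sitting at a node; processes, and thus tables substituted into process continuations (as in $\REDSEL$ and $\REDAGGR$) or occurring inside {\sf foreach} bodies, contribute nothing. Hence for $\REDINS$, $\REDDEL$, $\REDUPD$, $\REDAGGR$, $\REDSEL$, $\REDFORTT$, $\REDFORFF$, $(\mrm{CALL})$ and $\REDEVALP$ one unfolds the definition of $\mi{Lid}$ on the source and target nets written in the rule and reads off $\LIDN{N'}=\LIDN{N}$ --- the multiset of located identifiers is literally the same --- so the conclusion is the hypothesis; and for $\REDDROP$, where $N'=\LOCATED{\llocConst_1}{}{P}||\LOCATED{\llocConst_2}{}{\NILP}$, one has $\LIDN{N'}\MUNION\LIDN{\envnet}$ a submultiset of $\LIDN{N}\MUNION\LIDN{\envnet}$ and invokes the second fact.

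The one genuinely substantive case, in which $\mi{Lid}$ actually grows, is $\REDCREATE$ in the branch that installs a new table. There $N=\LOCATED{\llocConst_1}{}{\CREATENEW{\TBID@\llocConst_2}{\TBSK}.P}||\LOCATED{\llocConst_2}{}{\NILP}$, so $\LIDN{N}=\emptyset$ and the hypothesis amounts to $\NOREP{\LIDN{\envnet}}$; that branch is taken only under the side condition $(\llocConst_2,\TBID)\notin\LIDN{\envnet}$, and then $\LIDN{N'}=\{(\llocConst_2,\TBID)\}$. Since $(\llocConst_2,\TBID)$ has multiplicity zero in the repetition-free multiset $\LIDN{\envnet}$, adjoining one copy of it keeps the union repetition-free, which is precisely $\NOREP{\LIDN{N'}\MUNION\LIDN{\envnet}}$; in the other branch $\LIDN{N'}=\emptyset$ and there is nothing to prove. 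I expect this $\REDCREATE$ case to be the only one requiring genuine care --- one should be explicit that the rule's side condition on $\envnet$, combined with $\LIDN{N}=\emptyset$ for the rule's source net and with the fact that $\REDPAR$, $\REDRES$ and $\REDEQUIV$ transmit the ``rest of the net'' into $\envnet$, is exactly what blocks a clash with an already-existing table of the same identifier at the same locality; everything else is mechanical unfolding of the definition of $\mi{Lid}$.
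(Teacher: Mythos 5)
Your proposal is correct and follows essentially the same route as the paper's proof: an induction on the derivation of $\envnet\vdash\TRANS{N}{N'}$ in which most cases are discharged by observing that $\LIDN{N'}$ equals $\LIDN{N}$ or is empty (with $\ErrNet$ contributing nothing), $\REDCREATE$ uses its side condition $(\llocConst_2,\TBID)\notin\LIDN{\envnet}$, $\REDPAR$ absorbs the parallel context into the environment before applying the induction hypothesis, and $\REDEQUIV$ appeals to the distinct-naming convention for bound localities to handle $\alpha$-renaming. Your up-front factoring of the congruence-preservation and submultiset observations is a presentational difference only, not a different argument.
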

Thus, imposing ``non-existence of local repetition of table identifiers'' as an integrity condition 
for the initial net will guarantee the satisfaction of this condition in all possible derivatives of the net.

The following example illustrates a transition of a net obtained from one of an action. 
\begin{exa}[Transition of Nets]\label{ex:net_trans}
Continuing with Example~\ref{ex:create}, with the help of the rules $\REDPAR$, and $\REDEQUIV$, 
we can derive 
\small
$$
\begin{array}{l}
  \vdash 
  \ttloc_0\!::\! (I_0,R_0)|\CREATENEW{{\sf Turnover}@\ttloc_0}{\strtp\times\inttp}.\NILP|C''_0~ ||~ 
  \ttloc_1\!::\! (I_1,R_1)|C'_1 ~||~ ... ~||~\ttloc_n\!::\! (I_n,R_n)|C'_n \\
  \quad \rightarrow \LOCATED{\ttloc_0}{}{(I_0,R_0)|(({\sf Turnover},\strtp\times\inttp),\emptyset)|C''_0} ~||~ 
  \LOCATED{\ttloc_1}{}{(I_1,R_1)|C'_1}~||~...~||~\LOCATED{\ttloc_n}{}{(I_n,R_n)|C'_n}\mathstrut. 
\end{array}\vspace{-16 pt}
$$
\normalsize
\qed
\end{exa}

%%% Local Variables:
%%% mode: latex
%%% TeX-master: "paper"
%%% End:

%!TEX root = ./paper.tex

\section{Type System}\label{sec:typing}

It is desirable to eliminate certain inherent discrepancies in a piece of specification
before ever executing it. 
In this section we develop a type system 
that scans specifications for discrepancies that lead to the run-time errors
described by the semantics in Section~\ref{sec:semantics}. 
We now detail the types of such errors that we alluded to in the introduction:
\begin{itemize}
\item inconsistency between the format of a template $T$ and a schema $\TBSK$,
  such as when $T$ is used in a selection from a table having the schema $\TBSK$,
  $T$ has three components while $\TBSK$ involves only two basic datatypes;
\item inconsistency between the format of a tuple $t$ and a schema $\TBSK$,
  such as when $t$ is to be inserted into a table having the schema $\TBSK$,
  the second component of $t$ is an integer, but is supposed to be a string according to $\TBSK$; 
\item mismatch of the signature of an aggregator function against
  the format of the data source; 
\item ill-formed expressions and predicates,
  such as a multiplication of two strings, or a predicate that cannot be evaluated to a boolean value; 
\item use of unbound (undefined) variables.  
\end{itemize}

A typing environment $\Gamma$ is a finite sequence of bindings $\avar:\tau$ of variables to types. 
For the variables in the domain of $\Gamma$ (written $\dom{\Gamma}$), 
$\Gamma(x)$ is the data type bound to $x$, 
$\Gamma(\TBV)$ is the schema bound to $\TBV$, and
$\Gamma(u)$ is $\loctp$ --- the only type that can be bound to locality variables $u$. 
We assume that at most one binding can be present for each variable in $\Gamma$, 
which creates no essential restriction since bound variables can be renamed apart. 
We also write $\emptyTPEnv$ for the empty typing environment, 
$(\ext{\Gamma}{\avar:\tau})$ for the \emph{extension} of $\Gamma$ with the binding $\avar:\tau$, and
similarly $\ext{\Gamma_1}{\Gamma_2}$ for the \emph{extension} of $\Gamma_1$ with (all bindings of) $\Gamma_2$. 

We assume that only one schema can be used for each table identifier at all localities, all the time. 
This assumption may cause slight peculiarities concerning the use of table identifiers, 
but avoids the potential explosion caused by over-approximating the sets of localities evaluated to 
by locality variables. 
Correspondingly, we introduce a partial function $\itfc$ from table identifiers to 
their corresponding schemas. 

The typing judgments for expressions, predicates and tuples are as follows. 
\small
$$
\begin{array}{r@{~~~\!}l}
\Gamma &\vdash e \smalltriangleright \type \\[1ex]
\Gamma &\vdash \psi \smalltriangleright \booltp \\[1ex]
\Gamma &\vdash t \smalltriangleright \tau 
\end{array}
$$
\normalsize
The environment $\Gamma$ provides the types for the constituent variables of these syntactical entities, 
and the types $\type$ derived from them are placed after the separator $\smalltriangleright$. 
A well-typed predicate always has the type $\booltp$. 

The judgment for templates is:
\small
$$
\type  \vdash T \smalltriangleright \Gamma 
$$
\normalsize
Due to the involvement of variable binding,
a typing environment $\Gamma$ is derived instead of a type.
In more detail, $\type$ is a product type and $\Gamma$ records the association of
each variable in $T$ to the corresponding component type in $\type$.
When this judgment is used, $\type$ is often the schema of a table or the join of some tables,
and deriving $\Gamma$ from $\type$ captures that the bound variables of $T$
obtain their types from the actual types of the fields of the table.

The judgment for tables is: 
\small
$$
\Gamma,\itfc \vdash \CALTB \smalltriangleright \tau 
$$
\normalsize
Both the variable environment $\Gamma$ and the partial function $\itfc$ are needed --- 
depending on the particular form of $\CALTB$, 
its schema $\tau$ is derived from either $\Gamma$ or $\itfc$.
A frequent usage pattern is that of deriving from a table its schema $\type$,
and then using the derived $\type$ to type a template $T$ used in an operation to match against the rows of the table. 

The judgment for actions is:
\small
$$
\Gamma,\itfc \vdash a \smalltriangleright \Gamma'
$$
\normalsize
Due to the potential involvement of tables in actions, the partial function $\itfc$ is again needed. 
Actions such as aggregation can create bindings to be used in the continuation. 
Correspondingly, the derived environment $\Gamma$ associates the variables in the domain of such bindings
with their types. 

Finally, the judgments for processes, components and nets are listed below. 
\small
%$$
%\begin{array}{r@{~~~\!}l}
%\Gamma,\itfc &\vdash P  \\[1ex]
%\Gamma,\itfc &\vdash C \\[1ex]
%\Gamma,\itfc &\vdash N 
%\end{array} 
%$$
$$
\Gamma,\itfc\vdash P \qquad 
\Gamma,\itfc\vdash C \qquad
\Gamma,\itfc\vdash N 
$$
\normalsize
Similar to the previous cases,
$\Gamma$ and $\itfc$ are needed for obtaining the types of variables
and tables referenced with identifiers, respectively.
These judgments only assert that a process, component, or net is well-typed,
without deriving a type. 

\subsection{Typing Rules}

The typing rules for expressions, predicates, tuples, templates and tables are shown in Figure~\ref{fig:tp_ett}. 
The typing rules for actions are then given in Figure~\ref{fig:tp_act}.
Finally, the typing rules for processes, components, and nets are displayed in Figure~\ref{fig:tp_pcn}. 

\begin{figure}[ht!]
\caption{Typing Rules for Expressions, Predicates, Tuples, Templates and Tables}\label{fig:tp_ett}
\hrule
\centering
\small
\begin{tabular}{c}
{~}\\[-1ex]\hspace{-12cm}\framebox[1.2\width]{Expressions}
\\\\[-2ex]
$\jdgtt{\Gamma}{x}{\tau}\quad\mrm{if}~\Gamma(x)=\tau\land \tau\neq\loctp$\qquad 
$\jdgtt{\Gamma}{u}{\loctp}\quad\mrm{if}~\Gamma(u)=\loctp$
\\\\[-1ex]
$\jdgtt{\Gamma}{num}{\inttp}$\qquad
$\jdgtt{\Gamma}{str}{\strtp}$\qquad
$\jdgtt{\Gamma}{\TBID}{\idtp}$\qquad 
%$\jdgtt{\Gamma}{s}{\stp}$\qquad
$\jdgtt{\Gamma}{l}{\loctp}$
\\\\[-1ex]
$
\Inference{\jdgtE{\Gamma}{e_1}{\strtp}\quad\jdgtE{\Gamma}{e_2}{\strtp}}{\jdgtE{\Gamma}{e_1\strcon e_2}{\strtp}}
$\qquad
$
\Inference{\jdgtE{\Gamma}{e_1}{\inttp}\quad \jdgtE{\Gamma}{e_2}{\inttp}}{\jdgtE{\Gamma}{e_1~\!\aop~\!e_2}{\inttp}}
$
\\\\[-1ex]
$
\Inference{\jdgtE{\Gamma}{e_1}{\dttp}~~...~~\jdgtE{\Gamma}{e_n}{\dttp}}
{\jdgtE{\Gamma}{\{e_1,...,e_n\}}{\msett{\dttp}}}
$
\\\\[-2ex]
\hspace{-12cm}\framebox[1.2\width]{Predicates}
\\
$\jdgtpred{\Gamma}{\TRUE}$
\\[1.5ex]
$
\Inference{\jdgtE{\Gamma}{e_1}{\dttp}\quad\jdgtE{\Gamma}{e_2}{\dttp}}{\jdgtpred{\Gamma}{e_1~\!\cop~\!e_2}}
$
\qquad
$
\Inference{\jdgtE{\Gamma}{e_1}{\dttp}\quad\jdgtE{\Gamma}{e_2}{\msett{\dttp}}}{\jdgtpred{\Gamma}{e_1\in e_2}}
$
\\\\[-1ex]
$
\Inference{\jdgtpred{\Gamma}{\psi}}{\jdgtpred{\Gamma}{\lnot\psi}}
$
\qquad\quad
$
\Inference{\jdgtpred{\Gamma}{\psi_1}\quad \jdgtpred{\Gamma}{\psi_2}}{\jdgtpred{\Gamma}{\psi_1\land\psi_2}}
$
\\\\
\hspace{-12cm}\framebox[1.2\width]{Tuples}
\\[-2ex]
$
\Inference{
	\jdgtt{\Gamma}{e_1}{\type_1}\quad...\quad \jdgtt{\Gamma}{e_n}{\type_n} 
}
{
	\jdgtt{\Gamma}{e_1\rcdcon...\rcdcon e_n}{\tau_1\times...\times\tau_n}
}
$
\\\\[-2ex]
\hspace{-12cm}\framebox[1.2\width]{Templates}
\\\\[-2ex]
$
\jdgtT{\msettp}{!x}{[x:\msettp]}\quad \mrm{if}~\msettp\neq\loctp 
$
\\\\[-1ex]
$
\jdgtT{\loctp}{!u}{[u:\loctp]}
$
\qquad
$
\Inference{
	\jdgtT{\sktp_1}{\bd_1}{\Gamma_1} \quad ...\quad \jdgtT{\sktp_n}{\bd_n}{\Gamma_n}
}
{
	\jdgtT{\sktp_1\times...\times\sktp_n}{\bd_1\rcdcon...\rcdcon \bd_n}{\Gamma_1,...,\Gamma_n} 
}
$
\\\\[-1.5ex]
\hspace{-12cm}\framebox[1.2\width]{Tables}
\\\\[-1ex]
$
\Inference
{\jdgtE{\Gamma}{\lloc}{\loctp}}
{\jdgtE{\Gamma,\itfc}{\TBID@\lloc}{\itfc(\TBID)}} \quad 
\mrm{if}~\TBID\in\dom{\itfc}$\qquad
$\jdgtE{\Gamma,\itfc}{\TBV}{\Gamma(\TBV)} \quad \mrm{if}~\TBV\in\dom{\Gamma}$\\\\[-1ex]
$\Inference{\forall t\in R: \jdgtt{\Gamma}{t}{I.\TBSK}}{\jdgtE{\Gamma,\itfc}{(I,R)}{I.\TBSK}}\quad 
	\mrm{if}~I.\TBID\neq\bot\Rightarrow\itfc(I.\TBID)=I.\TBSK$
\\\\[-1ex]
\end{tabular}
\normalsize
\hrule
\end{figure}

\subsubsection*{Typing Expressions, Predicates, Tuples, Templates and Tables}

Directing our attention to Figure~\ref{fig:tp_ett}, 
for expressions we obtain the types of data variables and locality variables
directly from the typing environment $\Gamma$ (first and second rule).
For numerals, string literals, table identifiers and locality constants,
we assume that it is clear which types they are from their literal appearance (next four rules).
The concatenation of two expressions of type $\strtp$ has type $\strtp$, 
and an arithmetic expression involving sub-expressions of type $\inttp$ has type $\inttp$ (next two rules).
The type $\msett{\type}$ can be derived from a multiset only if
all the elements have the same type $\type$ (last rule for expressions). 

For predicates, 
a comparison $e_1~\!\cop~\!e_2$ has type $\booltp$ if its sub-expressions $e_1$ and $e_2$ 
have the same type. 
A membership test $e_1\in e_2$ has type $\booltp$ if 
the type of $e_2$ reflects that it is a multiset of elements of the type that $e_1$ has. 
The composite predicates $\lnot\PRED$ and $\PRED_1\land\PRED_2$ are well-typed 
if their sub-predicates are well-typed. 

The type of a tuple $e_1,...,e_n$ is the product of the types of its component expressions. 

For templates, a singleton template $!x$ for a data variable $x$ can be typed 
given a multiset type $\msettp$ that is not $\loctp$, 
with the environment $[x:\msettp]$ derived. 
This reflects the fact that when the type of a bound variable is not declared with it in the syntax, 
we obtain the type of the variable from the type $\msettp$ of its data source. 
For a locality variable, the only possibility is to derive the environment $[u:\loctp]$, 
provided that the type $\loctp$ is given to the left of the turnstile. 
Finally, for a composite template $\bd_1,...,\bd_n$, 
a product type $\type_1\times ... \times\type_n$ should be given,  
each component $\bd_j$ needs to be typed under $\type_j$, 
and the bindings from the derived typing environments $\Gamma_1$, $\Gamma_2$ ..., and $\Gamma_n$ 
constitute the typing environment derived from the template $\bd_1,...,\bd_n$. 

\begin{exa}
The template $(!id,!tp,!yr,!cr,!sz,!is,!ss)$ that 
can be used to match against the records of the table {\sf KLD} 
(such as in the selection operation of Example~\ref{ex:select})
can be typed under the schema of the table (on the left of the turnstile) as follows. 
\small
$$
\begin{array}{l}
	\strtp\times\strtp\times\strtp\times\strtp\times\strtp\times\inttp\times\inttp
	\vdash (!id,!tp,!yr,!cr,!sz,!is,!ss) \smalltriangleright \\
      \qquad\quad [id:\strtp,tp:\strtp,yr:\strtp,cr:\strtp,sz:\strtp,is:\inttp,ss:\inttp]
\end{array}\vspace{-16 pt}
$$
\normalsize \qed
\end{exa}

\noindent The rules for tables distinguish between three cases. 
If the table to be typed is a reference of the form $\TBID@\lloc$, 
then its type is obtained from $\itfc$ as $\itfc(\TBID)$, 
as long as $\TBID$ is in the domain $\dom{\itfc}$ of $\itfc$, 
and $\lloc$ is well-typed under the same typing environment. 
The well-typedness of $\lloc$ ensures that when $\lloc$ is a locality variable $u$,
$u$ exists in the domain of the typing environment, 
which in turn indicates that $u$ must have been bound previously. 
If the table is a variable $\TBV$ in the domain of the typing environment $\Gamma$, 
then a binding of $\TBV$ to a schema is supposed to exist in $\Gamma$, and
the schema is obtained from $\Gamma$ as the type of $\TBV$. 
Finally, if the table is a concrete one, $(I,R)$, then its schema is directly obtained from $I$. 
However, it needs to be checked that each tuple in $R$ indeed has the schema $I.\TBSK$ as its type, and 
in case the table identifier $I.\TBID$ is not $\bot$, 
$\itfc$ correctly records the schema $I.sk$ corresponding to this identifier.

\begin{figure}[ht!]
\caption{The Typing Rules for Actions}\label{fig:tp_act}
\centering
\hrule
\small
\begin{tabular}{c}{~}\\[-1ex]
$
\Inference{
	\jdgtt{\Gamma}{t}{\itfc(\TBID)}\quad \jdgtE{\Gamma}{\lloc}{\loctp} 
}
{
	\jdgta{\Gamma,\itfc}{\INS{\TBID}{t}{\ell}}{\emptyTPEnv}
}
$
\\\\[-1ex]
$
\Inference{
	\jdgtT{\itfc(\TBID)}{T}{\Gamma'}\quad \jdgtpred{\ext{\Gamma}{\Gamma'}}{\psi}\quad 
	\jdgtE{\Gamma}{\lloc}{\loctp}
}
{
	\jdgta{\Gamma,\itfc}{\DEL{\TBID}{T}{\psi}{\ell}}{\emptyTPEnv}
} 
$
\\\\[-1ex]
$
\Inference{
	\begin{array}{c}
	\jdgtE{\Gamma,\itfc}{\CALTB_1}{\sktp_1} \quad...\quad \jdgtE{\Gamma,\itfc}{\CALTB_n}{\sktp_n}\quad 
	\jdgtT{\flattensk(\sktp_1\times...\times\sktp_n)}{T}{\Gamma'} \\
	\jdgtpred{\ext{\Gamma}{\Gamma'}}{\psi} \quad \jdgtt{\ext{\Gamma}{\Gamma'}}{t}{\tau'} 	
	\end{array}
}
{
	\jdgta{\Gamma,\itfc}{\SEL{\LST{\CALTB}}{T}{\psi}{t}{!\TBV}}{[\TBV:\tau']}
}
$
\\\\[-1ex]
$
\Inference{
	\jdgtT{\itfc(\TBID)}{T}{\Gamma'}\quad \jdgtpred{\ext{\Gamma}{\Gamma'}}{\psi}\quad  
	\jdgtt{\ext{\Gamma}{\Gamma'}}{t}{\itfc(\TBID)}\quad 
	\jdgtE{\Gamma}{\lloc}{\loctp}
}
{
	\jdgta{\Gamma,\itfc}{\UPDATE{\TBID}{T}{\psi}{t}{\ell}}{\emptyTPEnv}
}
$
\\\\[-1ex]
$
\Inference{
	\jdgtT{\itfc(\TBID)}{T}{\Gamma'}\quad \jdgtpred{\ext{\Gamma}{\Gamma'}}{\psi}\quad 
	f:\msett{\itfc(\TBID)}\rightarrow\sktp' \quad \jdgtT{\sktp'}{T'}{\Gamma''}\quad 
	\jdgtE{\Gamma}{\lloc}{\loctp}
}
{
	\jdgta{\Gamma,\itfc}{\AGGR{\TBID}{T}{\psi}{f}{T'}{\ell}}{\Gamma''}
}
$
\\\\
$
\Inference{\jdgtE{\Gamma}{\lloc}{\loctp}}
{\jdgta{\Gamma,\itfc}{\CREATENEW{\TBID@\lloc}{\itfc(\TBID)}}{\emptyTPEnv}}
$
\qquad
$
\Inference{\jdgtE{\Gamma}{\lloc}{\loctp}} 
{\jdgta{\Gamma,\itfc}{\DROP{\TBID}{\ell}}{\emptyTPEnv}}
$
\\\\
$
\Inference{\jdgtP{\Gamma,\itfc}{P}\quad \jdgtE{\Gamma}{\lloc}{\loctp}}
{\jdgta{\Gamma,\itfc}{\EVALP{P}{\ell}}{\emptyTPEnv}}
$
\\ \\[-1ex]
\end{tabular}
\normalsize
\hrule
\end{figure}

\subsubsection*{Typing Actions}
Turning to Figure~\ref{fig:tp_act},
the rules always type check the localities occurring in the actions to be typed --- 
we do not repeat this aspect in each individual rule that we explain below.

For an \emph{insertion} action to be typed, 
the tuple $t$ inserted into the table $\TBID@\lloc$ needs to have the type $\itfc(\TBID)$ --- 
the schema of all potential tables referenced by the identifier $\TBID$. 

For a \emph{deletion} action to be typed, 
the specified template $T$ needs to match the structure of the target table, 
thus it needs to be well-typed under the schema $\itfc(\TBID)$ of that table.
A typing environment $\Gamma'$ is then derived, associating the bound variables in $T$ to their types. 
In addition, under the extension $\Gamma,\Gamma'$ of the original typing environment 
$\Gamma$ with $\Gamma'$, the predicate $\psi$ needs to be well-typed. 
This extension captures that the free variables of $\psi$ 
can be partly bound before the deletion action is ever performed and
associated with types by $\Gamma$, 
and partly bound in $T$ and associated with types by $\Gamma'$. 

To type a \emph{selection} action, 
the specified template $T$ needs to be well-typed under the 
(flattened) product of the types of all the source tables, 
to ensure that $T$ matches the structure of the Cartesian product of these source tables.
A typing environment $\Gamma'$ is then derived from typing $T$, binding variables of $T$ to their types. 
Under the extension $\ext{\Gamma}{\Gamma'}$ of $\Gamma$ with $\Gamma'$, 
it is then checked that both the predicate $\psi$ and the tuple $t$ should be typable. 
In particular, the type $\tau'$ of $t$ is supposed to be identical to 
the schema of the result table to be bound to $\TBV$.
Correspondingly the environment $\TBV:\tau'$ is derived. 

The typing rule for the \emph{update} action can be understood similarly. 
Since the instantiations of the tuple $t$ are put back in the target
table with identifier $\TBID$ that already exists, 
$t$ is supposed to have the schema $\itfc(\TBID)$ as its type, 
and the empty environment $\emptyTPEnv$ is derived 
since no variables bound in this action are further used in the continuation. 

In the typing rule for \emph{aggregation} actions, 
a similar pattern is involved in the typing of the specified template $T$ and predicate $\psi$. 
Since the aggregator function $f$ takes a multiset of tuples in the data set of the table $\TBID$ 
as its argument, 
the type of $f$ is supposed to be $\msett{\itfc(\TBID)}\rightarrow \tau'$, 
for some type $\tau'$ of the aggregation result. 
Since this result is further bound to $T'$, 
the template $T'$ needs to be well-typed under $\tau'$, 
producing a new typing environment $\Gamma'$ that is also the environment derived 
in the typing of the whole action. 

The action for \emph{table creation} is typed by ensuring that the schema of the table 
to be created at $\lloc$ with table identifier $\TBID$ is in accordance with what $\itfc$ prescribes.

The typing rule for the action to \emph{drop} a table does not require any constraint to be satisfied:
a finer analysis may be able to give an approximation of whether the table currently exists; 
for simplicity we stick to the scope of errors listed in the beginning of this section. 

Finally, the typing rule for the action $\EVAL{P}{\lloc}$ type checks $P$ 
in the same environment $\Gamma$ as for $\EVAL{P}{\lloc}$, 
which corresponds to the static scoping of the free variables of $P$. 

For examples on typing actions we refer the reader to Section~\ref{sec:case_study}. 

\begin{figure}[ht!]
\caption{The Typing Rules for Processes, Components, and Nets}\label{fig:tp_pcn}
\hrule
\centering
\small
\begin{tabular}{c}{~}\\[-1ex]
\hspace{-12cm}\framebox[1.2\width]{Processes}
\\\\[-2ex]
$
\jdgtP{\Gamma,\itfc}{\NILP}
$
\qquad
$
\Inference{
	\jdgta{\Gamma,\itfc}{a}{\Gamma'}\quad \jdgtP{(\ext{\Gamma}{\Gamma'}),\itfc}{P}
}
{
	\jdgtP{\Gamma,\itfc}{a.P}
}
$
\qquad
$
\Inference{
	\jdgtP{\Gamma,\itfc}{P_1}\quad \jdgtP{\Gamma,\itfc}{P_2}
}
{
	\jdgtP{\Gamma,\itfc}{P_1;P_2} 
}
$
\\\\[-1ex]
$
\Inference{
	\begin{array}{c}
		~~\jdgtt{\Gamma}{e_1}{\tau_1}~~\ldots~~\jdgtt{\Gamma}{e_n}{\tau_n}~~ \\
		\jdgtP{(\avar_1:\tau_1,...,\avar_n:\tau_n),\itfc}{P}
	\end{array}
}
{
	\jdgtP{\Gamma,\itfc}{A(\LST{e})}
}~~\mrm{where}~ A(\LSTStruct{\avar_i:\tau_i}{i\le n})\triangleq P
$
\\\\[-1ex]
$
\Inference{
	\jdgtE{\Gamma,\itfc}{\CALTB}{\sktp}\quad \jdgtT{\sktp}{T}{\Gamma'}\quad \jdgtpred{\ext{\Gamma}{\Gamma'}}{\psi} \quad 
	\jdgtP{(\ext{\Gamma}{\Gamma'}),\itfc}{P}
}
{\jdgtP{\Gamma,\itfc}{\FOREACH{\CALTB}{T}{\psi}{\ordr}{P}}}
$
\\\\[-1ex]
\hspace{-12cm}\framebox[1.2\width]{Components}
\\\\[-1ex]
$
\Inference{\forall t\in R: \jdgtt{\Gamma}{t}{I.\TBSK}}{\jdgtC{\Gamma,\itfc}{(I,R)}}
\quad\mrm{if}~\itfc(I.\TBID)=I.\TBSK
$
\qquad
$
\Inference{
	\jdgtC{\Gamma,\itfc}{C_1}\quad \jdgtC{\Gamma,\itfc}{C_2}
}
{
	\jdgtC{\Gamma,\itfc}{C_1|C_2}
}
$
\\ \\[-1ex]
\hspace{-12cm}\framebox[1.2\width]{Nets}
\\[-1ex]
$\jdgtC{\Gamma,\itfc}{\NILN}$
\qquad
$
\Inference{
	\jdgtt{\Gamma}{\llocConst}{\loctp}\quad \jdgtC{\Gamma,\itfc}{C}
}
{
	\jdgtN{\Gamma,\itfc}{\LOCATED{\llocConst}{}{C}}
}
$
\\\\[-1ex]
$
\Inference{
  \jdgtN{\Gamma,\itfc}{N}
}
{
  \jdgtN{\Gamma,\itfc}{(\nu \llocConst)N}
}
$
\qquad
$
\Inference{
	\jdgtN{\Gamma,\itfc}{N_1}\quad \jdgtN{\Gamma,\itfc}{N_2} 
}
{
	\jdgtN{\Gamma,\itfc}{N_1||N_2} 
}
$
\\\\[-1ex]
\end{tabular}
\normalsize
\hrule
\end{figure}

\subsubsection*{Typing Processes, Components and Nets}
Directing our attention to the typing rules for processes in Figure~\ref{fig:tp_pcn}, 
the rule for the \emph{inert process} $\NILP$ is trivial. 
The rule for \emph{prefixing} says that in order for $a.P$ to be typed, 
we need to type the action $a$, deriving some environment $\Gamma'$, 
and the process $P$ needs to be well-typed under the original typing environment $\Gamma$ 
extended with $\Gamma'$. 
On the other hand, to type a \emph{sequential composition} $P_1;P_2$, 
each process only needs to be typed in the original environment $\Gamma$, 
which is because of the local scoping of the variables bound in $P_1$. 
To type a \emph{procedure call} $A(\LST{e})$ where $A$ has the procedure body $P$, 
the process $P$ needs to be typed in the typing environment that 
binds all the formal parameters of $A$ to their types obtained from the signature of $A$. 
Naturally, these types also need to agree with those of the actual arguments in $\LST{e}$. 
The rule for the {\sf foreach} \emph{loop} features a similar pattern in the treatment of 
the source table $\CALTB$, the template $T$ and the predicate $\psi$, 
compared to that of certain typing rules for actions such as ${\sf select}$.
In addition, the body $P$ of the loop needs to be typed with the original typing environment $\Gamma$ extended with $\Gamma'$ that reflects the bindings produced by 
matching tuples from the source table against $T$. 

The next group of typing rules in Figure~\ref{fig:tp_pcn} covers the typing of components. 
It is worth noting that the rule for typing a \emph{concrete table} $(I,R)$ as a component resembles 
the one for typing it as a table,  
except that the case where the table identifier is $\bot$ is impossible and 
need not be dealt with for a table that stands as a component. 
Note also that the next rule for typing a \emph{parallel composition} $C_1|C_2$ of components 
uses the same environments $\Gamma$ and $\itfc$ for the components $C_1$ and $C_2$ as for $C_1|C_2$: 
since there are no \emph{linearity} issues of concern for the variables, 
no split operation is needed on $\Gamma$, and $\itfc$ is simply a global constant. 

The last group of typing rules in Figure~\ref{fig:tp_pcn} covers the typing of nets. 
The rule for the \emph{empty net} $\NILN$ is trivial. 
The rule for a \emph{located component} $l::C$ checks that $l$ is indeed a locality 
($\jdgtE{\Gamma}{l}{\loctp}$). 
The rule for \emph{restriction} $\RESNET{l}{N}$ only ensures that the net $N$ is itself typable. 
This check is performed with the same typing environment $\Gamma$ since $l$ is not a variable 
and we have assumed that the types of constants are clear for their literal appearance. 
Finally, the rule for parallel composition $\PARNET{N_1}{N_2}$ is conceived with a similar rationale 
to that of the rule for $C_1|C_2$.

\subsection{Theoretical Results} % (fold)
\label{sec:theoretical_results}

It is usually required of type systems that they should be safe, in the sense of
subject reduction~\cite{Pierce02_Types} --- 
well-typedness is preserved under transitions.  
In most cases it is important for type systems to be sound, i.e., 
the guarantees they aim to deliver should indeed be delivered at run time, 
for well-typed specifications and programs. 
It is also important for type checking to be efficient, 
to smooth the development process. 
In this section, we provide theoretical results demonstrating that 
our type system has all of the aforementioned properties. 
In the statements of these results, we will leave universal quantifiers implicit, 
and assume that it is clear from the use of symbols what kinds of entities they refer to. 

The subject reduction result is given in Theorem~\ref{thm:subj_red}. 
Its proof is structured as an induction on the semantic derivation, 
with details provided in Appendix~\ref{app:proofs}. 

\begin{thm}[Subject Reduction]\label{thm:subj_red}
  For a closed net $N$, if $\BV{N}\cap\dom{\Gamma}=\emptyset$, 
  $\jdgtN{\Gamma,\itfc}{N}$, and $\vdash N\rightarrow N'$, then $\jdgtN{\Gamma,\itfc}{N'}$. 
\end{thm}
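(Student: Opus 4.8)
The plan is to argue by induction on the derivation of the transition, after the harmless generalisation that for \emph{every} environment net $\envnet$, if $\jdgtN{\Gamma,\itfc}{N}$ (keeping the hygiene invariant $\BV{N}\cap\dom{\Gamma}=\emptyset$, which can always be re-established by $\alpha$-renaming since typing is $\alpha$-invariant) and $\envnet\vdash N\rightarrow N'$, then $\jdgtN{\Gamma,\itfc}{N'}$; the original statement is the instance $\envnet=\NILN$. Generalising over $\envnet$ is needed because $\REDPAR$ passes a non-trivial environment to its premise, and it is sound because $\envnet$ is consulted only by $\REDCREATE$, where it influences \emph{whether} a table is created but never its schema. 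Before the induction I would record a handful of routine auxiliary facts: \emph{weakening} (adding fresh bindings to $\Gamma$ preserves every judgment); a \emph{substitution lemma} (if a judgment holds under $\ext{\Gamma}{\Gamma'}$ and $\sigma$ maps each binding $\avar\!:\!\type$ of $\Gamma'$ to a constant value --- or, for a table variable, a concrete table value with $\bot$ identifier --- of type $\type$, then the judgment with $\sigma$ applied holds under $\Gamma$); \emph{invariance of typing under the structural congruence} $\equiv$, by inspection of Figure~\ref{fig:congruence}; the immediate coincidence, for constant tuples $t$ and templates $T$, of $\jdgtt{\Gamma}{t}{\type}$ with $t\sattsk\type$ and of derivability of $\jdgtT{\type}{T}{\Gamma'}$ with $T\sattsk\type$; and --- the workhorse --- \emph{evaluation and matching soundness}: a well-typed closed expression, predicate or tuple evaluates through $\EVALT{-}$ to a non-$\Err$ value well-sorted at its type, and a row well-sorted at $\type$ matched against a template well-typed under $\type$ yields a substitution respecting the derived environment. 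These last facts follow by straightforward structural inductions over Figures~\ref{fig:eval_ett} and~\ref{fig:match_pred}.

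The core is the case analysis on the last rule of $\envnet\vdash N\rightarrow N'$. For the structural rules $\REDPAR$, $\REDRES$, $\REDSEQTT$, $\REDSEQFF$ and $\REDEQUIV$ I would invert the typing derivation of $N$ to expose the typing of the relevant sub-net, apply the induction hypothesis, and re-assemble the typing of $N'$ --- using $\equiv$-invariance for $\REDEQUIV$ and the sequential-composition typing rule for $\REDSEQTT$ and $\REDSEQFF$. For $(\mrm{CALL})$, inversion yields $\jdgtt{\Gamma}{e_j}{\tau_j}$ and $\jdgtP{(\avar_1\!:\!\tau_1,\ldots,\avar_n\!:\!\tau_n),\itfc}{P}$ for the body $P$; by evaluation soundness each $v_j=\EVALT{e_j}$ is well-sorted at $\tau_j$, so the substitution lemma followed by weakening gives $\jdgtP{\Gamma,\itfc}{P[v_1/\avar_1]\cdots[v_n/\avar_n]}$. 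For $\REDEVALP$, inversion of the action rule directly yields $\jdgtP{\Gamma,\itfc}{P}$ and $\jdgtP{\Gamma,\itfc}{P'}$, so both nodes of $N'$ are well-typed (the side condition $\FV{P}=\emptyset$ concerns closedness only). For $\REDCREATE$ and $\REDDROP$ there is almost nothing to do: the creation action is typable only when its declared schema equals $\itfc(\TBID)$, so the freshly created table $((\TBID,\itfc(\TBID)),\emptyset)$ is well-typed (consistent with $\itfc$, empty data set), and since both actions derive the empty environment the continuation stays typed under $\Gamma$ in every branch.

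The data-manipulation rules $\REDINS$, $\REDDEL$, $\REDUPD$, $\REDAGGR$, $\REDSEL$ and $\REDFORTT$/$\REDFORFF$ carry the real weight, and they all follow the same two steps. First, using inversion together with the auxiliary facts, I would show the error side-condition of the rule cannot be met: for $\REDINS$, $\REDDEL$, $\REDUPD$ and $\REDAGGR$ the premise $I.\TBID=\TBID$ together with the side condition $\itfc(I.\TBID)=I.\TBSK$ on well-typed concrete tables forces the schema used in the semantics to coincide with the one used in typing, whence the relevant template is well-sorted against it, every row is well-sorted, matching succeeds, and the resulting predicates and replacement tuples are closed and well-typed and hence evaluate without $\Err$; for $\REDSEL$ one additionally uses the premise guaranteeing that each reference $\TBID@l$ in $\LST{\CALTB}$ points to one of the listed tables, so that $\prodSK$ and $\prodR$ are defined and, via $\flattensk$/$\flattendt$, equal the flattened products of the typed schemas and of the data sets; consequently $N'\neq\ErrNet$ and the ``otherwise'' branch is taken. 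Second, I would check that this branch is well-typed: data sets only shrink (deletion, the retained part of an update, one {\sf foreach} iteration) or are extended by evaluated tuples / updated rows which, being evaluations of well-typed closed tuples at the table's schema, are well-sorted there; the fresh select result $(I',R')$ has $I'.\TBID=\bot$, its schema is the projection of the join schema along $T$ and $t$ which --- by a small lemma read off Definition~\ref{def:sk_proj} --- equals the type $\tau'$ derived for $t$, and its rows are evaluations of $t$ under matching substitutions, hence well-sorted at $\tau'$, so substituting $(I',R')$ for $\TBV$ in the well-typed continuation is legitimate by the substitution lemma; for $\REDAGGR$ the aggregation result lies in the range schema $\tau'$ of $f$ (using the standing assumption that an aggregator of type $\msett{\itfc(\TBID)}\rightarrow\tau'$ sends well-sorted inputs to well-sorted $\tau'$ outputs), so its match against $T'$ respects the derived $\Gamma''$; and for $\REDFORTT$ the result is typed with the sequential-composition rule, its first component by the substitution lemma and its second by re-deriving the {\sf foreach} typing over the reduced data set.

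I expect the $\REDSEL$ case to be the principal obstacle. It is the one rule that forces me to line up the purely syntactic, $\itfc$-driven schemas handled by the type system with the concrete schemas and data sets built by $\prodSK$, $\prodR$, $\flattensk$, $\flattendt$ and the schema projection of Definition~\ref{def:sk_proj} in the semantics, and in particular to check that the projected schema is \emph{exactly} the type assigned to the projection tuple. More broadly, the recurring ``the error branch is unreachable'' sub-argument --- unavoidable here because $\ErrNet$ is deliberately left untypable, so a well-typed net able to step to $\ErrNet$ would falsify the conclusion --- is precisely what welds subject reduction to the soundness-with-respect-to-monitoring claim, and it rests squarely on evaluation and matching soundness.
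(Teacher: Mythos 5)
Your proposal is correct and follows essentially the same route as the paper: the paper likewise proves a generalised lemma over an arbitrary (closed) environment net and instantiates it with $\NILN$, proceeds by induction on the transition derivation, and relies on the same auxiliary facts you list --- substitution lemmas, the coincidence of typing with well-sortedness/non-erroneous evaluation and matching, invariance under $\equiv$, the alignment of $\prodSK$/$\prodR$ with the flattened typed schemas, and the projection lemma for $\REDSEL$ --- with the same two-step ``error branch unreachable, otherwise branch well-typed'' pattern in each data-manipulation case. The only cosmetic difference is that the paper's generalised lemma additionally assumes the environment net is well-typed (supplied by inversion in the $\REDPAR$ case), a hypothesis you correctly observe is dispensable since the environment only gates table creation and never determines a schema.
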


We characterize a net $N$ being non-erroneous as $\oknet(N)$, which says that
	one is unable to identify a parallel component that is $\ErrNet$ in $N$. 
The formal definition of $\oknet(N)$ is given below. 

\begin{defi}
$\oknet(N)\triangleq \lnot(\exists N': N\equiv N'||\ErrNet)$
\end{defi}

With a straightforward induction on the typing derivation, 
	we can show that if a net $N$ is typable, then it cannot be erroneous. 
\begin{lem}\label{thm:typable_ok}
If $\jdgtN{\Gamma,\itfc}{N}$, then $\oknet(N)$ holds. 
\end{lem}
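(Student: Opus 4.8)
The plan is to exploit a single structural fact about the typing rules: no rule in Figure~\ref{fig:tp_pcn} has a conclusion of the form $\jdgtN{\Gamma,\itfc}{\ErrNet}$. Inspecting the net-level rules, the only nets that can ever be assigned a typing judgment are $\NILN$, $\LOCATED{\llocConst}{}{C}$, $\RESNET{\llocConst}{N_0}$ and $\PARNET{N_1}{N_2}$; in particular $\ErrNet$ is never typable, and neither is any net containing $\ErrNet$ as a parallel component. Since $\oknet(N)$ is defined modulo the structural congruence, the argument has to be mediated by a small invariance lemma for $\equiv$.

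Concretely, I would introduce the obvious syntactic predicate $\mi{occ}(N)$ meaning ``$\ErrNet$ occurs as a sub-net of $N$'': $\mi{occ}(\ErrNet)$ holds; $\mi{occ}(\PARNET{N_1}{N_2})$ holds iff $\mi{occ}(N_1)$ or $\mi{occ}(N_2)$; $\mi{occ}(\RESNET{\llocConst}{N_0})$ holds iff $\mi{occ}(N_0)$; and $\mi{occ}$ fails on $\NILN$ and on every node $\LOCATED{\llocConst}{}{C}$ (a component $C$ contains no sub-nets). Step one is the ``straightforward induction on the typing derivation'' alluded to in the statement: from $\jdgtN{\Gamma,\itfc}{N}$ one shows $\lnot\mi{occ}(N)$, since the $\NILN$ and $\LOCATED{\llocConst}{}{C}$ rules produce nets with no $\ErrNet$ occurrence, the $\RESNET{\llocConst}{\cdot}$ and $\PARNET{\cdot}{\cdot}$ rules preserve $\lnot\mi{occ}$ by the induction hypotheses on their premises, and there simply is no rule whose subject is $\ErrNet$. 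Step two is a routine induction on the derivation of $N\equiv M$ showing that $\equiv$ neither creates nor destroys an $\ErrNet$, i.e.\ $\mi{occ}(N)\Leftrightarrow\mi{occ}(M)$: each axiom of Figure~\ref{fig:congruence} is checked directly — the commutativity/associativity/$\NILN$-unit laws for $||$ merely rearrange parallel components, the two $\nu$-laws only move or swap restrictions (the side condition $\llocConst\notin\FL{N_1}$ of scope extrusion is irrelevant here), the node-level laws $\LOCATED{l}{}{(\PAR{C_1}{C_2})}\equiv\LOCATED{l}{}{C_1}~||~\LOCATED{l}{}{C_2}$ and $\llocConst::(P|\NILP)\equiv\llocConst::P$ only touch the component structure at a single locality (no net-level $\ErrNet$ is involved on either side), and $\alpha$-equivalence does not affect $\ErrNet$ at all.

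The lemma then follows by contradiction. Suppose $\jdgtN{\Gamma,\itfc}{N}$ but $\lnot\oknet(N)$, so $N\equiv\PARNET{N'}{\ErrNet}$ for some $N'$. Clearly $\mi{occ}(\PARNET{N'}{\ErrNet})$ holds, hence by step two $\mi{occ}(N)$ holds, contradicting step one. Therefore $\lnot(\exists N': N\equiv\PARNET{N'}{\ErrNet})$, i.e.\ $\oknet(N)$.

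I do not expect a genuine obstacle here; the only part with any content is step two, and it is one short case per congruence axiom. The single subtlety worth stating explicitly is that at the component level the congruence can split a node into two parallel nodes (and merge them back), so one must note that these axioms cannot affect $\mi{occ}$ precisely because $\ErrNet$ is a net rather than a component and therefore never appears inside a $C$. As an alternative packaging, one could instead prove the stronger subject-congruence property (typability is preserved by $\equiv$) — which is in any case needed for Theorem~\ref{thm:subj_red} because of rule $\REDEQUIV$ — and then conclude directly from the fact that $\PARNET{N'}{\ErrNet}$ is untypable by one-step inversion of the parallel-composition rule; but tracking $\mi{occ}$ is the lighter route and matches the induction-on-the-typing-derivation phrasing of Lemma~\ref{thm:typable_ok}.
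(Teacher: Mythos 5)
Your proof is correct and follows essentially the route the paper indicates: the paper gives no written proof of this lemma beyond the remark that it is a straightforward induction on the typing derivation, which is exactly your step one (no typing rule has $\ErrNet$ as its subject, and the net rules for $\NILN$, $\LOCATED{\llocConst}{}{C}$, restriction and parallel composition propagate the absence of $\ErrNet$), while your occurrence predicate and its invariance under $\equiv$ supply the detail the paper glosses over, namely that $\oknet$ is defined modulo structural congruence. Your alternative packaging via preservation of typability under $\equiv$ is also directly supported by the paper, which states that property as Lemma~\ref{lem:tp_equiv} (needed anyway for rule $\REDEQUIV$ in the subject-reduction proof), so either route is consistent with the paper's intent.
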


Combining Theorem~\ref{thm:subj_red} and Lemma~\ref{thm:typable_ok}, 
we can easily obtain the following result that
the well-typedness of closed nets implies its non-erroneous execution. 
\begin{thm}[Soundness]
If $\jdgtN{\emptyTPEnv,\itfc}{N}$, and $\vdash N\rightarrow^* N'$, then $\oknet(N')$ holds. 
\end{thm}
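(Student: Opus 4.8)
The plan is to strengthen the statement so that it becomes directly amenable to induction, and then to close the induction using subject reduction together with Lemma~\ref{thm:typable_ok}. Concretely, I would prove by induction on the length $k$ of the reduction sequence the following: \emph{for every net $N$ with $\jdgtN{\emptyTPEnv,\itfc}{N}$ and every $N'$ with $\vdash N\rightarrow^{*}N'$ via exactly $k$ steps, $\oknet(N')$ holds.} The original statement is the special case obtained by ranging over all $k$. The invariant carried along the reduction is simply well-typedness under the fixed pair $\emptyTPEnv,\itfc$; Lemma~\ref{thm:typable_ok} then converts that invariant into the desired $\oknet(\cdot)$ property at the endpoint.

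For the base case $k=0$ we have $N'=N$, and $\jdgtN{\emptyTPEnv,\itfc}{N}$ gives $\oknet(N')$ at once by Lemma~\ref{thm:typable_ok}. For the inductive step, decompose the sequence as $\vdash N\rightarrow N_{1}$ followed by a $(k-1)$-step sequence $\vdash N_{1}\rightarrow^{*}N'$. I would apply Theorem~\ref{thm:subj_red} to the first step with $\Gamma=\emptyTPEnv$: its side condition $\BV{N}\cap\dom{\emptyTPEnv}=\emptyset$ holds vacuously since $\dom{\emptyTPEnv}=\emptyset$, and $N$ is closed because typability under the empty environment forces $\FV{N}=\emptyset$ (no variable lookup in the typing rules can succeed against $\emptyTPEnv$) — this last observation I would record as a one-line auxiliary fact, or simply fold it into the paper's standing convention that we work with closed systems. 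Subject reduction then yields $\jdgtN{\emptyTPEnv,\itfc}{N_{1}}$, with $\itfc$ unchanged since it is a fixed global parameter that the rule does not touch. Hence $N_{1}$ again satisfies the hypothesis of the (strengthened) statement, and the induction hypothesis applied to $N_{1}$ and the shorter sequence $\vdash N_{1}\rightarrow^{*}N'$ delivers $\oknet(N')$.

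The main obstacle, to the extent there is one, is not in this argument at all: all the substantive work resides in Theorem~\ref{thm:subj_red} and Lemma~\ref{thm:typable_ok}, which we are entitled to assume. What remains here is purely bookkeeping, and the only two points requiring genuine care are (i) checking the side conditions of subject reduction at each step, which is immediate once $\Gamma=\emptyTPEnv$, and (ii) making sure the invariant is truly inductive — i.e. that what is preserved is typability under \emph{the same} $\emptyTPEnv,\itfc$, not merely under some possibly larger environment — which is exactly what Theorem~\ref{thm:subj_red} guarantees since it keeps both $\Gamma$ and $\itfc$ fixed. This is why the paper can legitimately describe the result as following ``easily'' from the two preceding results.
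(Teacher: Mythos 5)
Your proposal is correct and follows essentially the same route as the paper, which obtains soundness by combining Theorem~\ref{thm:subj_red} with Lemma~\ref{thm:typable_ok} via an (implicit) induction on the length of the reduction sequence; your only addition is the explicit bookkeeping of the side conditions (closedness and $\BV{N}\cap\dom{\emptyTPEnv}=\emptyset$), which the paper leaves to its standing convention on closed systems.
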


To be able to discuss the time complexity of type checking, 
a characterization of the sizes of nets is needed. 
We characterize the size of a net $N$ as
the number of ASCII characters used for the full textual representation of $N$ itself, 
as well as the definitions of procedures invoked in $N$: 
\small
$$\size{N}=\sz{N}+\sum_{A(\LSTStruct{\avar_i:\tau_i}{i})\triangleq P} \sz{P}$$
\normalsize
By ``full textual representation'', it is meant that 
the contents of interfaces $I$ and data sets $R$ for tables need to be completely spelled out. 

The following theorem states that the time complexity of type checking a net $N$
is linear in the size of $N$. 
Its proof is sketched in Appendix~\ref{app:proofs}. 
\begin{thm}[Efficiency of Type Checking]\label{thm:efficiency}
With a given $\itfc$, 
the time complexity of type checking net $N$ is linear in $\mi{size}(N)$, provided that 
it takes $O(1)$ time to 
\begin{itemize}
  \item determine the types of all constant expressions, 
  \item decide the equality/inequality of table identifiers and types, 
  \item construct a singleton typing environment, 
  \item construct the extension $(\Gamma_1,\Gamma_2)$ of a typing environment $\Gamma_1$ with $\Gamma_2$, and
  \item look up environments $\Gamma$ (resp. $\itfc$) for variables (resp. table identifiers), 
\end{itemize}
and that it takes $O(n)$ time to 
\begin{itemize}
  \item form an $n$-ary product type, and 
  \item perform the operation $\flattensk(\type_1\times...\times\type_n)$.
\end{itemize}
\end{thm}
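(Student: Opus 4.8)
The plan is to exhibit a type-checking procedure that follows the shape of the typing rules and to bound its running time by a single walk over the abstract syntax tree of $N$ together with the bodies of the procedures invoked in $N$.

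First I would observe that the typing rules of Figures~\ref{fig:tp_ett}, \ref{fig:tp_act} and~\ref{fig:tp_pcn} are \emph{syntax-directed}: for every syntactic category (expression, predicate, tuple, template, table, action, process, component, net) and every top-level constructor there is exactly one rule that can conclude a judgment about it, and in each rule the premises mention only strict subphrases (the only exception being the rule for $A(\LST{e})$, discussed below). Hence the rules determine a deterministic recursive-descent algorithm: given $\Gamma$, $\itfc$ and a phrase, it dispatches on the outermost constructor, recursively type-checks the immediate subphrases in the environments prescribed by the applicable rule, performs that rule's side conditions, and returns the derived type or environment (or reports failure). Soundness and completeness of this algorithm with respect to $\vdash$ are immediate from syntax-directedness.

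Second comes the cost analysis. I would assign to each node of the syntax tree the local work the algorithm performs there, \emph{excluding} the recursive calls on subphrases. For the leaf rules (variables, literals, locality constants, $\NILP$, $\NILN$) and the unary/binary composite rules (concatenation, arithmetic, $\lnot\psi$, $\psi_1\wedge\psi_2$, $\PAR{C_1}{C_2}$, $\PARNET{N_1}{N_2}$, prefixing $a.P$, sequential composition, restriction, located components, insertion, deletion, update, table creation, drop, $\EVALP{P}{\lloc}$), this local work is a bounded number of environment look-ups, singleton-environment constructions, environment extensions, and type (in)equality and constant-type decisions, each $O(1)$ by the stated hypotheses, hence $O(1)$ per node. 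The remaining rules are the ``$n$-ary'' ones: typing a tuple $e_1,\dots,e_n$, typing a template $\bd_1,\dots,\bd_n$ against a product $\sktp_1\times\cdots\times\sktp_n$ (matching lengths and collecting the $n$ singleton environments), and the rules for ${\sf select}$, the ${\sf foreach}$ loop and ${\sf aggr}$, which additionally form an $n$-ary product of source-table schemas and apply $\flattensk$ to it. In each of these, $n$ is an arity written out verbatim in the source text of the phrase (the number of tuple/template components, or $|\LST{\CALTB}|$), so by the $O(n)$-cost assumptions for product formation and $\flattensk$ the local work is linear in the length of that phrase. A key sub-point is that matching a template $T$ (or tuple $t$) against a schema obtained from $\itfc$ costs only $O(|T|)$, independently of how large $\itfc$'s schemas are: the algorithm walks $T$ and the schema in lockstep and halts at the first length mismatch, so a failing check is rejected within $O(|T|)$ steps and a succeeding check has $|T|$ equal to the schema length. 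Summing these local costs over all nodes, the work done inside $N$ itself is $O(\sz{N})$.

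Third, the treatment of procedure calls, the one place where a naive reading of the rules would spoil linearity. The rule for $A(\LST{e})$ type-checks the body $P$ of $A$ under $(\avar_1\!:\!\tau_1,\dots,\avar_n\!:\!\tau_n)$, an environment built solely from the declared signature of $A$ and independent of the call site's $\Gamma$. Hence the verdict for $P$ is the same at every call of $A$, and the algorithm can type-check each procedure body once, at its first encounter, and memoize the result; every subsequent call then costs only the work of checking that the actual arguments $e_i$ have the declared types $\tau_i$, which is linear in $\sum_i|e_i|$ and thus part of $\sz{N}$. Since there are finitely many procedure definitions and each body $P$ is traversed at most once, the extra work totals $\sum_{A(\LSTStruct{\avar_i:\tau_i}{i})\triangleq P} O(\sz{P})$, which combined with the $O(\sz{N})$ bound above yields $O(\size{N})$ overall, as claimed; the induction carrying this out is a structural induction on $N$ and, for the memoized bodies, on each $P$ separately.

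The step I expect to be the main obstacle is pinning the cost model down tightly enough that every side condition provably fits within the advertised $O(1)$ or $O(n)$ budget --- in particular, convincing oneself that comparing a phrase against a schema retrieved from $\itfc$, forming flattened product types from schemas fetched by $O(1)$ look-ups, and accumulating the environments $\Gamma,\Gamma'$ for templates all stay within budget, without hidden re-traversals. Once the lockstep-matching observation and the memoization of procedure bodies are in place, what remains is the routine telescoping of local costs along the syntax tree.
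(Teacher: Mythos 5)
Your proposal is correct and follows essentially the same route as the paper: the paper likewise relies on the syntax-directedness of the rules to decide each judgment in time linear in the size of the phrase (its Lemma~\ref{lem:efficiency}, proved by straightforward structural induction under the same cost assumptions), and it handles procedure calls by checking every procedure body exactly once up front and ignoring bodies at call sites, which is the same idea as your memoization. The only cosmetic difference is that the paper performs the body checks as a separate first pass rather than on first encounter, yielding the identical bound $O(\sz{N}+\sum_{A(\LSTStruct{\avar_i:\tau_i}{i})\triangleq P}\sz{P})=O(\size{N})$.
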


% subsubsection theoretical_results (end)

%%% Local Variables:
%%% mode: latex
%%% TeX-master: "paper"
%%% End:

%!TEX root = ./paper.tex

\section{Case Study}\label{sec:case_study}

Revisiting our scenario with the chain of department stores, 
we illustrate the modeling of data aggregation over multiple databases local to its different branches in Klaim-DB.
In more detail, a manager of the head office wants statistics on the total sales of KLD high boots from the year 2015, in each branch operating in Copenhagen.

We will think of a procedure $\mi{stat}$ at the locality $\ttloc_0$ of the head office, carrying out the aggregation needed. 
Thus the net $\dcnet$ for the database systems of the department store chain, 
as considered in Section~\ref{sec:syntax}, 
specializes to the following, 
where $C''_0$ is the remaining tables and processes at $\ttloc_0$ apart from 
the table {\sf Stores} and the procedure $\mi{stat}$. 
\small
$$
\ttloc_0:: ((I_0,R_0) | \mi{stat} | C''_0) ~||~ \ttloc_1:: ((I_1,R_1) | C'_1) 
  ~||~ ... ~||~ \ttloc_n::((I_n,R_n) | C'_n)
$$
\normalsize

A detailed specification of $\mi{stat}$ is then given in Figure~\ref{fig:case_study}. 
\begin{figure}[ht!]
  \caption{The Procedure for Distributed Data Aggregation}
  \label{fig:case_study}
  \centering
  \small
  $$
  \begin{array}{r@{~\!}l}
    \mi{stat}\triangleq & \CREATENEW{\restbid@\ttloc_0}{\strtp\times \strtp \times \inttp}. \\
    & \SEL{{\sf Stores}@{\ttloc_0}}{(!x,!y,!z,!w,!p)}{{\sf KLD}\in w\land x=``\mi{CPH}''}{(z,p)}{!\TBV}. \\
    & {\sf foreach}(\TBV,(!q,!u),\TRUE,\{\}): \\
    & \quad \AGGR{{\sf KLD}}{(!\mi{id},!\mi{tp},!\mi{yr},!\mi{cr},!\mi{sz},!\mi{is},!\mi{ss})}
            {\mi{tp}=``\mi{HB}''}{\SUM_7}{!res}{u}. \\
    & \quad \INS{\restbid}{(q, ``\mi{HB}'', res)}{\ttloc_0}. \\
    & \quad \NILP; \\
    & ... \\
    & \DROP{\restbid}{\ttloc_0}.\NILP 
  \end{array}
  $$
  \normalsize
\end{figure}

First of all, a result table with identifier ${\sf SSResult}$ 
and schema $\strtp\times \strtp\times \inttp$ is created.
Then all the localities of the databases used by the branches in Copenhagen 
that actually sell KLD shoes are selected, together with the shop names of such branches. 
This result set is then processed by a {\sf foreach} loop. 
The number of KLD high boots from 2015 that are sold is counted at each of these localities (branches), 
and is inserted into the resulting table together with 
the corresponding shop name and the string ``HB'' describing the shoe type concerned. 
The resulting table, displayed in Figure~\ref{tab:Result_table}, can still be queried/manipulated before being dropped. 

\begin{figure}[ht!]
  \caption{The Table {\small \restbid \normalsize}}\label{tab:Result_table}
  \centering
\begin{tabular}{|c|c|c|}
  \hline
  % after \\: \hline or \cline{col1-col2} \cline{col3-col4} ...
  ~$~\mi{Shop\_name}~$~ & ~$~\mi{Shoe\_type}~$~ & ~$~\mi{Sales}~$~\\
  \hline\hline
  ~Shop1~ & ~HB~  & ~12~ \\
  \hline
  ~Shop2~ & ~HB~  & ~53~ \\
  \hline
  ~Shop3~ & ~HB~  & ~3~ \\
  \hline
  ... & ... & ... \\
\end{tabular}
\end{figure}

The typability of $\dcnet$ is formally stated in Fact~\ref{fct:tpdcnet}. 
\begin{fact}\label{fct:tpdcnet}
  Suppose 
  \small
  $$
  \begin{array}{r@{~\!}l}
  \dcnab= & [{\sf Stores}\mapsto \strtp\times\strtp\times\strtp\times(\msett{\idtp})\times\loctp] \\
          & [{\sf KLD}\mapsto \strtp\times\strtp\times\strtp\times\strtp\times\strtp\times\inttp\times\inttp] \\
          & [\restbid\mapsto \strtp\times\strtp\times\inttp]. 
  \end{array}
  $$
  \normalsize
  We have $\jdgtN{\emptyTPEnv,\dcnab}{\dcnet}$, 
    given $\forall i\in\{1,...,n\}: \jdgtC{\emptyTPEnv,\dcnab}{C'_i}$ and $\jdgtC{\emptyTPEnv,\dcnab}{C''_0}$. 
\end{fact}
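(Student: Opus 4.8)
The plan is to proceed top-down through the typing rules, reducing $\jdgtN{\emptyTPEnv,\dcnab}{\dcnet}$ to a few obligations that are either exactly the stated hypotheses or routine bookkeeping, thereby isolating the body of the procedure $\mi{stat}$ as the only real case. First I would strip off the net- and component-level structure: repeated use of the rules for $\PARNET{N_1}{N_2}$, for $\LOCATED{\llocConst}{}{C}$, and for $\PAR{C_1}{C_2}$ reduces the goal to (i) $\jdgtE{\emptyTPEnv}{\ttloc_j}{\loctp}$ for each $j$ (immediate, since localities are constants); (ii) $\jdgtC{\emptyTPEnv,\dcnab}{(I_j,R_j)}$ for each $j$; (iii) $\jdgtC{\emptyTPEnv,\dcnab}{\mi{stat}}$; and (iv) $\jdgtC{\emptyTPEnv,\dcnab}{C'_i}$ for $i\in\{1,...,n\}$ together with $\jdgtC{\emptyTPEnv,\dcnab}{C''_0}$, which are precisely the assumptions of the Fact. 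For (ii), the component rule for a concrete table asks for $\dcnab(I_j.\TBID)=I_j.\TBSK$ --- which holds because $I_0.\TBID={\sf Stores}$, $I_j.\TBID={\sf KLD}$ for $j\ge 1$, and $\dcnab$ maps these two identifiers to exactly the schemas fixed in Section~\ref{sec:syntax-example} --- and for $\jdgtt{\emptyTPEnv}{t}{I_j.\TBSK}$ for every $t\in R_j$, which is the standing assumption that the stored data conform to the declared schemas (each $t$ being a constant tuple, this is a direct application of the tuple and expression rules).

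The substance is (iii). The component $\mi{stat}$, being a procedure invocation $\mi{stat}()$ with empty argument list, is typed by the process rules, and since $\mi{stat}()\triangleq P$ the procedure-call rule with $n=0$ reduces the obligation to $\jdgtP{\emptyTPEnv,\dcnab}{P}$, where $P$ is the prefix chain of Figure~\ref{fig:case_study}. I would then type $P$ from left to right with the rule for $a.P$, threading the typing environment accumulated from the derived environments of earlier prefixes. The $\CREATENEW{\restbid@\ttloc_0}{\strtp\times\strtp\times\inttp}$ prefix needs $\dcnab(\restbid)=\strtp\times\strtp\times\inttp$ (true by construction of $\dcnab$) and derives $\emptyTPEnv$. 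The ${\sf select}$ prefix needs ${\sf Stores}@\ttloc_0$ to have schema $\dcnab({\sf Stores})$, and --- the source list being a singleton, so that $\flattensk$ acts as the identity here --- needs $\jdgtT{\dcnab({\sf Stores})}{(!x,!y,!z,!w,!p)}{\Gamma'}$ with $\Gamma'=[x:\strtp,y:\strtp,z:\strtp,w:\msett{\idtp},p:\loctp]$, well-typedness of ${\sf KLD}\in w\land x=``\mi{CPH}''$ under $\ext{\emptyTPEnv}{\Gamma'}$ (here ${\sf KLD}$ has type $\idtp$, $w$ type $\msett{\idtp}$, $x$ type $\strtp$, and $``\mi{CPH}''$ is a string), and $\jdgtt{\ext{\emptyTPEnv}{\Gamma'}}{(z,p)}{\strtp\times\loctp}$; it derives $[\TBV:\strtp\times\loctp]$. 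The remainder of $P$ --- a ${\sf foreach}$ loop over $\TBV$, some omitted operations on the result table, and the $\DROP{\restbid}{\ttloc_0}$ prefix --- is a sequential composition, each summand of which is typed in the environment extended by $[\TBV:\strtp\times\loctp]$ (local scoping means the left summands of a $;$ do not extend the environment seen by the right ones). In the ${\sf foreach}$ loop, $\TBV$ is read off that environment with schema $\strtp\times\loctp$, the template $(!q,!u)$ is typed under it to yield $[q:\strtp,u:\loctp]$, $\TRUE$ is disposed of trivially, and the body is typed under the environment extended by $[q:\strtp,u:\loctp]$. Inside the body, the ${\sf aggr}$ prefix types $(!\mi{id},!\mi{tp},!\mi{yr},!\mi{cr},!\mi{sz},!\mi{is},!\mi{ss})$ under $\dcnab({\sf KLD})$, types $\mi{tp}=``\mi{HB}''$, checks the side condition $\SUM_7:\msett{\dcnab({\sf KLD})}\rightarrow\inttp$ --- which the definition $\SUM_7=\lambda R.(\SUM(\{v_7|(v_1,...,v_7)\in R\}))$ confirms --- types $!res$ under $\inttp$, and derives $[res:\inttp]$; the following $\INS{\restbid}{(q,``\mi{HB}'',res)}{\ttloc_0}$ prefix then needs the tuple $(q,``\mi{HB}'',res)$ to have type $\strtp\times\strtp\times\inttp$, which holds since $q:\strtp$, $``\mi{HB}''$ is a string, and $res:\inttp$. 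The trailing $\NILP$, the $\DROP{\restbid}{\ttloc_0}$ prefix (which imposes only that $\ttloc_0$ is a locality), and the final $\NILP$ are immediate, and the ellipsis of Figure~\ref{fig:case_study} is assumed to stand for a process typable in the ambient environment.

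The step I expect to require the most care is exactly this walk through $P$: keeping the accumulating typing environment straight across the prefixes and across the scope boundary introduced by the ${\sf foreach}$ loop, and verifying the side conditions of the ${\sf select}$ and ${\sf aggr}$ rules --- in particular that $\SUM_7$ carries precisely the arrow type $\msett{\dcnab({\sf KLD})}\rightarrow\inttp$ demanded by the aggregation rule, and that the (flattened) product of the one-element source list in the ${\sf select}$ action is exactly $\dcnab({\sf Stores})$, so that the template typing goes through and produces the $\Gamma'$ above. None of this is deep, but it is precisely where any mismatch between $\dcnab$ and the schemas appearing in Figure~\ref{fig:case_study} would surface.
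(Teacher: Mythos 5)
Your proposal is correct and follows essentially the same route as the paper's own proof: reduce the net- and component-level goals to typing the tables $(I_j,R_j)$ (via $\dcnab(I_j.\TBID)=I_j.\TBSK$ and the data-conformance assumption) and to $\jdgtP{\emptyTPEnv,\dcnab}{\mi{stat}}$, then walk through the prefixes of the procedure body deriving exactly the environments $[x:\strtp,y:\strtp,z:\strtp,w:\msett{\idtp},p:\loctp]$, $[\TBV:\strtp\times\loctp]$, $[q:\strtp,u:\loctp]$, and $[\mi{res}:\inttp]$, checking $\SUM_7:\msett{\dcnab({\sf KLD})}\rightarrow\inttp$, and finishing with the prefixing, sequential-composition, and procedure-call rules. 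The only cosmetic difference is that the paper silently treats the elided ``$...$'' as absent rather than assuming it typable, which does not affect the argument.
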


\begin{exa}
We briefly go through the typing of $\dcnet$. 

It can be established for all $j\in\{0,1,...,n\}$ that $\jdgtC{\emptyTPEnv,\dcnab}{(I_j,R_j)}$.
First of all, with the schemas of these tables given at the end of Section~\ref{sec:syntax}, 
we have $\dcnab(I_j.\TBID)=I_j.\TBSK$. 
Second of all, by the assumption made in the same section that 
the data stored in these tables have the types specified in the schemas, 
we have $\forall t\in R_j:\jdgtt{\emptyTPEnv}{t}{I_j.\TBSK}$ for all $j$ in range. 

It is not difficult to see that under 
the assumed well-typedness of the components $C'_1$, ..., $C'_n$ and $C''_0$, 
it boils down to establishing $\jdgtP{\emptyTPEnv,\dcnab}{\mi{stat}}$, 
in order to obtain $\jdgtN{\emptyTPEnv,\dcnab}{\dcnet}$. 

For the action creating the result table we have 
\small
$$
\Inference
{\jdgtE{\emptyTPEnv}{\ttloc_0}{\loctp}}
{\jdgta{\emptyTPEnv,\dcnab}{\CREATENEW{{\restbid}@\ttloc_0}{\strtp\times\strtp\times\inttp}}{\emptyTPEnv}} 
$$
\normalsize
since it holds that \small $\strtp\times\strtp\times\inttp=\dcnab(\restbid)$. \normalsize

For the action selecting the pairs of shop names and localities, we have 
\small
$$
\Inference
{
  \begin{array}{l}
  \jdgtt{\emptyTPEnv,\dcnab}{{\sf Stores}@\ttloc_0}
  {\strtp\times\strtp\times\strtp\times(\msett{\idtp})\times\loctp} \\
  \strtp\times\strtp\times\strtp\times(\msett{\idtp})\times\loctp\vdash (!x,!y,!z,!w,!p) \smalltriangleright \\
  \qquad\qquad\qquad\qquad\qquad\qquad\qquad [x:\strtp,y:\strtp,z:\strtp,w:\msett{\idtp},p:\loctp] \\
  \jdgtpred{[x:\strtp,y:\strtp,z:\strtp,w:\msett{\idtp},p:\loctp]}{{\sf KLD}\in w \land x=``CPH"} \\
  \jdgtt{[x:\strtp,y:\strtp,z:\strtp,w:\msett{\idtp},p:\loctp]}{(z,p)}{\strtp\times\loctp} 
  \end{array}
}
{
  \begin{array}{l}
  {\emptyTPEnv,\dcnab}\vdash
  {\SEL{{\sf Stores}@\ttloc_0}{(!x,!y,!z,!w,!p)}{{\sf KLD}\in w\land x=``CPH''}{(z,p)}{!\TBV}}
  \smalltriangleright\\
  \qquad\qquad\quad\! [\TBV:\strtp\times\loctp]
  \end{array}
}
$$
\normalsize\smallskip

For the aggregation action producing the sales figures, we have 
\small
$$
\Inference
{
  \begin{array}{l}
    \dcnab({\sf KLD})\vdash (!id,!tp,!yr,!cr,!sz,!is,!ss) \smalltriangleright \\
      \qquad\quad [id:\strtp,tp:\strtp,yr:\strtp,cr:\strtp,sz:\strtp,is:\inttp,ss:\inttp] \\
    {[}\TBV:\strtp\times\loctp,q:\strtp,u:\loctp,id:\strtp,tp:\strtp, \\
      \qquad\quad yr:\strtp,cr:\strtp,sz:\strtp,is:\inttp,ss:\inttp] \vdash 
                          tp=``HB"\smalltriangleright\booltp \\
    \mi{sum}_7:\msett{(\strtp\times\strtp\times\strtp\times\strtp\times\strtp\times\inttp\times\inttp)}\rightarrow\inttp \\
    \jdgtT{\inttp}{!\mi{res}}{[\mi{res}:\inttp]} \qquad
    \jdgtE{[\TBV:\strtp\times\loctp,q:\strtp,u:\loctp]}{u}{\loctp}
  \end{array}
}
{
  \begin{array}{l}
  [\TBV:\strtp\times\loctp,q:\strtp,u:\loctp], \dcnab \vdash \\
  \qquad\qquad\AGGR{{\sf KLD}}{(!id,!tp,!yr,!cr,!sz,!is,!ss)}{tp=``HB"}{\mi{sum}_7}{!res}{u}\smalltriangleright[\mi{res}:\inttp]
  \end{array}
}
$$
\normalsize\smallskip

For the action inserting the aggregation result, we have 
\small
$$
\Inference
{
  \jdgtt{[\TBV:\strtp\times\loctp,q:\strtp,u:\loctp,\mi{res}:\inttp]}
        {(q,``HB",\mi{res})}{\strtp\times\strtp\times\inttp} \\
  \jdgtE{[\TBV:\strtp\times\loctp,q:\strtp,u:\loctp,\mi{res}:\inttp]}{\ttloc_0}{\loctp}
}
{
  \begin{array}{l}
  {[\TBV:\strtp\times\loctp,q:\strtp,u:\loctp,\mi{res}:\inttp],\dcnab}\vdash \\
  \qquad\qquad\qquad\qquad\qquad\qquad\qquad\qquad
  {\INS{\restbid}{(q,``HB",\mi{res})}{\ttloc_0}}\smalltriangleright {\emptyTPEnv}
  \end{array}
}
$$
\normalsize
since it holds that \small $\strtp\times\strtp\times\inttp=\dcnab(\restbid)$. \normalsize

We can thus establish 
\small
$$
\jdgtP{[\TBV:\strtp\times\loctp,q:\strtp,u:\loctp],\dcnab}{\UL{\sf aggr}.\UL{\sf ins}.\NILP}
$$
\normalsize
where the underlined names abbreviate the actions under the same names in the procedure $\mi{stat}$. 

For the {\sf foreach} construct we have 
\small
$$
\Inference
{
  \begin{array}{c}
    \jdgtt{[\TBV:\strtp\times\loctp],\dcnab}{\TBV}{\strtp\times\loctp} \\
    \jdgtT{\strtp\times\loctp}{(!q,!u)}{[q:\strtp,u:\loctp]} \\
    \jdgtpred{[\TBV:\strtp\times\loctp,q:\strtp,u:\loctp]}{\TRUE} \\
    \jdgtP{[\TBV:\strtp\times\loctp, q:\strtp, u:\loctp], \dcnab}{\UL{\sf aggr}.\UL{\sf ins}.\NILP}
  \end{array}
}
{
  \jdgtP{[\TBV:\strtp\times\loctp], \dcnab}{\FOREACH{\TBV}{(!q,!u)}{\TRUE}{\emptyset}
  {\UL{\sf aggr}.\UL{\sf ins}.\NILP}}
}
$$
\normalsize\smallskip

The typing of the action dropping the result table can be easily established, i.e., 
\small
$$
\Inference
{\jdgtE{\emptyTPEnv}{\ttloc_0}{\loctp}}
{\jdgta{\emptyTPEnv,\dcnab}{\DROP{\restbid}{\ttloc_0}}{\emptyTPEnv}}
$$
\normalsize

Using the typing rule for prefixing repeatedly, and the rule for sequential composition, we can obtain:
\small
$$
\jdgtP{\emptyTPEnv,\dcnab}{\UL{\sf create}.\UL{\sf select}.\UL{\sf foreach};\UL{\sf drop}}
$$
\normalsize\smallskip

Finally, for the procedure call to $\mi{stat}$ passing no arguments, we have 

\small
$$
\Inference
{
  \jdgtP{\emptyTPEnv,\dcnab}{\UL{\sf create}.\UL{\sf select}.\UL{\sf foreach};\UL{\sf drop}}
}
{
  \jdgtP{\emptyTPEnv,\dcnab}{\mi{stat}}
}\eqno{\qEd}
$$
\normalsize
%\qed
\end{exa}

\begin{rem}
The statistics-gathering task performed by the procedure $\mi{stat}$ of Figure~\ref{fig:case_study}
can also be accomplished using \emph{code mobility} combined with local aggregations, 
instead of a (sequential) series of remote aggregations.
In more detail, the head office can maintain a topology of all the branches, 
and spawn multiple agents that are in charge of different subsets of the localities 
$\ttloc_1$, ..., $\ttloc_n$. 
These agents then travel from branch to branch according to the topology  
(potentially stored in tables) of which they are informed, 
aggregating the sales figures from each branch they arrive at, 
that does indeed sell shoes of the brand KLD. 
The agents can either insert the result obtained from each locality remotely into 
a designated result table at $\ttloc_0$, 
or return to $\ttloc_0$ in the end, when their results are gathered.
\end{rem}

%%% Local Variables:
%%% mode: latex
%%% TeX-master: "paper"
%%% End:

% \input{extension.tex}
%!TEX root = ./paper.tex

\section{Discussion and Related Work}\label{sec:related}

This section provides a discussion of our work around several topics, where we provide additional motivations for our design choices, related to similar works and argue about possible future developments.

\paragraph*{\bf Tuple-based coordination languages.}
%There are relatively few pieces of work closely related to ours. 
%
Our work studies what tuple-based coordination languages have to offer for distributed database systems. Our main sources of inspiration have been the family of such languages which includes Linda~\cite{Gelernter85_Generative}, Klaim~\cite{Nicola98_Klaim}, and SCEL, to mention a few.
The suitability of Linda~\cite{Gelernter85_Generative} to construct 
a software framework for distributed database systems was investigated in~\cite{Thirukonda2002_Linda}.   
Their approach, however, was not concerned with formal language design. 
Instead, the authors focus on a low-level representation of individual records in tables as tuples in the Linda tuple spaces, 
and propose the strategy of translating SQL queries into basic Linda operations such as output and input. 
Correspondingly, they considered some fine-grained issues such as data replication, concurrency control, 
and fault tolerance. 

A detailed discussion with respect to Klaim has been already provided in the Introduction.
We would like to mention here an implementation-oriented extension of Klaim, namely X-Klaim~\cite{BDL06_XKlaim}. 
X-Klaim is a full-fledged programming language targeting mobile applications.
It provides language primitives for stronger forms of mobility --- 
migrating not only code but also the current execution state. 
Later in this section, we will envision an implementation of Klaim-DB that builds on the X-Klaim implementation. 

The Service Component Ensemble Language (SCEL)~\cite{NicolaLPT14_SCEL,DBLP:series/lncs/NicolaLLLMMMPTV15} can also be seen as an evolution of Klaim, aimed at 
facilitating the programming of software component ensembles
whose coordination patterns can adapt to environmental conditions.
A salient feature of SCEL is that components publish their own attributes through interfaces
and groups of components can be addressed in basic actions using predicates over these attributes in what the authors call \emph{attribute-based communication} (see also~\cite{DBLP:conf/forte/AlrahmanNL16}). 
The publication of table identifiers and schemas by interfaces of tables in Klaim-DB 
is analogous to the publication of attributes by interfaces of components in SCEL, although in our work we did not exploit such interface to provide richer attribute-based primitives. Further developments of Klaim-DB could indeed incorporate programming abstractions that may allow to operate on sets of databases without referring them by name but by their properties.

\paragraph*{\bf Interoperability.} 
A closely related topic that we considered during the design of Klaim-DB is that of interoperability among multiple databases and/or user applications (possibly Klaim-based). In this regard, there are two aspects that we would like to mention. First, concerning homogeneous interoperability among multiple databases, we think that the support of join operations in selection actions paves the way for the general ability to operate on multiple databases by a single action, 
which is in line with the design philosophy of multi-database systems~(e.g., \cite{Kuhn2000_MultiDB}). Second, heterogeneous interoperability between database systems and user applications that are not data-centric 
can also be realized by bringing back the original Klaim primitives 
(such as ${\sf out}(t)@\ell$ and ${\sf in}(T)@\ell$) and 
allowing the co-existence of tables and plain tuples. 

\paragraph*{\bf Pattern matching and predicates.}
Our preliminary work~\cite{WuLLNN15_KlaimDB} considered templates featuring both formal and actual fields and was in line with Klaim templates and the database query language QBE (Query by Example~\cite{Zloof75_QBE}). 
In the present work templates are restricted to formal fields only, since the combined power of pattern matching with such templates and predicates provides a more powerful mechanism with respect to the bare pattern matching used, for instance, in Linda and Klaim variants (e.g. it is not possible to match tuples with distinct values in the fields). Moreover, \emph{conditional} pattern matching is present in database query languages like SQL (through the \texttt{WHERE} clause). 

\paragraph*{\bf Multiset semantics and temporary tables}
Our use of multiset operations for the semantics of select actions 
has the flavor of the Domain Relational Calculus underlying QBE~\cite{Zloof75_QBE}.  
Concerning the treatment of the result of such operations, 
an alternative that we initially considered but did not adopt is the direct placement of the result in a separate table. 
This table would either be created automatically by the selection operation itself, 
with an identifier specified in the selection action, 
or a designated ``result table'' --- one at each locality. 
However, a problem with this option is that the removal of the automatically created tables would need to be taken care of 
by the system designer making the specification, using $\DROP{\TBID}{\ell}$ actions. 
Similar problems would arise with the maintenance of the designated ``result tables'' 
(e.g., the alteration of its schema, the cleaning of old results, etc.). 
To abstain from these low-level considerations, table variables were finally introduced and binding is used for the selection results.

\paragraph*{\bf Formal approaches to query languages.}
%The central aim of Klaim-DB is to support the structure and operations of distributed database systems.
In Klaim-DB, the DB-oriented actions correspond tightly to SQL-like queries. 
In the literature, formalizations of the relational database model and SQL exist. 
One piece of work along this direction is \cite{Malecha2010_Toward}. 
In \cite{Malecha2010_Toward}, the syntax and semantics of two different query formalisms, 
\emph{relational algebra} and \emph{conjunctive queries}, are specified, 
some existing methods of logical query optimization are dealt with, 
and the soundness and/or completeness of two existing procedures for inferring integrity constraints
are proved. 
Another recent development, \cite{Benzaken2014_CoqRelational}, focuses on a certified realization of 
core SQL operations. 
The relational algebra is used to define the semantics of certain key SQL operations. 
Some source-to-source optimizations of SQL queries are proven to be semantically preserving 
and crude accounts of correspondences in running times are given. 
Lastly, an efficient implementation of the SQL model considered is performed with the Ynot extension to Coq,  
and is proved to satisfy its abstract specification. 
The main differences with respect to our work is that~\cite{Malecha2010_Toward} and~\cite{Benzaken2014_CoqRelational} do not 
consider \emph{distributed} databases, and are not concerned with the kind of properties our type system deals with. 

\paragraph{\textbf{Security.}} Security is an important concern, especially in a distributed setting. 
A \emph{basic step} towards security is that of preventing the exploitation of \emph{unsafe} language operations by a malicious user.  
An example of this kind with programming languages is unsafe C library functions giving rise to buffer overflow exploits.
Providing safe operations needs to be considered along with the core language design.
In our case this amounts to catching the insertion of mis-formatted records or the use of undefined variables, to mention a few, 
in the formal semantics of Klaim-DB and its type system. 
A \emph{further step} towards security is the provision of abilities for access control and information flow control. 
For instance, when inserting data into a remote table, it is important to know 
whether the current locality trusts the remote locality with respect to confidentiality, 
and whether the remote locality trusts the current locality with respect to integrity. 
In addition, information should not be directly or indirectly leaked to or influenced from, unauthorized localities. 
We expect that access control mechanisms can be devised for Klaim-DB using security type systems or flow logics 
following the paths taken by \cite{Nicola98_Klaim} and \cite{NicolaGHNNPP10_FLTP}. 
On the other hand, information flow mechanisms can be devised based on the insights in 
\cite{TolstrupNH06_Locality-Based}, which enforces locality-based information flow policies in Klaim, and 
\cite{Lourenco13_Compartments}, which is concerned with information flow security for SQL-like language primitives.

\paragraph{\textbf{Implementation.}}
While our focus was on developing a modelling language,
it is worth discussing how the Klaim-DB could be turned into a domain-specific programming language
following the inspiration of Klaim and its counterpart X-Klaim. 
As a matter of fact, we envision an implementation of the Klaim-DB language, 
building on the implementation~\cite{BDL06_XKlaim} of X-Klaim.  
The latter consists of a compiler that translates X-Klaim programs into Java programs
that make use of the Klava package~\cite{BDP02_Klava} for tuple-based operations. 

An important part of the implementation would be the Klaim-DB compiler, 
which would extend the X-Klaim compiler and turn a Klaim-DB specification into a Java program. 
The Java program would ``execute'' the specification; 
hence it would perform operations on \emph{tables} according to the Klaim-DB semantics. 
The Klava package would need to be extended to support these operations. 
For instance, classes \texttt{Table}, \texttt{Schema}, etc., 
would need to be created to represent tables, schemas, etc., and 
the methods of these classes would correspond to the table operations. 
The side conditions of the semantic rules can be easily checked by generating appropriate Java code. 
Because of the correspondence between the DB-oriented actions in Klaim-DB and SQL queries
described in Section~\ref{sec:syntax}, 
we would produce the query results by
constructing SQL objects and queries via a Java API for database connectivity such as JDBC.
Hence classes supporting the conversion between objects of the classes \texttt{Table}, \texttt{Schema}, etc., 
and SQL objects would be included in the extended Klava package.  
Since a SQL query is not always executed as an atomic operation, 
the locking mechanism of Java would be needed. 
To sum up, the Java code to be generated
for executing a DB-oriented action would accomplish the following: 
\begin{itemize}
	\item Lock the table objects involved in the action; 
	\item Check the side conditions specified in the semantics;
	\item Convert table objects involved in the action to SQL objects;
	\item Perform the query by interfacing with a SQL engine via JDBC;
	\item Convert the results back to internal objects;
	\item Unlock relevant objects.
\end{itemize}

The type checking of Klaim-DB specifications, on the other hand, 
can be performed directly on the AST (Abstract Syntax Tree)
obtained by parsing the Klaim-DB specification in the Klaim-DB compiler.

\paragraph{\textbf{Transactions.}}
In supporting database operations, 
the Klaim-DB language is positioned between succinct core calculi such as the relational algebra, 
and full-fledged query languages such as SQL, 
to facilitate both the formalization\footnote{Note that formalizations of SQL exist but often do not cover transactions~\cite{Gogolla94_FormalSQL,Negri91_FormalSQL}.} of its semantics, typing, etc., and 
its utilization in modeling tasks. 
Corresponding to its level of abstraction, 
\emph{high-level atomic operations} are usually used in Klaim-DB to achieve tasks
that need to be accomplished with transactions in other languages such as SQL. 
Nevertheless, Klaim-DB can be extended in the future with explicit transactions, 
to support use cases such as where a user creates a table, 
but does not want other users to have access to it before she has inserted all her desired records into it. 
In general, an extension to provide full support of transactions needs not only to provide
appropriate locking mechanisms (as language primitives), 
but also to support \emph{rollbacks} to recover from inconsistent states.

%%% Local Variables:
%%% mode: latex
%%% TeX-master: "paper"
%%% End:

%!TEX root = ./paper.tex

\section{Conclusion}\label{sec:conclusion}

In answer to the challenge of easily and correctly structuring, managing, and utilizing distributed databases, 
we have presented the design of a coordination language, Klaim-DB. 
The central notion of data in Klaim-DB is that of localized databases. 
Primitives for data definition and manipulation are provided at a high abstraction level, 
guarding against misuses of programming constructs and subtle concurrency errors in specifications. 
The correctness of data formats is checked in the semantics, 
and to a great extent statically guaranteed by a type system, 
which allows efficient type checking. 
We have modeled in the language a data aggregation task performed across geographically scattered databases, 
guided by a coordinator.

%%% Local Variables:
%%% mode: latex
%%% TeX-master: "paper"
%%% End:

\section*{Acknowledgement}
We would like to thank the anonymous reviewers for their useful comments.

\bibliographystyle{plain}
\bibliography{references}

\appendix
%!TEX root = ./paper.tex

\section{Multiset Notation}\label{app:multiset}

We use $\uplus$, $\cap$ and $\setminus$ to represent the union, intersection and substraction, respectively, of multisets. 
For a multiset $S$, and an element $s$, the application $M(S,s)$ of the multiplicity function $M$ gives the number of repetitions of $s$ in $S$. 
Note that for $s\not\in S$, $M(S, s)=0$. 
Then our notions of union, intersection and subtraction are such that
$$
\begin{array}{l}
  M(S_1\uplus S_2,s)= M(S_1,s)+M(S_2,s) \\
  M(S_1\cap S_2, s)= \mathit{min}(M(S_1,s), M(S_2,s)) \\
  M(S_1\setminus S_2, s)= \mi{max}(M(S_1,s)-M(S_2,s),0)
\end{array}
$$

\section{Proofs}\label{app:proofs}

\newcounter{mycounter}  
\newenvironment{mylist1}
{\begin{list}{\arabic{mycounter}.~~}{\usecounter{mycounter} \labelsep=0em \labelwidth=0em \leftmargin=1em \parsep=.5em \itemindent=0em \labelsep=5pt}}
  {\end{list}}

\newenvironment{proofsketch}{%
  \renewcommand{\proofname}{Sketch of Proof}\proof}{\endproof}

\newenvironment{proofidea}{%
  \renewcommand{\proofname}{Proof Idea}\proof}{\endproof}

We first restate Lemma~\ref{lem:no_rep} and give its proof. 

\begingroup
\def\thethm{\ref{lem:no_rep}}
\begin{lem}
If $\NOREP{\LIDN{N}\uplus \LIDN{\envnet}}$ and $\envnet\vdash\TRANS{N}{N'}$, 
then it holds that $\NOREP{\LIDN{N'}\uplus \LIDN{\envnet}}$.
\end{lem}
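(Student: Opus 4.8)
The plan is to prove this by induction on the derivation of the transition $\envnet\vdash\TRANS{N}{N'}$, following the structure of the semantic rules in Figures~\ref{fig:sem_ins_del}, \ref{fig:sem_sel_upd_aggr}, \ref{fig:sem_create_drop} and \ref{fig:semantics_cont}. The key observation is that the multiset $\LIDN{N}$ — the collection of all $(\llocConst, \TBID)$ pairs occurring in $N$ — is left unchanged by every rule except those that create or drop tables, and the {\sf select} rule which produces a fresh temporary table whose identifier is $\bot$ (and hence does not contribute to $\LIDN{-}$, since $\LIDC{\llocConst}{(I,R)} = \{(\llocConst, I.id)\}$ records the identifier regardless, but a $\bot$ identifier cannot clash with a genuine identifier — this is the one point to check carefully, and it may be cleanest to observe that $\bot \notin \dom{\itfc}$ and that the statement as phrased concerns arbitrary repetition, so a $\bot$ entry could in principle repeat; I would handle this by noting temporary tables are always consumed by substitution into $\TBV$ rather than placed as located components, so {\sf select} does not in fact add to $\LIDN{N}$).

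First I would dispatch the ``easy'' rules: $\REDINS$, $\REDDEL$, $\REDUPD$, $\REDAGGR$, $\REDEVALP$, $(\mrm{CALL})$, $\REDFORTT$, $\REDFORFF$, $\REDSEQTT$, $\REDSEQFF$. For each of these, one checks directly from the definition of $\LIDN{-}$ and $\LIDC{\llocConst}{-}$ that $\LIDN{N'} = \LIDN{N}$: none of these rules adds, removes, or relocates a concrete table $(I,R)$ — insertion, deletion, update and aggregation only modify the data set $R$, which is invisible to $\LIDN{-}$; code mobility, procedure call, sequencing and looping only rewrite processes, which contribute $\emptyset$. Hence the conclusion follows trivially from the hypothesis. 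For $\REDDROP$, we have $\LIDN{N'} = \LIDN{N} \setminus \{(\llocConst_2, \TBID)\}$ (dropping can only remove an element), and removing an element from a multiset with no repetition yields a multiset with no repetition, so $\NOREP{\LIDN{N'}\uplus\LIDN{\envnet}}$ follows from $\NOREP{\LIDN{N}\uplus\LIDN{\envnet}}$.

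The genuinely substantive case is $\REDCREATE$. Here $\LIDN{N'} = \LIDN{N} \uplus \{(\llocConst_2, \TBID)\}$ in the branch where the table is actually created, and $\LIDN{N'} = \LIDN{N}$ in the ``otherwise'' branch (where creation is skipped). In the latter branch the result is immediate. In the former branch, the side condition of the rule is precisely $(\llocConst_2, \TBID) \notin \LIDN{\envnet}$; combined with the hypothesis $\NOREP{\LIDN{N}\uplus\LIDN{\envnet}}$ this is not quite enough by itself, since we also need $(\llocConst_2,\TBID) \notin \LIDN{N}$. This is where I expect the main obstacle to lie, and I would resolve it using the structure of $\REDCREATE$ together with $\REDPAR$ and $\REDEQUIV$: the rule $\REDCREATE$ fires on a net of the form $\LOCATED{\llocConst_1}{}{\CREATENEW{\TBID@\llocConst_2}{\TBSK}.P}~||~\LOCATED{\llocConst_2}{}{\NILP}$, and any surrounding parallel context is collected into $\envnet$ by the use of $\REDPAR$ (recall from the discussion after Figure~\ref{fig:semantics_cont} that one first uses structural congruence to bring the whole net into the shape $N_1 || N_2$ with $N_1$ the active part and $N_2$ everything else, and $\REDPAR$ puts $N_2$ into the environment). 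So at the point $\REDCREATE$ applies, $\LIDN{N}$ contains only the contributions of the two displayed located components, namely $\emptyset$ (the process) and $\emptyset$ (the $\NILP$), so $(\llocConst_2,\TBID)\notin\LIDN{N}$ holds vacuously — but then for the full net I would invoke the induction step for $\REDPAR$ and $\REDEQUIV$ below rather than re-deriving it here.

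Finally, the closure rules $\REDPAR$, $\REDRES$, and $\REDEQUIV$ are handled by the induction hypothesis. For $\REDRES$, $\LIDN{\RESNET{\llocConst}{N}} = \LIDN{N}$ by definition, so the statement for $\RESNET{\llocConst}{N}$ reduces literally to the statement for $N$. For $\REDEQUIV$, one needs the auxiliary fact that $N_1 \equiv N_2$ implies $\LIDN{N_1} = \LIDN{N_2}$, which is checked by a routine induction on the derivation of $\equiv$ against the rules of Figure~\ref{fig:congruence} (each congruence rule preserves the multiset of located table identifiers — commutativity and associativity of $||$ correspond to commutativity and associativity of $\uplus$, scope extrusion does not touch located components, and $\alpha$-renaming of restricted localities does not affect free localities of tables); then the result transfers directly. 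For $\REDPAR$, with $\vdash N_1||N_2 \rightarrow N_1'||N_2$ derived from $N_2 \vdash N_1 \rightarrow N_1'$, I would argue: the hypothesis $\NOREP{\LIDN{N_1||N_2}\uplus\LIDN{\NILN}} = \NOREP{\LIDN{N_1}\uplus\LIDN{N_2}}$ gives in particular $\NOREP{\LIDN{N_1}\uplus\LIDN{N_2}}$, which is the induction hypothesis precondition for the sub-derivation (with environment $N_2$), yielding $\NOREP{\LIDN{N_1'}\uplus\LIDN{N_2}} = \NOREP{\LIDN{N_1'||N_2}\uplus\LIDN{\NILN}}$, as required. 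This is also the place where the delicate $\REDCREATE$ point is actually settled: when the inner transition is an application of $\REDCREATE$, the side condition $(\llocConst_2,\TBID)\notin\LIDN{N_2}$ together with $(\llocConst_2,\TBID)\notin\LIDN{N_1}$ (the latter holding because $N_1$ is just the active pair of components) gives $(\llocConst_2,\TBID)\notin\LIDN{N_1}\uplus\LIDN{N_2}$, so adding it once preserves no-repetition.
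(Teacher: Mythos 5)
Your overall strategy coincides with the paper's: an induction on the derivation of $\envnet\vdash\TRANS{N}{N'}$, observing that the redex-level rules either leave $\LIDN{-}$ unchanged (or shrink it), that $\REDCREATE$ adds $(\llocConst_2,\TBID)$ with multiplicity one exactly when the side condition $(\llocConst_2,\TBID)\not\in\LIDN{\envnet}$ guarantees no clash (the redex itself contributing $\emptyset$), and that $\REDPAR$, $\REDRES$, $\REDEQUIV$ are discharged by the induction hypothesis. Your handling of $\REDCREATE$ and $\REDPAR$ is essentially the paper's argument. Two remarks on accuracy: in the action rules the error branch yields $N'=\ErrNet$ with $\LIDN{\ErrNet}=\emptyset$, so the claim $\LIDN{N'}=\LIDN{N}$ is not literally true there, but this only shrinks the multiset and is harmless.

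There is, however, one step that fails as stated: in the $\REDEQUIV$ case you rely on the auxiliary fact that $N_1\equiv N_2$ implies $\LIDN{N_1}=\LIDN{N_2}$, justified by saying that $\alpha$-renaming of restricted localities ``does not affect free localities of tables''. But $\LIDN{-}$ is not restricted to free localities: by definition $\LIDN{\RESNET{\llocConst}{N}}=\LIDN{N}$, so pairs whose locality is bound by a restriction are retained. For instance $\RESNET{\llocConst}{\LOCATED{\llocConst}{}{(I,R)}}\equiv_\alpha\RESNET{\llocConst'}{\LOCATED{\llocConst'}{}{(I,R)}}$, yet the two nets have different $\LIDN{-}$ multisets. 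Hence $\LIDN{-}$ is not invariant under structural congruence when $\alpha$-renaming is involved, and your auxiliary lemma is false. The paper repairs exactly this point by splitting the $\REDEQUIV$ case: when no $\alpha$-renaming is involved, $\LIDN{N_1}=\LIDN{N_2}$ indeed holds and the argument is as you give it; when $\alpha$-renaming is involved, one invokes the global assumption that bound localities are distinctly named, so the renamed localities cannot clash with anything in $N_2$ or $\envnet$, and $\NOREP{\LIDN{N_2}\uplus\LIDN{\envnet}}$ is still deducible from $\NOREP{\LIDN{N_1}\uplus\LIDN{\envnet}}$ even though the multisets differ. Your proof goes through once this case is patched in that way; the rest matches the paper.
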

\addtocounter{thm}{-1}
\endgroup

\begin{proof}

Assume that $\NOREP{\LIDN{N}\uplus\LIDN{\envnet}}$ holds. 
We proceed with an induction on the derivation of $\envnet\vdash\TRANS{N}{N'}$ 
  to establish $\NOREP{\LIDN{N'}\uplus\LIDN{\envnet}}$. 

\begin{mylist1}
    
\item[\CASE $\REDINS$, \CASE $\REDDEL$, \CASE $\REDUPD$, and \CASE $\REDAGGR$:]{~}\\
  we have $\LIDN{N}=\LIDN{N'}=\{(l_2,I.\TBID)\}$ if $N'\neq\ErrNet$, 
  and $\LIDN{N'}=\emptyset$ if $N'=\ErrNet$. 
  Hence $\NOREP{\LIDN{N'}\uplus\LIDN{\envnet}}$ holds. 

\item[\CASE $\REDSEL$:]
  We have $\LIDN{N}=\LIDN{N'}=\{(l_1,I_1.\TBID),...,(l_k,I_k.\TBID)\}$ if $N'\neq\ErrNet$, 
  and $\LIDN{N'}=\emptyset$ if $N'=\ErrNet$. 
  Hence $\NOREP{\LIDN{N'}\uplus\LIDN{\envnet}}$ holds. 

\item[\CASE $\REDDROP$, \CASE $\REDEVALP$, \CASE $\REDFORTT$, \CASE $\REDFORFF$, and \CASE $(\mrm{CALL})$:]
  {~}\\
  We have $\LIDN{N'}=\emptyset$.
  By $\NOREP{\LIDN{N}\uplus\LIDN{\envnet}}$, we have $\NOREP{\LIDN{\envnet}}$. 
  Hence $\NOREP{\LIDN{N'}\uplus\LIDN{\envnet}}$ holds. 

\item[\CASE $\REDCREATE$:] 
  If $(l_2,\TBID)\not\in\LIDN{\envnet}$, 
  then since $\LIDN{N'}$ is the multi-set whose only element is $(l_2,\TBID)$, 
  with multiplicity $1$, we have $\NOREP{\LIDN{N'}\uplus\LIDN{\envnet}}$. 
  If $(l_2,\TBID)\in\LIDN{\envnet}$, then $\LIDN{N'}=\emptyset$. 
  Hence $\NOREP{\LIDN{N'}\uplus\LIDN{\envnet}}$ also holds. 

\item[\CASE $\REDSEQTT$:]
  We have $\LIDN{\llocConst:: \SEQ{P'_1}{P_2}||N}=\LIDN{\llocConst:: \SEQ{P_1}{P_2}||N}$.

\item[\CASE $\REDSEQFF$:] Analogous to Case $\REDSEQTT$.

\item[\CASE $\REDPAR$:]
  The net $N$ is $\PARNET{N_1}{N_2}$ and $N'$ is $\PARNET{N'_1}{N_2}$. 
  We have $\NOREP{\LIDN{N_1||N_2}\uplus\LIDN{\NILN}}$, which implies $\NOREP{\LIDN{N_1}\uplus\LIDN{N_2}}$. 
  By the induction hypothesis for $N_2\vdash N_1\rightarrow N'_1$, we have $\NOREP{\LIDN{N'_1}\uplus\LIDN{N_2}}$. 
  Therefore we also have\\ $\NOREP{\LIDN{N'_1||N_2}\uplus\LIDN{\NILN}}$. 

\item [\CASE $\REDRES$:] Trivial by using the induction hypothesis.

\item[\CASE $\REDEQUIV$:] 
  If neither of the equivalences, $N_1\equiv N_2$ and $N_3\equiv N_4$, 
  is concerned with $\alpha$-renaming, 
  then it is obvious that $\LIDN{N_1}=\LIDN{N_2}$ and $\LIDN{N_3}=\LIDN{N_4}$. 
  Using the \IH it is straightforward to derive $\NOREP{\LIDN{N_4}\uplus\LIDN{\envnet}}$. 

  Suppose $\alpha$-renaming is involved in $N_1\equiv N_2$. 
  By the global assumption that bound localities should be distinctly named,
  no locality name can be bound in both $N_2$ and $\envnet$. 
  By $\NOREP{\LIDN{N_1}\uplus \LIDN{\envnet}}$, 
  we also have $\NOREP{\LIDN{N_2}\uplus\LIDN{\envnet}}$. 
  By the \IH we have $\NOREP{\LIDN{N_3}\uplus\LIDN{\envnet}}$. 
  By similar reasoning with $N_3\equiv N_4$ we can deduce $\NOREP{\LIDN{N_4}\uplus\LIDN{\envnet}}$. 

\end{mylist1}
This completes the proof. 
\end{proof}

We proceed with proving Theorem~\ref{thm:subj_red}. 

\begin{lem} 
  If $N_1\equiv_\alpha N_2$, then $\jdgtN{\Gamma,\itfc}{N_1}\Leftrightarrow \jdgtN{\Gamma,\itfc}{N_2}$. 
\end{lem}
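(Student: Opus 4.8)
The plan is to reduce the statement to a \emph{substitution lemma} for locality constants and then to close under the congruence rules that generate $\equiv_\alpha$. For locality constants $l$ and $l'$, write $o[l'/l]$ for the object obtained by replacing every occurrence of $l$ by $l'$ throughout a syntactic object $o$; this affects only positions in which a locality can appear (the locality of a located component $\LOCATED{l}{}{\cdot}$, the locality of a table reference $\TBID@l$, the bound locality of a restriction, and the locality arguments of actions), while table identifiers, variables, literals and schemas remain untouched. First I would prove, by a single induction on the height of a typing derivation ranging simultaneously over all the judgment forms ($\jdgtE{\Gamma}{e}{\tau}$, $\jdgtpred{\Gamma}{\psi}$, $\jdgtt{\Gamma}{t}{\tau}$, $\jdgtT{\tau}{T}{\Gamma'}$, $\jdgtE{\Gamma,\itfc}{\CALTB}{\tau}$, $\jdgta{\Gamma,\itfc}{a}{\Gamma'}$, $\jdgtP{\Gamma,\itfc}{P}$, $\jdgtC{\Gamma,\itfc}{C}$ and $\jdgtN{\Gamma,\itfc}{N}$), that a derivable judgment remains derivable after $[l'/l]$ is applied to the typed object, keeping $\Gamma$, $\itfc$ and the derived environment unchanged. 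The only rules mentioning a locality constant are the axiom $\jdgtt{\Gamma}{l_0}{\loctp}$, valid for every locality constant $l_0$, and the side conditions of the form $\jdgtE{\Gamma}{\lloc}{\loctp}$ occurring in the rules for located components, for the table form $\TBID@\lloc$ and for the actions; since $l$ and $l'$ both have type $\loctp$, and $\itfc$ is consulted only on table identifiers (which $[l'/l]$ fixes), every premise of every rule remains valid and the derivation is rebuilt with the same shape. The converse implication follows by instantiating the same statement with the roles of $l$ and $l'$ exchanged, using $(o[l'/l])[l/l'] = o$ whenever $l'$ does not occur in $o$.

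For the lemma itself I would argue by induction on the derivation of $N_1\equiv_\alpha N_2$, taking $\equiv_\alpha$ to be the least congruence on nets generated by the renaming axiom $\RESNET{l}{N}\equiv_\alpha\RESNET{l'}{N[l'/l]}$ with $l'$ not occurring in $N$. For that axiom, the typing rule for restriction gives $\jdgtN{\Gamma,\itfc}{\RESNET{l}{N}}$ iff $\jdgtN{\Gamma,\itfc}{N}$, and $\jdgtN{\Gamma,\itfc}{\RESNET{l'}{N[l'/l]}}$ iff $\jdgtN{\Gamma,\itfc}{N[l'/l]}$; the substitution lemma, used in both directions (and using $(N[l'/l])[l/l']=N$, which holds since $l'$ is fresh for $N$), yields $\jdgtN{\Gamma,\itfc}{N}$ iff $\jdgtN{\Gamma,\itfc}{N[l'/l]}$, so the two restrictions are typable under the same conditions. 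Reflexivity is immediate, symmetry and transitivity are automatic because the conclusion is a biconditional, and for the closure under the net constructors $\PARNET{\cdot}{\cdot}$ and $\RESNET{l}{\cdot}$ the relevant typing rule breaks the judgment for the whole net into judgments for the immediate subterms, to which the induction hypothesis applies. If renaming of variables bound inside processes is also regarded as part of $\equiv_\alpha$, the argument is identical, using in place of the locality-substitution lemma the analogous property for renaming a bound variable to a fresh one, which is the standard companion of the substitution lemma required for Theorem~\ref{thm:subj_red}.

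I expect no genuine difficulty here: the entire content is the bookkeeping of the simultaneous induction in the first step over all syntactic categories. The one point demanding care is that the renaming in the $\equiv_\alpha$ axiom is capture-avoiding, i.e.\ the new name is genuinely fresh for the body of the restriction, so that the locality substitution is invertible and the substitution lemma can be invoked to obtain the converse implication as well.
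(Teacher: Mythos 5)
Your proposal is correct, but note that the paper does not actually prove this lemma at all: it is stated in the appendix as a bare auxiliary fact (like its companion, Lemma~\ref{lem:tp_equiv}), implicitly treated as evident because the typing rules never distinguish one locality constant from another. Your route makes the hidden content explicit, which is worthwhile: the key observation is exactly the one you isolate, namely that the only rules touching locality constants are the axiom $\jdgtt{\Gamma}{l}{\loctp}$ and side conditions $\jdgtE{\Gamma}{\lloc}{\loctp}$, while $\itfc$ is indexed by table identifiers only, so a capture-avoiding renaming $[l'/l]$ can be pushed through any typing derivation without changing $\Gamma$, $\itfc$ or the derived environment, and invertibility of the renaming (freshness of $l'$) gives the biconditional; closure under the congruence rules then follows because each net typing rule decomposes into judgments for the immediate subterms. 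One small imprecision worth fixing: locality constants are expressions in this language, so they may also occur inside tuples, predicates and the data sets $R$ of concrete tables, and an $\alpha$-renaming of a bound restriction locality must replace those occurrences too; this does not endanger your argument, since the same axiom types every locality constant as $\loctp$ (and $\llocConst\sattsk\loctp$ likewise), but your enumeration of the affected positions should include them. Similarly, for renaming of process-level bound variables you need the fresh name to be outside $\dom{\Gamma}$ so that the derived environments $[x':\tau]$ extend $\Gamma$ consistently; this is covered by the paper's global renaming-apart convention, but deserves a word. With those caveats, your argument discharges a proof obligation the paper simply leaves to the reader.
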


\begin{lem}\label{lem:tp_equiv}
  If $N_1\equiv N_2$, then $\jdgtN{\Gamma,\itfc}{N_1}\Leftrightarrow \jdgtN{\Gamma,\itfc}{N_2}$. 
\end{lem}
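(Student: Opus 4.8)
The plan is to prove Lemma~\ref{lem:tp_equiv} by structural induction on the derivation of $N_1\equiv N_2$, using the fact that $\equiv$ is defined as the smallest congruence relation generated by the axioms in Figure~\ref{fig:congruence}. Since the statement is symmetric (it is a biconditional), it suffices to handle each generating axiom in one direction; the reflexive, symmetric, transitive, and congruence-closure cases then follow routinely. For transitivity one chains the two biconditionals; for the congruence closure one observes that each typing rule for $N_1||N_2$, $\RESNET{l}{N}$, and $\LOCATED{l}{}{C}$ has premises depending only on the immediate subterms, so replacing a subterm by a typing-equivalent one (using the inductive hypothesis) preserves derivability. The $\alpha$-equivalence case is exactly the preceding unnamed lemma, which I may cite.

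The core work is the finite case analysis over the structural-congruence axioms. For the commutativity and associativity of $||$, the typing rule for $N_1||N_2$ is symmetric in its two premises and applies the same $\Gamma,\itfc$ to both subnets, so both directions are immediate. For $N||\NILN\equiv N$, I would use that $\jdgtN{\Gamma,\itfc}{\NILN}$ holds unconditionally, so the premise contributed by $\NILN$ is vacuous. For $\RESNET{l_1}{\RESNET{l_2}{N}}\equiv\RESNET{l_2}{\RESNET{l_1}{N}}$ and $N_1||\RESNET{l}{N_2}\equiv\RESNET{l}{(N_1||N_2)}$, the restriction typing rule simply requires the body to be typable under the \emph{same} $\Gamma$ (no binding is added to $\Gamma$, since $l$ is a locality constant, not a variable), so both sides reduce to the same premises; the side condition $l\notin\FL{N_1}$ is irrelevant to typing and need not be used. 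The two axioms about $\LOCATED{l}{}{C}$ — namely $\LOCATED{l}{}{(\PAR{C_1}{C_2})}\equiv\LOCATED{l}{}{C_1}~||~\LOCATED{l}{}{C_2}$ and $\llocConst::(P|\NILP)\equiv\llocConst::P$ — require a small side lemma at the level of components: $\jdgtC{\Gamma,\itfc}{C_1|C_2}$ iff $\jdgtC{\Gamma,\itfc}{C_1}$ and $\jdgtC{\Gamma,\itfc}{C_2}$ (immediate from the component typing rule), together with the observation that the located-component rule for $\LOCATED{l}{}{C}$ factors through exactly this component judgment plus $\jdgtE{\Gamma}{l}{\loctp}$, the latter being independent of $C$.

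I would organize the proof as: first dispatch $\alpha$-equivalence via the cited lemma; then treat each of the eight axioms of Figure~\ref{fig:congruence} as a separate short case, in each reducing both sides of the biconditional to a common set of premises via the syntax-directed typing rules for nets and components; finally close under reflexivity (trivial), symmetry (the statement is already a biconditional), transitivity (compose), and congruence (inductive hypothesis applied to the modified subterm, using that each net/component typing rule's premises are on immediate subterms with an unchanged environment). The only mildly delicate point — and the one I would be most careful about — is making sure the environment $\Gamma$ passed to subterms really is unchanged across each axiom: this is why the restriction and located-component cases go through smoothly, but it would merit an explicit sentence noting that localities are constants here (there is no $\nu$-bound \emph{variable}), so no environment extension or renaming-of-bound-variables interacts with the typing judgment. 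I do not expect any genuine obstacle; the lemma is a routine ``typing respects structural congruence'' fact, and the main risk is merely overlooking one of the congruence-closure subcases or an axiom.
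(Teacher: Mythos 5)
The paper states Lemma~\ref{lem:tp_equiv} without proof (it is treated as routine, alongside the preceding $\alpha$-equivalence lemma), so there is no official argument to compare against; your induction on the derivation of $N_1\equiv N_2$, with a finite case analysis over the axioms of Figure~\ref{fig:congruence} and closure under reflexivity, symmetry, transitivity and congruence, is exactly the standard argument the paper implicitly relies on, and your case-by-case reductions to common premises check out against the syntax-directed typing rules for nets and components (in particular, localities are constants, so no environment manipulation arises for restriction, and the side condition $l\notin\FL{N_1}$ plays no role in typing). I see no gap.
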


\begin{lem}\label{lem:reordering}
Suppose at most one binding is present in $\Gamma$ for each variable in $\dom{\Gamma}$, and  
$\Gamma_1$ is a re-ordering of the bindings of $\Gamma$. 
The the following statements hold. 
\begin{enumerate}
  \item If $\jdgtE{\Gamma}{e}{\type}$, then we have $\jdgtE{\Gamma_1}{e}{\type}$. 
  \item If $\jdgtpred{\Gamma}{\PRED}$, then we have $\jdgtpred{\Gamma_1}{\PRED}$. 
  \item If $\jdgtT{\Gamma}{t}{\type}$, then we have $\jdgtT{\Gamma_1}{t}{\type}$. 
  \item If $\jdgtT{\Gamma,\itfc}{\CALTB}{\type}$, then we have $\jdgtT{\Gamma_1,\itfc}{\CALTB}{\type}$.
  \item
    If $\jdgta{\Gamma,\itfc}{a}{\Gamma'}$, and $\BV{a}\cap\dom{\Gamma}=\emptyset$,
    then $\jdgta{\Gamma_1,\itfc}{a}{\Gamma'}$, and at most one binding is present for each variable in $\Gamma'$,
    and{~}\\
    if $\jdgtP{\Gamma,\itfc}{P}$, and $\BV{P}\cap\dom{\Gamma}=\emptyset$, 
    then $\jdgtP{\Gamma_1,\itfc}{P}$. 
\end{enumerate}
\end{lem}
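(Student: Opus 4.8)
\textbf{Proof plan for Lemma~\ref{lem:reordering}.}

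The plan is to prove the five statements by a simultaneous structural induction, mirroring the way the typing rules in Figures~\ref{fig:tp_ett}, \ref{fig:tp_act} and \ref{fig:tp_pcn} are layered: statements (1)--(4) are handled first in isolation (since expressions, predicates, tuples and templates do not depend on processes or actions), and then statement (5) is proved by induction on the typing derivation, invoking (1)--(4) at the leaves. The common thread throughout is the observation that the premises of every typing rule consult $\Gamma$ \emph{only} through the lookup operations $\Gamma(x)$, $\Gamma(u)$, $\Gamma(\TBV)$, each of which is insensitive to the ordering of bindings provided at most one binding per variable is present. Hence, whenever a premise holds under $\Gamma$, the corresponding premise holds under any re-ordering $\Gamma_1$, and the same typing rule re-derives the conclusion under $\Gamma_1$.

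First I would dispatch (1): induct on the structure of $e$. The base cases $x$, $u$, $\num$, $\str$, $\TBID$, $l$ follow because the side conditions (e.g. $\Gamma(x)=\tau\wedge\tau\neq\loctp$) depend only on lookup, which is preserved under re-ordering; the inductive cases ($e_1\strcon e_2$, $e_1\aop e_2$, $\{e_1,\dots,e_n\}$) follow by applying the induction hypothesis to each subexpression and re-applying the rule. Statements (2) and (3) are then immediate inductions using (1) at the atomic predicate/expression leaves. For (4), the three table forms are handled directly: $\TBID@\lloc$ uses $\itfc(\TBID)$ (independent of $\Gamma$) together with $\jdgtE{\Gamma}{\lloc}{\loctp}$, to which (1) applies; $\TBV$ uses $\Gamma(\TBV)$, preserved under re-ordering; $(I,R)$ uses $\forall t\in R:\jdgtt{\Gamma}{t}{I.\TBSK}$, to which (3) applies termwise. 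A small technical point worth stating explicitly: since the hypothesis requires at most one binding per variable in $\Gamma$ and $\Gamma_1$ has exactly the same set of bindings, this uniqueness property carries over to $\Gamma_1$, so the side conditions remain well-posed.

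The main obstacle is statement (5), and within it the cases that \emph{extend} the environment --- prefixing $a.P$, the {\sf foreach} loop, and the action rules for ${\sf delete}$, ${\sf select}$, ${\sf update}$ and ${\sf aggr}$ --- because there the induction hypothesis must be applied to $\ext{\Gamma}{\Gamma'}$ rather than to $\Gamma$ itself, so I must argue that $\ext{\Gamma_1}{\Gamma'}$ is again a legitimate re-ordering of $\ext{\Gamma}{\Gamma'}$ with the one-binding-per-variable property intact. This is where the side hypothesis $\BV{a}\cap\dom{\Gamma}=\emptyset$ (resp. $\BV{P}\cap\dom{\Gamma}=\emptyset$) does the real work: it guarantees that the freshly derived $\Gamma'$ (whose domain consists of variables bound in $a$ or in $T$) is disjoint from $\dom{\Gamma}=\dom{\Gamma_1}$, so $\ext{\Gamma_1}{\Gamma'}$ has no duplicate bindings and is a re-ordering of $\ext{\Gamma}{\Gamma'}$; moreover $\BV{P}\subseteq\BV{a.P}$ (resp. the bound variables of the loop body are contained in those of the loop) keeps the disjointness hypothesis available for the recursive call, and the ``at most one binding in $\Gamma'$'' clause of the statement is exactly what licenses using the induction hypothesis at the extended environment. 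The remaining cases of (5) are routine: $\NILP$ is trivial; $P_1;P_2$ types each $P_i$ under the \emph{unextended} $\Gamma$, so the induction hypothesis applies directly to each; the procedure call $A(\LST{e})$ types the arguments under $\Gamma$ (use (3)) while the body $P$ is typed under the fixed environment $\avar_1:\tau_1,\dots,\avar_n:\tau_n$, which does not involve $\Gamma$ at all; table creation, dropping, and $\EVAL{P}{\lloc}$ reduce to (1) and, for ${\sf eval}$, to the process part of the induction hypothesis. Assembling these cases closes the simultaneous induction.
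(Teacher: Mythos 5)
Your plan matches the paper's proof: (1)--(4) are dispatched as straightforward structural inductions, and (5) proceeds by induction on the structure of actions and processes, with the crux being exactly the point you isolate --- that $\BV{a}\cap\dom{\Gamma}=\emptyset$ (resp.\ $\BV{P}\cap\dom{\Gamma}=\emptyset$) makes $\dom{\Gamma'}$ disjoint from $\dom{\Gamma}=\dom{\Gamma_1}$, so $\ext{\Gamma_1}{\Gamma'}$ is again a re-ordering of $\ext{\Gamma}{\Gamma'}$ with the one-binding-per-variable property, licensing (2)/(3) or the induction hypothesis at the extended environment, and with renaming-apart keeping the disjointness hypothesis available for the continuation. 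The only small thing the paper makes explicit that you leave implicit is the appeal to template linearity to conclude that the environment derived from typing a template has no internal duplicate bindings.
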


\begin{proofsketch}
The proofs of (1) -- (4) are straightforward. 
For the proof of (5), we proceed with an induction on the structure of actions and processes.
We only present a few representative cases. 
\begin{mylist1}
\item[\CASE $\SEL{\LST{\CALTB}}{T}{\psi}{t}{!\TBV}$:]
  Since $\jdgta{\Gamma,\itfc}{\SEL{\LST{\CALTB}}{T}{\psi}{t}{!\TBV}}{\Gamma'}$ we have
  $\jdgtT{\Gamma,\itfc}{\CALTB_1}{\type_1}$, ..., $\jdgtT{\Gamma,\itfc}{\CALTB_n}{\type_n}$
  for some $\type_1$, ..., $\type_n$ such that $\jdgtT{\type_1\times...\times\type_n}{T}{\Gamma''}$,
  for some $\Gamma''$ such that
  $\jdgtpred{\Gamma,\Gamma'}{\PRED}$ and $\jdgtt{\Gamma,\Gamma'}{t}{\type'}$ for some $\type'$ such that 
  $\Gamma'=[\TBV:\type']$.
  Hence at most one binding exists in $\Gamma'$ for the only variable $\TBV$ in $\dom{\Gamma'}$.
  
  Using (4) we have
  $\jdgtT{\Gamma_1,\itfc}{\CALTB_1}{\type_1}$, ..., $\jdgtT{\Gamma_1,\itfc}{\CALTB_n}{\type_n}$.
  By our assumption that templates are linear,
  at most one binding exists in $\Gamma''$ for each variable in $\dom{\Gamma''}$.
  By the conditions of (5), we have $\BV{T}\cap\dom{\Gamma}=\emptyset$.
  Hence it holds that $\dom{\Gamma''}\cap\dom{\Gamma}=\emptyset$.
  Thus at most one binding exists in $\Gamma,\Gamma''$ for each variable in its domain,
  and the same holds for $\Gamma_1,\Gamma''$,
  which is also a reordering of the bindings of $\Gamma,\Gamma''$. 
  By (2) and (3) we have $\jdgtpred{\Gamma_1,\Gamma''}{\PRED}$ and $\jdgtt{\Gamma_1,\Gamma''}{t}{\type'}$.
  We can now establish $\jdgta{\Gamma_1,\itfc}{\SEL{\LST{\CALTB}}{T}{\psi}{t}{!\TBV}}{\Gamma'}$. 
\item[\CASE $\EVAL{P'}{\lloc}$:]
  By $\jdgta{\Gamma,\itfc}{\EVAL{P'}{\lloc}}{\emptyTPEnv}$ we have $\jdgtP{\Gamma,\itfc}{P'}$ 
  and $\jdgtE{\Gamma}{\lloc}{\loctp}$ 
  We have $\BV{\EVAL{P'}{\lloc}}=\BV{P'}$. 
  Hence $\BV{P'}\cap\dom{\Gamma}=\emptyset$. 
  By the \IH we have $\jdgtP{\Gamma_1,\itfc}{P'}$. 
  By (1) we have $\jdgtE{\Gamma_1}{\lloc}{\loctp}$. 
  We can then establish $\jdgtP{\Gamma_1,\itfc}{\EVAL{P'}{\lloc}}$.
\item[\CASE $a'.P'$:]
  By $\jdgtP{\Gamma,\itfc}{a'.P'}$ we have
  $\jdgta{\Gamma,\itfc}{a'}{\Gamma''}$ for some $\Gamma''$ such that 
  $\jdgtP{(\Gamma,\Gamma''),\itfc}{P'}$ holds.
  By $\BV{a'.P'}\cap\dom{\Gamma}=\emptyset$ we have
  \begin{align}
    \label{eq:reord_pref}
    \BV{a'}\cap\dom{\Gamma}= & \emptyset \\
    \BV{P'}\cap\dom{\Gamma}= & \emptyset 
  \end{align}

  Hence by the \IH we have $\jdgta{\Gamma_1,\itfc}{a'}{\Gamma''}$, and
  at most one binding exists in $\Gamma''$ for each variable in its domain. 
  Since $\BV{a'}\cap\dom{\Gamma}=\emptyset$,
  it is obvious that $\dom{\Gamma''}\cap\dom{\Gamma}=\emptyset$, and
  $\dom{\Gamma''}\cap\dom{\Gamma_1}=\emptyset$. 
  It is not difficult to see that at most one binding exists in $\Gamma_1,\Gamma''$
  for each variable in $\dom{\Gamma_1,\Gamma''}$.
  It is also obvious that $\Gamma_1,\Gamma''$ is a reordering of the bindings of $\Gamma,\Gamma''$. 
  Since variables are renamed apart in processes, we have $\BV{P'}\cap\BV{a'}=\emptyset$.
  Hence $\BV{P'}\cap\dom{\Gamma''}=\emptyset$.
  By \eqref{eq:reord_pref} we have $\BV{P'}\cap\dom{\Gamma,\Gamma''}=\emptyset$;
  hence $\BV{P'}\cap\dom{\Gamma_1,\Gamma''}=\emptyset$.
  By the \IH we have $\jdgtP{(\Gamma_1,\Gamma''),\itfc}{P'}$.
  We can now establish $\jdgtP{\Gamma_1,\itfc}{a'.P'}$. \qedhere
\end{mylist1}
\end{proofsketch}

\begin{lem}\label{lem:fv_in_dom} The following statements hold.
  \begin{enumerate}
  \item If $\jdgtE{\Gamma}{e}{\type}$, then $\FV{e}\subseteq \dom{\Gamma}$. 
  \item If $\jdgtE{\Gamma,\itfc}{\CALTB}{\type}$, then $\FV{\CALTB}\subseteq\dom{\Gamma}$.
  \item If $\jdgtpred{\Gamma}{\PRED}$, then $\FV{\PRED}\subseteq\dom{\Gamma}$. 
  \item If $\jdgtT{\Gamma}{t}{\type}$, then $\FV{t}\subseteq\dom{\Gamma}$. 
  \item If $\jdgta{\Gamma}{a}{\Gamma'}$ then $\FV{a}\subseteq\dom{\Gamma}$ 
    and $\BV{a}\supseteq \dom{\Gamma'}$, 
    and if $\jdgtP{\Gamma,\itfc}{P}$, then $\FV{P}\subseteq\dom{\Gamma}$. 
  \end{enumerate}
\end{lem}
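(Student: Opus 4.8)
The plan is to prove all five statements by induction, establishing (1)--(4) first by routine inductions on the corresponding typing derivations and then (5) by a simultaneous induction on the typing derivations of actions and processes, using (1)--(4) together with one small observation about templates: whenever $\jdgtT{\type}{T}{\Gamma'}$ holds, we have $\dom{\Gamma'}=\BV{T}$, which is immediate from the template typing rules since each binder $!x$ or $!u$ of $T$ is recorded exactly once. For (1) I would induct on the derivation of $\jdgtE{\Gamma}{e}{\type}$: the rules for numerals, string literals, table identifiers and locality constants introduce no free variables, the rules for $x$ and $u$ carry the side conditions $x\in\dom{\Gamma}$ resp.\ $u\in\dom{\Gamma}$, and the cases $e_1\strcon e_2$, $e_1\,\aop\,e_2$ and $\{e_1,\dots,e_n\}$ follow because the free variables of the composite are the union of those of the immediate subexpressions, each contained in $\dom{\Gamma}$ by the induction hypothesis. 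Statement (4) is the single-rule case for $t=e_1,\dots,e_n$ and follows from (1) applied to each $e_i$; statement (3) is an equally routine induction, using (1) for the atomic predicates $e_1\,\cop\,e_2$ and $e_1\in e_2$ and the induction hypothesis for $\lnot\psi$ and $\psi_1\land\psi_2$. For (2) I would split on the form of $\CALTB$: for $\TBID@\lloc$ the free variables are those of $\lloc$, covered by (1) via the premise $\jdgtE{\Gamma}{\lloc}{\loctp}$; for a table variable $\TBV$ the side condition is exactly $\TBV\in\dom{\Gamma}$; and for a concrete table $(I,R)$ there are no free variables at all (equivalently, apply (4) to every tuple in $R$).

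For (5) I would carry out a simultaneous induction on the typing derivations of actions and processes. In each action rule the derived environment $\Gamma'$ is either $\emptyTPEnv$, or $[\TBV:\type']$ for a selection, or the environment obtained from typing the result template $T'$ in an aggregation; in all cases its domain consists only of binders occurring syntactically in the action, so $\dom{\Gamma'}\subseteq\BV{a}$ follows immediately (using the template observation for the aggregation case). For $\FV{a}\subseteq\dom{\Gamma}$ I would combine (1)--(4) with the bookkeeping that variables bound by a template $T$ inside an action are not free in that action: whenever a premise has the shape $\jdgtpred{\ext{\Gamma}{\Gamma'}}{\psi}$ with $\jdgtT{\type}{T}{\Gamma'}$, statement (3) gives $\FV{\psi}\subseteq\dom{\Gamma}\cup\dom{\Gamma'}$, and since $\dom{\Gamma'}=\BV{T}$ this yields $\FV{\psi}\setminus\BV{T}\subseteq\dom{\Gamma}$, and similarly for any tuple $t$ typed in the extended environment via (4); this handles $\DEL{\TBID}{T}{\psi}{\ell}$, $\SEL{\LST{\CALTB}}{T}{\psi}{t}{!\TBV}$, $\UPDATE{\TBID}{T}{\psi}{t}{\ell}$ and $\AGGR{\TBID}{T}{\psi}{f}{T'}{\ell}$, with the source tables handled by (2), while $\INS{\TBID}{t}{\ell}$ and $\DROP{\TBID}{\ell}$ are immediate. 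The case $\EVALP{P}{\ell}$ is where the two halves of the induction interact: its premise is $\jdgtP{\Gamma,\itfc}{P}$, so the process statement applied to that (structurally smaller) derivation gives $\FV{P}\subseteq\dom{\Gamma}$, and (1) handles $\ell$. For processes, $\NILP$ is trivial; $\PREF{a}{P}$ combines the action statement ($\FV{a}\subseteq\dom{\Gamma}$ and $\dom{\Gamma'}\subseteq\BV{a}$) with the process statement applied under $\ext{\Gamma}{\Gamma'}$, whence $\FV{P}\setminus\BV{a}\subseteq\dom{\Gamma}$; $P_1;P_2$ is direct since both components are typed under the same $\Gamma$ and the binders of $P_1$ have local scope; $\CALL{A}{\LST{e}}$ reduces to (1) applied to each argument; and $\FOREACH{\CALTB}{T}{\psi}{\ordr}{P}$ combines (2) for $\CALTB$, (3) for $\psi$ and the process statement for $P$ under $\ext{\Gamma}{\Gamma'}$, again discharging $\BV{T}$ via the template observation.

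The part I expect to require the most care is purely the variable-scoping bookkeeping in (5): for each action and process form one must identify exactly which variables the construct binds (so that they may be subtracted from the free variables of sub-phrases typed in extended environments) and confirm that the binders recorded in the derived $\Gamma'$ are among them. No individual step is deep, but the argument only goes through cleanly once the auxiliary equality $\dom{\Gamma'}=\BV{T}$ for template typing is isolated, and once the mutual induction is set up so that the action statement may legitimately invoke the process statement (for $\EVALP{P}{\ell}$) and conversely on structurally smaller derivations.
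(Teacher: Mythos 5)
Your proposal is correct: the paper omits this proof entirely, stating only that it is straightforward, and your routine structural/mutual induction (with the auxiliary observation that $\dom{\Gamma'}=\BV{T}$ whenever $\jdgtT{\type}{T}{\Gamma'}$, used to discharge the template binders in the extended environments) is exactly the kind of argument being alluded to. The scoping bookkeeping you identify for case (5), including the mutual dependence between the action and process statements via $\EVALP{P}{\ell}$, is handled correctly, so nothing further is needed.
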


The proof of Lemma~\ref{lem:fv_in_dom} is straightforward. 

\begin{lem}\label{lem:smaller_dom}
  The type environment can be shrunk as long as its domain still constains all the free variables
  of the objects to be typed. Formally, supposing $\Gamma'\subseteq\Gamma$, 
  and at most one binding is present in $\Gamma$ for each variable in $\dom{\Gamma}$, 
  we have the following results:
\begin{enumerate}
  \item If $\jdgtE{\Gamma}{e}{\type}$ and $\FV{e}\subseteq\dom{\Gamma'}$, then $\jdgtE{\Gamma'}{e}{\type}$. 
  \item If $\jdgtT{\Gamma,\itfc}{\CALTB}{\tau}$, and $\FV{\CALTB}\subseteq\dom{\Gamma'}$, 
        then $\jdgtT{\Gamma',\itfc}{\CALTB}{\tau}$.
  \item If $\jdgtpred{\Gamma}{\psi}$ and $\FV{\psi}\subseteq\dom{\Gamma'}$, then $\jdgtpred{\Gamma'}{\psi}$. 
  \item If $\jdgtt{\Gamma}{t}{\tau}$ and $\FV{t}\subseteq\dom{\Gamma'}$, then $\jdgtt{\Gamma'}{t}{\tau}$. 
\end{enumerate}
\end{lem}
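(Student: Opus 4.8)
The plan is to prove all four statements of Lemma~\ref{lem:smaller_dom} simultaneously by structural induction on the syntactic entity being typed, since the four judgments are mutually intertwined: typing an expression may involve subexpressions, a table $(I,R)$ involves the tuples in $R$, a predicate involves expressions, and a tuple involves expressions. Actually, looking more carefully, (1)--(4) only genuinely depend on each other in one direction (tables use expressions for the locality, predicates use expressions, tuples use expressions, and the $(I,R)$ case of (2) uses (4) applied to each $t\in R$), so a single induction on the structure of the term, with an outer case split on which of the four kinds of entity we are looking at, will suffice. Throughout, the hypotheses $\Gamma'\subseteq\Gamma$, the uniqueness of bindings in $\Gamma$, and $\FV{-}\subseteq\dom{\Gamma'}$ will be maintained or specialized as we descend into subterms.

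First I would handle the expression cases of (1). For a variable $x$ (or $u$), well-typedness $\jdgtE{\Gamma}{x}{\tau}$ means $\Gamma(x)=\tau$ (with $\tau\neq\loctp$, resp. $\Gamma(u)=\loctp$); since $x\in\FV{x}\subseteq\dom{\Gamma'}$ and $\Gamma'\subseteq\Gamma$ with unique bindings, $\Gamma'(x)=\Gamma(x)=\tau$, so the same rule applies under $\Gamma'$. For literals $\num$, $\str$, $\TBID$, $l$ the derivation does not consult the environment at all, so it transfers verbatim. For the compound cases $e_1\strcon e_2$, $e_1\aop e_2$, and $\{e_1,\dots,e_n\}$, note $\FV{e_i}\subseteq\FV{e}\subseteq\dom{\Gamma'}$, so the induction hypothesis gives the typings of the subexpressions under $\Gamma'$, and reapplying the corresponding rule yields the result. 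For (3) and (4), the predicate cases ($\TRUE$, $e_1\cop e_2$, $e_1\in e_2$, $\lnot\psi_1$, $\psi_1\land\psi_2$) and the tuple case ($e_1,\dots,e_n$) are entirely analogous: the free variables of each immediate subterm are contained in those of the whole, hence in $\dom{\Gamma'}$, so the induction hypothesis applies and the rule is reapplied.

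The table cases of (2) are the only mildly delicate part. For $\CALTB=\TBID@\lloc$: well-typedness requires $\TBID\in\dom{\itfc}$ and $\jdgtE{\Gamma}{\lloc}{\loctp}$; since $\FV{\TBID@\lloc}=\FV{\lloc}\subseteq\dom{\Gamma'}$, part (1) gives $\jdgtE{\Gamma'}{\lloc}{\loctp}$, and $\itfc$ is unchanged, so the rule applies. For $\CALTB=\TBV$: well-typedness means $\TBV\in\dom{\Gamma}$ with type $\Gamma(\TBV)$; since $\TBV\in\FV{\TBV}\subseteq\dom{\Gamma'}$ and bindings are unique, $\Gamma'(\TBV)=\Gamma(\TBV)$, so the rule transfers. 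For $\CALTB=(I,R)$: here we need $\jdgtt{\Gamma}{t}{I.\TBSK}$ for every $t\in R$, plus the side condition on $\itfc$ which does not mention $\Gamma$. The subtle point is that each such $t$ may contain free variables, but since $(I,R)$ is a \emph{concrete} table its data records consist of constant values (as stipulated in Section~\ref{sec:syntax}), so $\FV{t}=\emptyset\subseteq\dom{\Gamma'}$, and (4) applied to each $t$ gives $\jdgtt{\Gamma'}{t}{I.\TBSK}$; reapplying the rule closes the case. I anticipate this will be the main obstacle only in the sense of needing to invoke the right convention — namely that data sets $R$ contain only constants, so their free-variable sets are trivially inside $\dom{\Gamma'}$ — rather than presenting any real combinatorial difficulty; once that observation is in place, every case is a one-line application of the induction hypothesis and reapplication of the corresponding typing rule. (Note that the lemma deliberately does not extend to actions, processes, or templates, so no binding-scope complications of the kind handled in Lemma~\ref{lem:reordering} arise here.)
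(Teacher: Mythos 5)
Your proposal is correct and follows essentially the same route as the paper's proof: structural induction for expressions (and hence predicates and tuples), and a case analysis on the three forms of tables, with the $(I,R)$ case resolved by observing that data records are constant values, so $\FV{t}=\emptyset$ and the typing of each $t\in R$ is insensitive to shrinking the environment. The only cosmetic difference is that you package everything as one simultaneous induction while the paper treats (1)--(4) as separate (but mutually invoking) inductions/case analyses, which changes nothing of substance.
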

\begin{proof}\hfill
  \begin{mylist1}
  \item[(1)] The proof is by induction on the structure of $e$,
    we only present the representative cases.
    \begin{mylist1}
    \item[\CASE $x$:] By $\jdgtE{\Gamma}{x}{\type}$ we have $\Gamma(x)=\tau$ and $\tau\neq\loctp$.
      Since $\Gamma'\subseteq\Gamma$, only one binding exists for $x$ in $\Gamma$, 
      and $\FV{x}\subseteq \dom{\Gamma'}$, it holds that $\Gamma'(x)=\tau$.
      Hence $\jdgtE{\Gamma'}{x}{\type}$ can be established. 
    \item[\CASE $\{e_1,...,e_n\}$:] By $\jdgtE{\Gamma}{\{e_1,...,e_n\}}{\type}$,
      there exists some $\type_0$ such that 
      $\type=\msett{\type_0}$, and
      $\jdgtE{\Gamma}{e_1}{\type_0}$, ..., $\jdgtE{\Gamma}{e_1}{\type_0}$.
      By $\FV{\{e_1,...,e_n\}}\subseteq \dom{\Gamma'}$ we have
      $\FV{e_1}\subseteq \dom{\Gamma'}$, ..., $\FV{e_n}\subseteq \dom{\Gamma'}$. 
      By the \IH we have $\jdgtE{\Gamma'}{e_1}{\type_0}$, ..., $\jdgtE{\Gamma'}{e_n}{\type_0}$.
      We thus have $\jdgtE{\Gamma'}{\{e_1,...,e_n\}}{\type}$. 
    \end{mylist1}

  \item[(2)] The proof is by a case analysis on $\CALTB$.
    \begin{mylist1}
    \item[\CASE $\TBID@\lloc$:] By $\jdgtE{\Gamma,\itfc}{\TBID@\lloc}{\type}$
      we have $\jdgtE{\Gamma}{\lloc}{\loctp}$ and $\type=\itfc(\TBID)$. 
      By $\FV{\CALTB}\subseteq\dom{\Gamma'}$ we have $\FV{\lloc}\subseteq\dom{\Gamma'}$. 
      Using (1) we have $\jdgtE{\Gamma'}{\lloc}{\loctp}$. 
      Hence $\jdgtE{\Gamma',\itfc}{\TBID@\lloc}{\type}$ can be established. 
    \item[\CASE $\TBV$:] By $\jdgtT{\Gamma,\itfc}{\TBV}{\tau}$ we have
      $\TBV\in\dom{\Gamma}$ and $\type=\Gamma(\TBV)$.
      Since $\Gamma'\subseteq \Gamma$, 
      only one binding exists for $\TBV$ in $\Gamma$, and
      $\FV{\TBV}\subseteq \dom{\Gamma'}$,
      we have $\TBV\in\dom{\Gamma'}$ and $\type=\Gamma'(\TBV)$.
      Hence $\jdgtT{\Gamma',\itfc}{\TBV}{\tau}$ can be established. 
    \item[\CASE $(I,R)$:] By $\jdgtT{\Gamma,\itfc}{\TBV}{\tau}$ we have
      $\forall t\in R: \jdgtt{\Gamma}{t}{I.\TBSK}$ and $I.\TBID\neq\bot\Rightarrow\itfc(I.\TBID)=I.\TBSK$.
      For each tuple $t$ in the data set $R$, we have $\FV{t}=\emptyset$ and
      it is obvious that the typing of $t$ is irrelevant to the particular typing environment in use.
      Hence we have $\forall t\in R: \jdgtt{\Gamma'}{t}{I.\TBSK}$.
      It can then be established that $\jdgtT{\Gamma',\itfc}{\TBV}{\tau}$.
    \end{mylist1}
  
  \item[(3)] The proof is straightforward by induction on the structure of $\psi$, using (1). 
  \item[(4)] The proof is trivial using (1). \qedhere
  \end{mylist1}
\end{proof}

\begin{lem}\label{lem:tp_eval} The following statements hold. 
\begin{enumerate}
  \item If $\jdgtt{\emptyTPEnv}{e}{\tau}$, then $\EVALT{e}\sattsk \tau$, and vice versa. 
  \item If $\jdgtt{\emptyTPEnv}{t}{\tau}$, then $\EVALT{t}\sattsk \tau$, and vice versa.  
  \item If $\jdgtpred{\emptyTPEnv}{\PRED}$, then $\EVALT{\PRED}\neq \Err$, and vice versa. 
\end{enumerate}
\end{lem}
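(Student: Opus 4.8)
The plan is to prove Lemma~\ref{lem:tp_eval} by mutual induction on the structure of expressions, tuples and predicates, treating the three statements simultaneously so that each direction of the ``vice versa'' is available when needed. The key observation that makes the argument go through is that under the empty typing environment $\emptyTPEnv$ every expression, tuple or predicate in scope is closed (by Lemma~\ref{lem:fv_in_dom}, $\FV{e}\subseteq\dom{\emptyTPEnv}=\emptyset$), so $\EVALT{-}$ is defined on exactly these objects, and the typing rules of Figure~\ref{fig:tp_ett}, the evaluation clauses of Figure~\ref{fig:eval_ett}, and the well-sortedness rules of Figure~\ref{fig:sat_sk_etpl} are in such tight syntactic correspondence that the cases almost write themselves.

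First I would dispatch (1). For the base cases ($\num$, $\str$, $\TBID$, $l$) both directions are immediate: $\jdgtt{\emptyTPEnv}{\num}{\inttp}$ holds unconditionally, $\EVALT{\num}=\num$, and $\num\sattsk\inttp$ holds unconditionally, and similarly for the other literals. Variables $x,u$ cannot occur since the environment is empty. For $e_1\strcon e_2$: typability forces $\jdgtt{\emptyTPEnv}{e_1}{\strtp}$ and $\jdgtt{\emptyTPEnv}{e_2}{\strtp}$; by the induction hypothesis $\EVALT{e_1},\EVALT{e_2}$ are strings, so $\EVALT{e_1\strcon e_2}=\EVALT{e_1}\strcon\EVALT{e_2}$ is a string, hence well-sorted under $\strtp$. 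Conversely, if $\EVALT{e_1\strcon e_2}\sattsk\strtp$ then in particular the evaluation did not yield $\Err$, so $\EVALT{e_1}$ and $\EVALT{e_2}$ are strings, and the IH gives $\jdgtt{\emptyTPEnv}{e_i}{\strtp}$, whence typability. The arithmetic case $e_1\aop e_2$ is identical with $\inttp$ in place of $\strtp$ (here the global assumption that arithmetic is ideal and that division by zero yields $0$ guarantees the result is always a genuine integer). The multiset case $\{e_1,\ldots,e_n\}$: typability gives a common datatype $\dttp$ with $\jdgtt{\emptyTPEnv}{e_j}{\dttp}$ for all $j$; the IH makes every $\EVALT{e_j}$ a value of type $\dttp$, so the side condition of the evaluation clause is met and $\EVALT{\{e_1,\ldots,e_n\}}=\{\EVALT{e_1},\ldots,\EVALT{e_n\}}\sattsk\msett{\dttp}$ by the inference rule for multiset well-sortedness; the converse direction peels this apart the same way.

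Then (2) is essentially a corollary: a tuple $t=e_1,\ldots,e_n$ is typable iff each $e_j$ is typable (with the product type being the product of the component types), and $\EVALT{t}\sattsk\prodtp$ iff each $\EVALT{e_j}$ is well-sorted under the corresponding component type and none evaluates to $\Err$; apply (1) componentwise. For (3) I would induct on $\psi$: the only subtlety compared to (1)--(2) is that the statement is now $\EVALT{\psi}\neq\Err$ rather than a well-sortedness claim, but the structure is the same --- for $e_1\cop e_2$ and $e_1\in e_2$ the typing side conditions (same datatype $\dttp$; multiset-of-$\dttp$ versus $\dttp$) match exactly the non-$\Err$ conditions in the evaluation clauses, via (1); for $\lnot\psi_1$ and $\psi_1\land\psi_2$ the evaluation yields $\Err$ precisely when a subformula does, so the IH closes it in both directions. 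I do not anticipate a genuine obstacle here; the main thing to be careful about is bookkeeping the ``vice versa'' directions uniformly so the mutual induction is well-founded, and stating at the outset that all objects considered are closed so that $\EVALT{-}$ is total on them. If anything is mildly delicate it is the multiset clause, where one must check that ``values of the same type $\dttp$'' in the evaluation side condition lines up with the single shared $\dttp$ in the typing rule rather than merely with the weaker condition that the values happen to be pairwise type-compatible.
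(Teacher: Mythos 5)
Your proposal is correct and follows essentially the same route as the paper's proof: structural induction on expressions for both directions of statement (1), with statement (2) obtained componentwise from (1) via the tuple evaluation and well-sortedness clauses, and statement (3) handled analogously using (1) for the atomic comparisons and membership tests. The only cosmetic difference is your ``mutual induction'' framing and the slightly loose remark that variables ``cannot occur'' --- the paper instead treats the variable case as vacuous in each direction (no typing under $\emptyTPEnv$, and no well-sorted evaluation result), which amounts to the same observation.
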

\begin{proof}
  The proofs are by structural induction. 
  We present only representative cases. 
  \begin{mylist1}
  \item[(1a)] We prove if $\jdgtt{\emptyTPEnv}{e}{\tau}$, then $\EVALT{e}\sattsk \tau$, 
    with an induction on the structure of $e$.
    \begin{mylist1}
    \item[\CASE $\num$:] We have $\EVALT{\num}=\num$.
      By $\jdgtE{\emptyTPEnv}{\num}{\type}$ we have $\type=\inttp$.
      It indeed holds that $\num\sattsk \inttp$. 
    \item[\CASE $x$:] The typing cannot be performed with the empty environment $\emptyset$;
      hence the statement vacuously holds. 
    \item[\CASE $e_1\strcon e_2$:]  By $\jdgtE{\emptyTPEnv}{e_1\strcon e_2}{\type}$ we have
      $\jdgtE{\emptyTPEnv}{e_1}{\type}$, $\jdgtE{\emptyTPEnv}{e_2}{\type}$ and $\type=\strtp$.
      By the \IH we have $\EVALT{e_1}\sattsk \strtp$ and $\EVALT{e_2}\sattsk \strtp$.
      By the definition of $\sattsk$, $e_1=\str_1$ and $e_2=\str_2$ for strings $\str_1$ and $\str_2$.
      Hence $\EVALT{e_1\strcon e_2}=(\str_1\strcon \str_2)$ is a string.
      We have $\EVALT{e_1\strcon e_2}\sattsk \strtp$. 
    \end{mylist1}
  \item[(1b)] We prove if $\EVALT{e}\sattsk \tau$, then $\jdgtt{\emptyTPEnv}{e}{\tau}$, 
    with an induction on the structure of $e$.
    \begin{mylist1}
    \item[\CASE $\num$:] We have $\EVALT{\num}\sattsk \inttp$; hence $\type=\inttp$.
      It indeed holds that $\jdgtt{\emptyTPEnv}{\num}{\inttp}$.
    \item[\CASE $x$:] We do \emph{not} have $\EVALT{x}\sattsk \type$ for any $\type$;
      hence the statement vacuously holds. 
    \item[\CASE $e_1\strcon e_2$:] Since $\EVALT{e_1\strcon e_2}\sattsk \type$, we have
      $\EVALT{e_1\strcon e_2}$ yields $\EVALT{e_1}\strcon\EVALT{e_2}$ rather than $\Err$, and
      $\EVALT{e_1}=\str_1$ and $\EVALT{e_2}=\str_2$ for some $\str_1$ and $\str_2$.
      Hence $\EVALT{e_1}\sattsk \strtp$ and $\EVALT{e_2}\sattsk \strtp$.
      By the \IH we have $\jdgtE{\emptyTPEnv}{e_1}{\strtp}$ and $\jdgtE{\emptyTPEnv}{e_2}{\strtp}$.
      Hence it can be established that $\jdgtE{\emptyTPEnv}{e_1\strcon e_2}{\strtp}$. 
    \end{mylist1}
  \item[(2a)] We prove if $\jdgtt{\emptyTPEnv}{t}{\tau}$, then $\EVALT{t}\sattsk \tau$. 

    The typing judgment for $t$ must be $\jdgtT{\emptyTPEnv}{e_1,...,e_n}{\type}$. 
    We thus have
    $\type=\type_1\times...\times\type_n$ for some $\type_1$, $\type_2$, ..., and $\type_n$ such that
    $\jdgtt{\emptyTPEnv}{e_1}{\type_1}$, ..., $\jdgtt{\emptyTPEnv}{e_n}{\type_n}$.
    By (1a) we have $\forall j\in\{1,...,n\}:\EVALT{e_j}\sattsk \type_j$.
    Hence $\forall j\in\{1,...,n\}:\EVALT{e_j}\neq\Err$ and
    by the definitions of $\EVALT{\cdot}$ and $\cdot\sattsk\cdot$, it holds that
    $\EVALT{e_1,...,e_n}=(\EVALT{e_1},...,\EVALT{e_n})\sattsk 
    \type_1\times ... \times \type_n=\type$. 
    
  \item[(2b)] We prove if $\EVALT{t}\sattsk \tau$, then $\jdgtt{\emptyTPEnv}{t}{\tau}$. 

    Suppose $t=e_1,...,e_n$. 
    Since we do \emph{not} have $\Err\sattsk \type$, it holds that $\EVALT{e_1,...,e_n}\neq\Err$,
    which means $\EVALT{e_1,...,e_n}=(\EVALT{e_1},...,\EVALT{e_n})$ and 
    $\forall j\in\{1,...,n\}:\EVALT{e_j}\neq\Err$. 
    By $(\EVALT{e_1},...,\EVALT{e_n})\sattsk\type$ we have $\type=\type_1\times...\times\type_n$
    for some $\type_1$, $\type_2$, ..., and $\type_n$ such that  
    $\forall j\in\{1,...,n\}: \EVALT{e_j}\sattsk\type_j$. 
    By (1b) we have $\forall j\in\{1,...,n\}:\jdgtE{\emptyTPEnv}{e_j}{\type_j}$. 
    Hence it holds that $\jdgtE{\emptyTPEnv}{e_1,...,e_n}{\type}$. 

\item[(3a),(3b)] The proof ideas are analogous. \qedhere
  \end{mylist1}
\end{proof}

\noindent Pedantically, the condition that 
``at most one binding exists in a typing environment for each variable in its domain''
would be needed in the statements of a few further lemmas whose proofs make use of 
Lemma~\ref{lem:reordering} and/or Lemma~\ref{lem:smaller_dom}. 
However, to avoid the tediousness 
we dispense hereafter with explicating such assumptions on (all of) our typing environments. 
Since bound variables are renamed apart, 
the extensions of typing environments made in the type system all preserve this condition. 

\begin{lem}[Substitution for Tables, Expressions, Predicates and Tuples]\label{lem:subst_tept} 
The following statements hold.
\begin{enumerate}
  \item 
    If $\jdgtT{(\ext{\Gamma}{\TBV:\sktp_0}),\itfc}{\CALTB}{\sktp'}$, 
    and $\emptyTPEnv,\itfc\vdash(I,R):\sktp_0$, 
    then $\jdgtT{\Gamma,\itfc}{\CALTB[(I,R)/\TBV]}{\sktp'}$ holds. 
  \item 
    If $\jdgtT{(\ext{\Gamma}{u:\loctp}),\itfc}{\CALTB}{\sktp'}$, and $l$ is a locality, 
    then $\jdgtT{\Gamma,\itfc}{\CALTB[l/u]}{\sktp'}$ holds. 
  \item 
    If $\jdgtE{\ext{\Gamma}{x:\tau_0}}{e}{\tau}$ and $\val$ is such that $\emptyTPEnv\vdash\val:\tau_0$, 
    then $\jdgtE{\Gamma}{e[\val/x]}{\tau}$ holds.
  \item 
    If $\jdgtE{\ext{\Gamma}{u:\loctp}}{e}{\tau}$, and $l$ is a locality, 
    then $\jdgtE{\Gamma}{e[l/u]}{\tau}$ holds. 
  \item 
    If $\jdgtpred{\ext{\Gamma}{x:\tau_0}}{\psi}$, and $\emptyTPEnv\vdash\val:\tau_0$, 
    then $\jdgtpred{\Gamma}{\psi[\val/x]}$ holds. 
  \item 
    If $\jdgtpred{\ext{\Gamma}{u:\loctp}}{\psi}$, and $l$ is a locality, then $\jdgtpred{\Gamma}{\psi[l/u]}$ holds. 
  \item 
    If $\jdgtt{\ext{\Gamma}{x:\tau_0}}{t}{\tau}$, and $\emptyTPEnv\vdash\val:\tau_0$, 
    then $\jdgtt{\Gamma}{t[\val/x]}{\tau}$ holds. 
  \item
    If $\jdgtt{\ext{\Gamma}{u:\loctp}}{t}{\type}$, and $l$ is a locality, 
    then $\jdgtt{\Gamma}{t[l/u]}{\type}$ holds. 
\end{enumerate}
\end{lem}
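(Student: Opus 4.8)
The plan is to establish the eight statements by structural induction, proving them in the order (3),(4), then (5),(6),(7),(8), and finally (1),(2), since the later parts reduce to the earlier ones: (5) and (6) reduce to (3) and (4) at the expression sub-terms of predicates, (7) and (8) reduce to (3) and (4) componentwise on tuples, and (1),(2) reduce, at the locality sub-term of a reference $\TBID@\lloc$, to the typing rule for locality variables and constants together with a strengthening step (Lemma~\ref{lem:smaller_dom}(1)). In every part the only cases that are not immediate are those in which the substituted variable actually occurs; in all other cases the substitution leaves the object unchanged, the lookup of any other variable is unaffected by dropping the binding for the substituted variable, and at a leaf the typing of a constant, a closed tuple, or a concrete table $(I,R)$ does not depend on the ambient typing environment --- the same observation already used in the proof of Lemma~\ref{lem:smaller_dom} for the case $(I,R)$.

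For (3) I would induct on the structure of $e$. In the base case $e=x$ the substitution yields $e[\val/x]=\val$; since at most one binding is present for each variable, $x\notin\dom{\Gamma}$ and so $(\ext{\Gamma}{x:\tau_0})(x)=\tau_0$, whence $\tau=\tau_0$, and $\jdgtE{\Gamma}{\val}{\tau_0}$ holds because $\val$ is closed (its type is derivable in any environment). In the base case $e=y$ with $y\neq x$, or $e$ a numeral, a string literal, a table identifier, or a locality constant, the substitution leaves $e$ unchanged and its typing carries over to $\Gamma$ directly, since the lookup for $y$ in $\ext{\Gamma}{x:\tau_0}$ agrees with the lookup in $\Gamma$. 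For $e=e_1\strcon e_2$, $e=e_1~\!\aop~\!e_2$, and $e=\{e_1,\ldots,e_n\}$, I would invert the typing rule, apply the induction hypothesis to each sub-expression, and re-apply the rule. Part (4) is entirely analogous, with $u$ for $x$ and $\loctp$ for $\tau_0$; the one new base case is $e=u$, where inversion gives $\tau=\loctp$, $e[l/u]=l$, and $\jdgtE{\Gamma}{l}{\loctp}$ follows from the rule for locality constants.

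Parts (5) and (6) then follow by a routine induction on $\psi$: the base case $\psi=\TRUE$ is trivial, the comparison and membership cases are closed by (3) (resp. (4)) applied to the two sub-expressions, and $\lnot\psi_1$ and $\psi_1\land\psi_2$ by the induction hypothesis. Parts (7) and (8) are immediate: a tuple $e_1,\ldots,e_n$ is typed componentwise, so applying (3) (resp. (4)) to each $e_j$ and re-applying the product rule gives the claim. For (1) I would perform a case analysis on $\CALTB$. If $\CALTB=(I',R')$ the substitution is the identity and the conclusion follows since the rows of $R'$ are closed. If $\CALTB=\TBID@\lloc$ the substitution of a table variable leaves it unchanged; inverting gives $\TBID\in\dom{\itfc}$, $\sktp'=\itfc(\TBID)$ and $\jdgtE{\ext{\Gamma}{\TBV:\sktp_0}}{\lloc}{\loctp}$, and since $\TBV$ (a table variable) cannot occur free in $\lloc$ we strengthen to $\jdgtE{\Gamma}{\lloc}{\loctp}$ and re-apply the rule. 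If $\CALTB=\TBV'$: for $\TBV'\neq\TBV$ the lookup is unaffected; for $\TBV'=\TBV$ we get $\sktp'=\sktp_0$ and $\CALTB[(I,R)/\TBV]=(I,R)$, and the hypothesis $\emptyTPEnv,\itfc\vdash (I,R):\sktp_0$ lifts to $\jdgtT{\Gamma,\itfc}{(I,R)}{\sktp_0}$ by environment-independence. Part (2) has the same shape, with $l$ substituted for $u$ at the locality position of $\TBID@\lloc$ (the new base case using the locality-constant rule) and the $\TBV$ and concrete-table cases untouched.

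I do not expect a genuine obstacle here: this is the standard substitution lemma, and every case is either a trivial lookup on an unaffected variable or a one-line appeal to an earlier part or to the induction hypothesis. The two points requiring care are bookkeeping: (i) using the paper's standing convention that a typing environment holds at most one binding per variable, so that in the variable base case the rightmost binding of $\ext{\Gamma}{x:\tau_0}$ determines the type of $x$ and forces $\tau=\tau_0$; and (ii) the easily checked fact that the typing of closed expressions, closed tuples and concrete tables is independent of the ambient typing environment, which is what allows the hypotheses stated with $\emptyTPEnv$ (namely $\emptyTPEnv\vdash\val:\tau_0$ and $\emptyTPEnv,\itfc\vdash (I,R):\sktp_0$) to be used at the environments $\Gamma$ appearing in the conclusions. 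Since bound variables are renamed apart by the global assumption, no variable-capture issue arises and no side conditions on $\FV{\cdot}$ need to be tracked.
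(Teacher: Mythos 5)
Your proposal is correct and follows essentially the same route as the paper's proof: a structural induction (resp.\ case analysis on the form of $\CALTB$) in which the only non-trivial cases are the occurrences of the substituted variable, closed by the standing one-binding-per-variable convention, the environment-independence/weakening of typing for closed values and concrete tables, and a strengthening step (Lemma~\ref{lem:smaller_dom}, via Lemma~\ref{lem:fv_in_dom}) for the unaffected sub-terms. Your reordering of the parts (proving (3)--(8) before (1),(2)) is immaterial, since in both treatments (1) and (2) rely only on the strengthening lemma rather than on the expression-level parts.
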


\begin{proof}\hfill
  \begin{mylist1}
  \item[(1)] We proceed by a case analysis on $\TBDst$.
    \begin{mylist1}
    \item[\CASE $\TBID@\lloc$:] 
      By $\jdgtT{(\Gamma,\TBV:\type_0),\itfc}{\TBID@\lloc}{\type'}$ we have 
      $\jdgtE{\Gamma,\TBV:\type_0}{\lloc}{\loctp}$, 
      $\TBID\in\dom{\itfc}$ and $\type'=\itfc(\TBID)$. 
      By Lemma~\ref{lem:fv_in_dom}, we have $\FV{\lloc}\subseteq \dom{\Gamma,\TBV:\type_0}$, 
      which implies $\FV{\lloc}\subseteq \dom{\Gamma}$. 
      Hence by Lemma~\ref{lem:smaller_dom}, we have $\jdgtE{\Gamma}{\lloc}{\loctp}$. 
      We can now establish $\jdgtE{\Gamma,\itfc}{(\TBID@\lloc)[(I,R)/\TBV]}{\type'}$ since
      $(\TBID@\lloc)[(I,R)/\TBV]=\TBID@\lloc$.
    \item[\CASE $\TBV'$:] Suppose $\TBV'\neq\TBV$.
      We have $\TBV'[(I,R)/\TBV]=\TBV'$.
      By $\jdgtT{(\Gamma,\TBV:\type_0),\itfc}{\TBV'}{\type'}$ we have $\TBV'\in\dom{\Gamma}$
      and $\type'=\Gamma(\TBV')$. 
      By Lemma~\ref{lem:smaller_dom} we have $\jdgtT{\Gamma,\itfc}{\TBV'}{\type'}$.

      Suppose $\TBV'=\TBV$.
      By $\jdgtt{\Gamma,\TBV:\type_0}{\TBV}{\type'}$ we have $\type'=\type_0$. 
      It is obvious that well-typedness is preserved under weakening of the typing environment.
      Hence by $\jdgtT{\emptyTPEnv,\itfc}{(I,R)}{\type_0}$, 
      we have $\jdgtT{\Gamma,\itfc}{(I,R)}{\type_0}$, or $\jdgtT{\Gamma,\itfc}{(I,R)}{\type'}$. 
    \item[\CASE $(I',R')$:] By $\jdgtT{\Gamma,\TBV:\type_0}{(I',R')}{\type'}$ we have
      $\forall t\in R': \jdgtt{\Gamma,\TBV:\type_0}{t}{I'.\TBSK}$, $I'.\TBID\neq \bot\Rightarrow \itfc(I'.\TBID)=I'.\TBSK$, and
      $I'.\TBSK=\type'$.
      Since for each $t$ in $R'$, it holds that $\FV{t}=\emptyset$, we have
      $\forall t\in R': \jdgtt{\Gamma}{t}{I'.\TBSK}$ by Lemma~\ref{lem:smaller_dom}.
      We can thus establish $\jdgtT{\Gamma,\itfc}{(I',R')[(I,R)/\TBV]}{\type'}$. 
    \end{mylist1}
  \item[(2)] We proceed by a case analysis on $\TBDst$.
    \begin{mylist1}
    \item[\CASE $\TBID@\lloc$:] Suppose $\lloc$ is some constant locality $\llocConst'$.
      We have $(\TBID@\llocConst')[l/u]=\TBID@\llocConst'$.
      By Lemma~\ref{lem:smaller_dom} and $\FV{\TBID@\llocConst'}=\emptyset$,
      we can easily derive $\jdgtt{\Gamma,\itfc}{\TBID@\llocConst'}{\type'}$.

      Suppose $\lloc$ is some locality variable $u'$ such that $u'\neq u$.
      By $\jdgtT{(\Gamma,u:\loctp),\itfc}{\TBID@u'}{\type'}$ we have 
      $\jdgtE{\Gamma,u:\loctp}{u'}{\loctp}$, $\TBID\in\dom{\itfc}$ and $\type'=\itfc(\TBID)$. 
      It is not difficult to see that $u'\in\dom{\Gamma}$ using Lemma~\ref{lem:fv_in_dom}. 
      By Lemma~\ref{lem:smaller_dom} we have $\jdgtE{\Gamma}{u'}{\loctp}$. 
      We can now establish $\jdgtT{\Gamma,\itfc}{(\TBID@u')[l/u]}{\type'}$ since
      $(\TBID@u')[l/u]=\TBID@u'$.

      Suppose $\lloc$ is $u$.
      By $\jdgtT{(\Gamma,u:\loctp),\itfc}{\TBID@u}{\type'}$ 
      we have $\TBID\in\dom{\itfc}$ and $\type'=\itfc(\TBID)$.
      For the locality $l$ we have $\jdgtE{\Gamma}{l}{\loctp}$. 
      Hence we can also establish $\jdgtT{\Gamma,\itfc}{\TBID@l}{\type'}$. 
    \item[\CASE $\TBV$:] Analogous to the corresponding case $\TBV'$ under (1)
      where $\TBV'$ is not changed by the substitution. 
    \item[\CASE $(I,R)$:] Analogous to the corresponding case $(I',R')$ under (1). 
    \end{mylist1}
  \item[(3)] The proof is by induction on the structure of $e$. We omit the trivial cases. 
    \begin{mylist1}
    \item[\CASE $x'$:] Suppose $x'\neq x$.
      By $\jdgtE{\Gamma,x:u_0}{x'}{\type}$ we have $x'\in\dom{\Gamma}$.
      By Lemma~\ref{lem:smaller_dom} we have $\jdgtE{\Gamma}{x'}{\type}$.

      Suppose $x'=x$.
      By $\jdgtE{\Gamma,x:\type_0}{x}{\type}$ we have $\type_0=\type$ and $\type\neq\loctp$.
      By $\jdgtE{\emptyTPEnv}{\val}{\type_0}$ we have $\jdgtE{\Gamma}{\val}{\type_0}$
      ($\val$ is some $\num$, some $\str$ or some $\TBID$).
      Hence it holds that $\jdgtE{\Gamma}{x~\![\val/x]}{\type}$. 
    \item[\CASE $e_1\strcon e_2$:]
      By $\jdgtE{\ext{\Gamma}{x:\tau_0}}{e_1\strcon e_2}{\tau}$ we have
      $\jdgtE{\ext{\Gamma}{x:\tau_0}}{e_1}{\type}$, $\jdgtE{\ext{\Gamma}{x:\tau_0}}{e_2}{\type}$, and
      $\type=\strtp$.
      By the \IH we have $\jdgtE{\Gamma}{e_1[\val/x]}{\strtp}$ and $\jdgtE{\Gamma}{e_2[\val/x]}{\strtp}$.
      We have $e_1[\val/x]\strcon e_2[\val/x]=(e_1\strcon e_2)[\val/x]$; 
      hence it can be established that $\jdgtE{\Gamma}{(e_1\strcon e_2)[\val/x]}{\type}$. 
    \item[\CASE $e_1~\!\aop~\! e_2$:] Analogous to \CASE $e_1\strcon e_2$. 
    \item[\CASE $\{e_1,...,e_n\}$:] Analogous to \CASE $e_1\strcon e_2$.
    \end{mylist1}
  \item[(4)] The proof is by induction on the structure of $e$.
    The main differences compared with the proof of (3) lie with the treatment of
    data variables and locality variables.
    Of the non-inductive cases, we only exhibit the one for locality variables.
    \begin{mylist1}
    \item[\CASE $u'$:] Suppose $u'\neq u$.
      The reasoning is analogous to the sub-case $x'\neq x$ under \CASE $x'$ of (3).

      Suppose $u'=u$.
      By $\jdgtE{\Gamma,u:\loctp}{u}{\type}$ we have $\type=\loctp$.
      Since $l$ is a locality, it can be established that $\jdgtE{\Gamma}{u~\![l/u]}{\type}$. 
    \end{mylist1}
    The inductive cases are analogous to those of (3).
  \item[(5)] The proof is by induction on the structure of $\psi$.
    We only present the representative cases below. 
    \begin{mylist1}
    \item[\CASE $e_1~\!\cop~\!e_2$:]
      Because of $\jdgtpred{\ext{\Gamma}{x:\tau_0}}{e_1~\!\cop~\!e_2}$ we have
      $\jdgtE{\ext{\Gamma}{x:\type_0}}{e_1}{\dttp}$ and $\jdgtE{\ext{\Gamma}{x:\type_0}}{e_2}{\dttp}$,
      for some $\dttp\in\{\inttp,\strtp,\idtp,\loctp\}$.
      By (3), we get $\jdgtE{\Gamma}{e_1[\val/x]}{\dttp}$ and $\jdgtE{\Gamma}{e_2[\val/x]}{\dttp}$.
      Since $(e_1~\!\cop~\!e_2)[\val/x]=(e_1[\val/x]~\!\cop~\!e_2[\val/x])$ holds,
      we have $\jdgtpred{\Gamma}{(e_1~\!\cop~\! e_2)[\val/x]}$. 
    \item[\CASE $\psi_1\land\psi_2$:] Because of $\jdgtpred{\ext{\Gamma}{x:\tau_0}}{\psi_1\land\psi_2}$ we have
      $\jdgtpred{\ext{\Gamma}{x:\tau_0}}{\psi_1}$ and $\jdgtpred{\ext{\Gamma}{x:\tau_0}}{\psi_2}$.
      By the \IH we have $\jdgtpred{\Gamma}{\psi_1[\val/x]}$ and $\jdgtpred{\Gamma}{\psi_2[\val/x]}$.
      Since $(\psi_1\land \psi_2)[\val/x]=(\psi_1[\val/x]\land \psi_2[\val/x])$ holds, 
      it can be established that $\jdgtpred{\Gamma}{(\psi_1\land\psi_2)[\val/x]}$. 
    \end{mylist1}
  \item[(6)] Analogous to (5).
  \item[(7)] The proof is straightforward using (3). 
  \item[(8)] The proof is straightforward using (4). \qedhere
  \end{mylist1}
\end{proof}

\begin{lem}[Substitution for Actions and Processes]\label{lem:subst_aP}
The following statements hold. 
\begin{enumerate}
  \item 
    If $\jdgta{(\ext{\Gamma}{x:\tau_0}),\itfc}{a}{\Gamma'}$, $\BV{a}\cap\dom{\Gamma,x:\type_0}=\emptyset$, 
    and $\val$ is such that $\emptyTPEnv\vdash\val:\tau_0$, then 
    $\jdgta{\Gamma,\itfc}{a[\val/x]}{\Gamma'}$ holds; \\
    if $\jdgtP{(\ext{\Gamma}{x:\tau_0}),\itfc}{P}$, $\BV{P}\cap\dom{\Gamma,x:\type_0}=\emptyset$, 
    and $\val$ is such that $\emptyTPEnv\vdash\val:\tau_0$, then
    $\jdgtP{\Gamma,\itfc}{P[\val/x]}$ holds. 
  \item 
    If $\jdgta{(\ext{\Gamma}{\TBV:\tau_0}),\itfc}{a}{\Gamma'}$, $\BV{a}\cap\dom{\Gamma,\TBV:\type_0}=\emptyset$, 
    and $\emptyTPEnv,\itfc\vdash(I,R):\tau_0$, 
    then $\jdgta{\Gamma,\itfc}{a[(I,R)/\TBV]}{\Gamma'}$ holds; \\
    if $\jdgtP{(\ext{\Gamma}{\TBV:\tau_0}),\itfc}{P}$, $\BV{P}\cap\dom{\Gamma,\TBV:\type_0}=\emptyset$, 
    and $\emptyTPEnv,\itfc\vdash(I,R):\tau_0$, then 
    $\jdgtP{\Gamma,\itfc}{P[(I,R)/\TBV]}$ holds. 
  \item 
    If $\jdgta{(\ext{\Gamma}{u:\loctp}),\itfc}{a}{\Gamma'}$, $\BV{a}\cap\dom{\Gamma,u:\loctp}=\emptyset$,
    and $l$ is a locality, then $\jdgta{\Gamma,\itfc}{a[l/u]}{\Gamma'}$ holds; \\
    if $\jdgtP{(\ext{\Gamma}{u:\loctp}),\itfc}{P}$,
    $\BV{P}\cap\dom{\Gamma,u:\loctp}=\emptyset$, and $l$ is a locality, then 
    $\jdgtP{\Gamma,\itfc}{P[l/u]}$ holds. 
\end{enumerate}
\end{lem}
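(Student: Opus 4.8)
The plan is to prove the three statements simultaneously by a mutual induction on the structure of the action $a$ and the process $P$; the induction is genuinely mutual because $\EVALP{P}{\ell}$ is an action containing a process while $a.P$ is a process containing an action. Parts (1), (2) and (3) are argued in lockstep, since the only differences lie in which clauses of Lemma~\ref{lem:subst_tept} are invoked on the subcomponents (clauses (3),(5),(7) for substituting a value for a data variable; clause (1) for substituting a table for a table variable; clauses (2),(4),(6),(8) for substituting a locality for a locality variable), supplemented by Lemma~\ref{lem:fv_in_dom} and Lemma~\ref{lem:smaller_dom} to discard the bound variable from the environment in the (many) subcomponents that the substitution leaves unchanged. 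I spell out the argument for part (1).

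For the leaf actions $\INS{\TBID}{t}{\ell}$, $\DEL{\TBID}{T}{\psi}{\ell}$, $\UPDATE{\TBID}{T}{\psi}{t}{\ell}$, $\SEL{\LST{\CALTB}}{T}{\psi}{t}{!\TBV}$, $\AGGR{\TBID}{T}{\psi}{f}{T'}{\ell}$, $\CREATENEW{\TBID@\ell}{\itfc(\TBID)}$ and $\DROP{\TBID}{\ell}$, I first observe that, because bound variables are renamed apart, $x$ occurs in none of the templates inside $a$; these templates are therefore untouched by $[\val/x]$, and with them the environments $\Gamma'$ they induce from the substitution-invariant schemas $\itfc(\TBID)$ and $\flattensk(\tau_1\times\cdots\times\tau_n)$, as well as the schemas (such as $\tau'$ in $[\TBV:\tau']$) that end up in the derived environment. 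The remaining work is to re-derive the typing of the localities, source tables, tuples and predicates of $a$ under substitution: for a locality or a tuple this is a direct appeal to Lemma~\ref{lem:subst_tept}(3),(7); the source tables of a $\SEL$ carry no data variables in part (1) and need nothing. For a predicate $\psi$ typed under $\ext{\ext{\Gamma}{x:\tau_0}}{\Gamma'}$, the point is that $x$ is in the middle of the environment, so I use $x\notin\dom{\Gamma'}$ together with Lemma~\ref{lem:reordering} to move $x$ to the tail, obtaining $\jdgtpred{\ext{\ext{\Gamma}{\Gamma'}}{x:\tau_0}}{\psi}$, and then apply Lemma~\ref{lem:subst_tept}(5).

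The action that genuinely needs the inductive hypothesis is $\EVALP{P}{\ell}$: its typing rule requires $\jdgtP{\ext{\Gamma}{x:\tau_0},\itfc}{P}$, and since $\BV{\EVALP{P}{\ell}}=\BV{P}$ the hypothesis on $a$ supplies precisely the side condition for the process half of the inductive hypothesis, yielding $\jdgtP{\Gamma,\itfc}{P[\val/x]}$; combining this with Lemma~\ref{lem:subst_tept}(3) on $\ell$ and $(\EVALP{P}{\ell})[\val/x]=\EVALP{P[\val/x]}{\ell[\val/x]}$ closes the case. For processes, $\NILP$ is immediate; $P_1;P_2$ and $\FOREACH{\CALTB}{T}{\psi}{\ordr}{P}$ combine the table and predicate clauses of Lemma~\ref{lem:subst_tept} and Lemma~\ref{lem:reordering} (for the middle-position issue in the loop body, exactly as for $\DEL$) with the process inductive hypothesis; and $\CALL{A}{\LST{e}}$ needs only Lemma~\ref{lem:subst_tept}(3) on the arguments $\LST{e}$, because the procedure body is typed in the environment $\avar_1\!:\!\tau_1,\ldots,\avar_n\!:\!\tau_n$ of formal parameters, which does not mention $\Gamma$ or $x$. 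The load-bearing process case is prefixing $a.P$: from $\jdgta{\ext{\Gamma}{x:\tau_0},\itfc}{a}{\Gamma''}$ the action half of the inductive hypothesis gives $\jdgta{\Gamma,\itfc}{a[\val/x]}{\Gamma''}$; then, from $\dom{\Gamma''}\subseteq\BV{a}$ (Lemma~\ref{lem:fv_in_dom}(5)), $\BV{P}\cap\BV{a}=\emptyset$ and the hypothesis on $a.P$, I deduce $x\notin\dom{\Gamma''}$ and $\BV{P}\cap\dom{(\ext{\Gamma}{\Gamma''}),x:\tau_0}=\emptyset$, reorder so $x$ is last, and apply the process half of the inductive hypothesis to get $\jdgtP{\ext{\Gamma}{\Gamma''},\itfc}{P[\val/x]}$; reassembling via the prefixing rule and $(a.P)[\val/x]=a[\val/x].P[\val/x]$ completes it. I expect the main obstacle to be purely administrative: keeping the disjointness invariants $\BV{\cdot}\cap\dom{\cdot}=\emptyset$ alive across the environment extensions (by $\Gamma''$ in prefixing, by $\Gamma'$ in $\DEL$, $\SEL$, $\AGGR$ and $\FOREACH$), so that every appeal to Lemma~\ref{lem:reordering} and Lemma~\ref{lem:subst_tept} is justified once $x$ has been pushed to the tail of the environment. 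With that discipline fixed every case is mechanical, and parts (2) and (3) go through verbatim with $\TBV\!:\!\tau_0$, respectively $u\!:\!\loctp$, in place of $x\!:\!\tau_0$.
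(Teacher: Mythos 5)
Your proposal is correct and follows essentially the same route as the paper's proof: a structural induction over actions and processes in which templates and their derived environments are unchanged, Lemma~\ref{lem:reordering} pushes the substituted binding to the tail so Lemma~\ref{lem:subst_tept} applies to predicates and tuples, the untouched subcomponents are retyped after discarding the binding, and the induction hypothesis carries the $\EVALP{P}{\ell}$ and prefixing cases with the same bound-variable disjointness bookkeeping. The only deviations (appealing to Lemma~\ref{lem:subst_tept} directly on localities where the paper uses Lemmas~\ref{lem:fv_in_dom} and~\ref{lem:smaller_dom}, and treating parts (1)--(3) in lockstep rather than presenting (1) and declaring (2)--(3) analogous) are cosmetic.
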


\begin{proof}\hfill
\begin{mylist1}
  \item[(1)] We proceed with an induction on the structure of actions and processes. 
  \begin{mylist1}
    \item[\CASE $\INS{\TBID}{t}{\lloc}$:] 
      Due to $\jdgta{(\Gamma,x:\type_0),\itfc}{\INS{\TBID}{t}{\lloc}}{\Gamma'}$ 
      we obtain $\jdgtT{\Gamma,x:\type_0}{t}{\itfc(\TBID)}$, $\jdgtE{\Gamma,x:\type_0}{\lloc}{\loctp}$, 
      and $\Gamma'=\emptyTPEnv$. 
      By Lemma~\ref{lem:subst_tept}, we get $\jdgtT{\Gamma}{t[\val/x]}{\itfc(\TBID)}$. 
      By Lemma~\ref{lem:fv_in_dom} and Lemma~\ref{lem:smaller_dom} we have $\jdgtE{\Gamma}{\lloc}{\loctp}$. 
      Since $\INS{\TBID}{t}{\lloc}[\val/x]$ gives $\INS{\TBID}{t[\val/x]}{\lloc}$, 
      we can indeed establish 
      \small
      $$\jdgta{\Gamma,\itfc}{\INS{\TBID}{t}{\lloc}[\val/x]}{\Gamma'}.$$
      \normalsize
    \item[\CASE $\DEL{\TBID}{T}{\PRED}{\lloc}$:]
      Because of $\jdgta{(\Gamma,x:\type_0),\itfc}{\DEL{\TBID}{T}{\PRED}{\lloc}}{\Gamma'}$ 
      there exists some typing environment $\Gamma''$ such that 
      $\jdgtT{\itfc(\TBID)}{T}{\Gamma''}$, $\jdgtpred{(\Gamma,x:\type_0,\Gamma''),\itfc}{\PRED}$, 
      $\jdgtE{\Gamma,x:\type_0}{\lloc}{\loctp}$, and $\Gamma'=\emptyTPEnv$. 
      Since $\BV{\DEL{\TBID}{T}{\PRED}{\lloc}}\cap\dom{\Gamma,x:\type_0}=\emptyset$, 
      we have $\dom{\Gamma''}\cap\dom{\Gamma,x:\type_0}=\emptyset$. 
      By Lemma~\ref{lem:reordering}, we have $\jdgtpred{\Gamma,\Gamma'',x:\type_0}{\PRED}$. 
      By Lemma~\ref{lem:subst_tept}, we have $\jdgtpred{\Gamma,\Gamma''}{\PRED[\val/x]}$. 
      By Lemma~\ref{lem:fv_in_dom} and Lemma~\ref{lem:smaller_dom} we have $\jdgtE{\Gamma}{\lloc}{\loctp}$. 
      Combining this with the typing for $T$ we can establish 
      \small
      $$\jdgta{\Gamma,\itfc}{\DEL{\TBID}{T}{\PRED}{\lloc}[\val/x]}{\Gamma'},$$
      \normalsize
      using the fact that the substitution gives $\DEL{\TBID}{T}{\PRED[\val/x]}{\lloc}$.       
    \item[\CASE $\SEL{\LST{\CALTB}}{T}{\psi}{t}{!\TBV}$:] 
      By $\jdgta{(\Gamma,x:\type_0),\itfc}{\SEL{\LST{\CALTB}}{T}{\psi}{t}{!\TBV}}{\Gamma'}$ 
      we have $\jdgtE{(\Gamma,x:\type_0),\itfc}{\CALTB_j}{\type_j}$ for all $j\in\{1,...,n\}$, 
      $\jdgtT{\flattensk(\type_1\times...\times\type_n)}{T}{\Gamma''}$ for some typing environment $\Gamma''$ 
      such that $\jdgtpred{(\Gamma,x:\type_0),\Gamma''}{\PRED}$, 
      and $\jdgtT{(\Gamma,x:\type_0),\Gamma''}{t}{\type'}$, 
      for some $\type'$ such that $\Gamma'=[\TBV:\type']$. 
      For each $j\in\{1,...,n\}$ we have $\CALTB_j[\val/x]=\CALTB_j$. 
      By Lemma~\ref{lem:fv_in_dom}, we have $\FV{\CALTB_j}\subseteq \dom{\Gamma}\cup\{x\}$ for each $j$. 
      Since it is obvious that the only kind of variables each $\CALTB_j$ can contain are table variables, 
      we have $\FV{\CALTB_j}\subseteq \dom{\Gamma}$. 
      By Lemma~\ref{lem:smaller_dom}, we have 
      \small
      \begin{equation}\label{eq:sub_a_tbtp'} 
      \forall j\in\{1,...,n\}: \jdgtE{\Gamma,\itfc}{\CALTB_j}{\type_j}  
      \end{equation}
      \normalsize
      By Lemma~\ref{lem:reordering} and Lemma~\ref{lem:subst_tept} (analogously to the previous case), 
      we have 
      \small
      \begin{equation}\label{eq:sub_a_predtp'} 
      \jdgtpred{\Gamma,\Gamma''}{\PRED[\val/x]} 
      \end{equation}
      \normalsize
      Again by Lemma~\ref{lem:reordering} and Lemma~\ref{lem:subst_tept}, we have 
      \small
      \begin{equation}\label{eq:sub_a_ttp'} 
      \jdgtT{\Gamma,\Gamma''}{t[\val/x]}{\type'} 
      \end{equation}
      \normalsize 
      Combining \eqref{eq:sub_a_tbtp'}, \eqref{eq:sub_a_predtp'}, \eqref{eq:sub_a_ttp'}, 
      and the original typing for $T$, we can establish 
      $\jdgta{\Gamma,\itfc}{\SEL{\LST{\CALTB}}{T}{\psi[\val/x]}{t[\val/x]}{!\TBV}}{\Gamma'}$,  
      which gives the desired result since we have  
      $\SEL{\LST{\CALTB}}{T}{\psi}{t}{!\TBV}[\val/x]=
      \SEL{\LST{\CALTB}}{T}{\psi[\val/x]}{t[\val/x]}{!\TBV}$. 
    \item[\CASE $\UPDATE{\TBID}{T}{\PRED}{t}{\lloc}$:] Analogous to the previous cases. 
    \item[\CASE $\AGGR{\TBID}{T}{\PRED}{f}{T'}{\lloc}$:] Analogous to the previous cases. 
    \item[\CASE $\CREATENEW{\TBID@\lloc}{\TBSK}$:] Trivial.
    \item[\CASE $\DROP{\TBID}{\lloc}$:] Trivial. 
    \item[\CASE $\EVAL{P'}{\lloc}$:] 
    Because of $\jdgta{(\Gamma,x:\type_0),\itfc}{\EVAL{P'}{\lloc}}{\Gamma'}$ 
    we have $\jdgtP{(\Gamma,x:\type_0),\itfc}{P'}$, $\jdgtE{\Gamma,x:\type_0}{\lloc}{\loctp}$, 
    and $\Gamma'=\emptyTPEnv$.\!
    Since $\BV{\EVAL{P'}{\lloc}}\cap\dom{\Gamma,x:\type_0}=\emptyset$,
    we have $\BV{P'}\cap\dom{\Gamma,x:\type_0}=\emptyset$. 
    By the \IH we have $\jdgtP{\Gamma,\itfc}{P'[\val/x]}$, and
    by Lemmas \ref{lem:fv_in_dom} and~\ref{lem:smaller_dom} 
    we get $\jdgtE{\Gamma}{\lloc}{\loctp}$. 
    Hence we can establish $\jdgta{\Gamma,\itfc}{\EVAL{P'[\val/x]}{\lloc}}{\Gamma'}$, 
    which gives the desired result since
    $(\EVAL{P'}{\lloc})[\val/x]=\EVAL{P'[\val/x]}{\lloc}.$

    \item[\CASE $\NILP$:] Trivial. 
    \item[\CASE $a'.P'$:] By $\jdgtP{(\Gamma,x:\type_0),\itfc}{a'.P'}$ 
      we have $\jdgta{(\Gamma,x:\type_0),\itfc}{a'}{\Gamma''}$ for some $\Gamma''$ such that
      $\jdgtP{(\Gamma,x:\type_0,\Gamma''),\itfc}{P'}$. 
      By the \IH we have $\jdgta{\Gamma,\itfc}{a'[\val/x]}{\Gamma''}$.
      By Lemma~\ref{lem:reordering} we have $\jdgtP{\Gamma,\Gamma'',x:\type_0}{P'}$.
      By $\BV{a'.P'}\cap\dom{\Gamma,x:\type_0}=\emptyset$ we have
      $\BV{P'}\cap\dom{\Gamma,x:\type_0}=\emptyset$.
      Since bound variables are renamed apart we have $\BV{P'}\cap\BV{a'}=\emptyset$, and it follows that
      $\BV{P'}\cap\dom{\Gamma,\Gamma'',x:\type_0}=\emptyset$. 
      By the \IH we then have $\jdgtP{(\Gamma,\Gamma''),\itfc}{P'[\val/x]}$.
      Hence we can establish $\jdgtP{\Gamma,\itfc}{(a'.P')[\val/x]}$.
    \item[\CASE $P_1;P_2$:] Analogous.
    \item[\CASE $A(\LST{e})$:] By $\jdgtP{(\Gamma,x:\type_0),\itfc}{A(\LST{e})}$, we have
      $\jdgtE{\Gamma,x:\type_0}{e_1}{\type_1}$, ..., $\jdgtE{\Gamma,x:\type_0}{e_n}{\type_n}$, and 
      $\jdgtP{(\mi{var}_1:\type_1,...,\mi{var}_n:\type_n),\itfc}{P'}$, 
      where $P'$ is the procedure body. 
      By Lemma~\ref{lem:subst_tept}, we have $\jdgtE{\Gamma}{e_1[\val/x]}{\type_1}$, ...,
      $\jdgtE{\Gamma}{e_n[\val/x]}{\type_n}$. 
      Hence we can establish $\jdgtP{\Gamma,\itfc}{A(\LST{e})[\val/x]}$
      (Note that the substitution is not performed to $P'$). 
    \item[\CASE $\FOREACH{\CALTB}{T}{\PRED}{\ordr}{P'}$:]
      By $\jdgtP{(\Gamma,x:\type_0),\itfc}{\FOREACH{\CALTB}{T}{\PRED}{\ordr}{P'}}$ we have\\
      $\jdgtT{(\Gamma,x:\type_0),\itfc}{\CALTB}{\type}$ for some $\type$ such that
      $\jdgtT{\type}{T}{\Gamma''}$ for some $\Gamma''$ such that 
      $\jdgtpred{\Gamma,x:\type_0,\Gamma''}{\psi}$, and
      $\jdgtP{(\Gamma,x:\type_0,\Gamma''),\itfc}{P'}$.
      By Lemma~\ref{lem:fv_in_dom} we have $\FV{\CALTB}\subseteq \dom{\Gamma,x:\type_0}$.
      Since the only kind of variables that $\CALTB$ can contain are table variables, 
      we have $\FV{\CALTB}\subseteq\dom{\Gamma}$. 
      By Lemma~\ref{lem:smaller_dom}, and $\CALTB[\val/x]=\CALTB$, 
      we have
      \small
      \begin{equation}
        \label{eq:sub_P_tbtp'}
        \jdgtE{\Gamma,\itfc}{\CALTB[\val/x]}{\type}
      \end{equation}
      \normalsize
      Since $\BV{\FOREACH{\CALTB}{T}{\PRED}{\ordr}{P'}}\cap\dom{\Gamma,x:\type_0}=\emptyset$, 
      we have $\BV{T}\cap\dom{\Gamma,x:\type_0}=\emptyset$, 
      and $\dom{\Gamma''}\cap\dom{\Gamma,x:\type_0}=\emptyset$. 
      By Lemma~\ref{lem:reordering}, we have
      $\jdgtpred{\Gamma,\Gamma'',x:\type_0}{\PRED}$. 
      By Lemma~\ref{lem:subst_tept}, we have
      \small
      \begin{equation}
        \label{eq:sub_P_predtp'}
        \jdgtpred{\Gamma,\Gamma''}{\PRED[\val/x]}
      \end{equation}
      \normalsize
      Again by Lemma~\ref{lem:reordering}, we have $\jdgtP{(\Gamma,\Gamma'',x:\type_0),\itfc}{P'}$.
      It is not difficult to see that $\BV{P'}\cap\dom{\Gamma,\Gamma'',x:\type_0}=\emptyset$. 
      By the \IH we have
      \small
      \begin{equation}
        \label{eq:sub_P_Ptp'}
        \jdgtP{(\Gamma,\Gamma''),\itfc}{P'[\val/x]}
      \end{equation}
      \normalsize
      Combining \eqref{eq:sub_P_tbtp'}, \eqref{eq:sub_P_predtp'}, \eqref{eq:sub_P_Ptp'}, and
      $\jdgtT{\type}{T}{\Gamma''}$, we can establish
      \small
      $$\jdgtP{\Gamma,\itfc}{(\FOREACH{\CALTB}{T}{\PRED}{\ordr}{P'})[\val/x]}.$$
      \normalsize
    \end{mylist1}
    
  \item[(2)] We proceed with an induction on the structure of actions and processes.
    The reasoning is largely similar to that of (1); hence we only present a few representative cases
    where the differences from (1) can be demonstrated. 
    \begin{mylist1}
    \item[\CASE $\INS{\TBID}{t}{\lloc}$:]
       By $\jdgta{(\Gamma,\TBV:\type_0),\itfc}{\INS{\TBID}{t}{\lloc}}{\Gamma'}$ 
       we have $\jdgtT{\Gamma,\TBV:\type_0}{t}{\itfc(\TBID)}$, 
       $\jdgtE{\Gamma,\TBV:\type_0}{\lloc}{\loctp}$, 
       and $\Gamma'=\emptyTPEnv$.
       Tuples cannot contain table variables; hence $t[(I,R)/\TBV]=t$.
       By Lemma~\ref{lem:fv_in_dom} we have $\FV{t}\subseteq \dom{\Gamma,\TBV:\type_0}$.
       Hence we also have $\FV{t}\subseteq \dom{\Gamma}$.
       By Lemma~\ref{lem:smaller_dom} we have $\jdgtT{\Gamma}{t}{\itfc(\TBID)}$.
       Analogously to the same case under (1), we can deduce $\jdgtE{\Gamma}{\lloc}{\loctp}$. 
       We can now establish
       \small
       $$\jdgtT{\Gamma}{(\INS{\TBID}{t}{\lloc})[(I,R)/\TBV]}{\Gamma'}.$$
       \normalsize
     \item[\CASE $\SEL{\LST{\CALTB}}{T}{\psi}{t}{!\TBV'}$:]
       By $\jdgta{(\Gamma,\TBV:\type_0),\itfc}{\SEL{\LST{\CALTB}}{T}{\psi}{t}{!\TBV'}}{\Gamma'}$
       we have
       $\jdgtT{(\Gamma,\TBV:\type_0),\itfc}{\CALTB_1}{\type_1}$, ...,
       $\jdgtT{(\Gamma,\TBV:\type_0),\itfc}{\CALTB_n}{\type_n}$,
       for some $\type_1$, ..., $\type_n$, such that
       $\jdgtT{\flattensk(\type_1\times...\times\type_n)}{T}{\Gamma''}$, for some $\Gamma''$ such that
       $\jdgtpred{\Gamma,\TBV:\type_0,\Gamma''}{\PRED}$, and
       $\jdgtT{\Gamma,\TBV:\type_0,\Gamma''}{t}{\type'}$ for some $\type'$ such that
       $\Gamma'=[\TBV':\type']$.
       
       By Lemma~\ref{lem:subst_tept} we have
       $\jdgtT{\Gamma,\itfc}{\CALTB_1[(I,R)/\TBV]}{\type_1}$, ...,
       $\jdgtT{\Gamma,\itfc}{\CALTB_n[(I,R)/\TBV]}{\type_n}$.
       We have $\PRED[(I,R)/\TBV]=\PRED$ and $t[(I,R)/\TBV]=t$, and 
       using Lemma~\ref{lem:fv_in_dom} and Lemma~\ref{lem:smaller_dom} we have
       $\jdgtpred{\Gamma,\Gamma''}{\PRED}$ and $\jdgtT{\Gamma,\Gamma''}{t}{\type'}$.
       It is essentially only the tables $\CALTB_1$, ..., $\CALTB_n$ that are potentially affected by
       the substitution, and we can now establish 
       \small
       $$
       \jdgta{\Gamma,\itfc}{(\SEL{\LST{\CALTB}}{T}{\PRED}{t}{!\TBV'})[(I,R)/\TBV]}{\Gamma'}. 
       $$
       \normalsize
     \end{mylist1}
   \item[(3)] The proof is still by induction on the structure of actions and processes.
     We omit the details since they are analogous to those of (1) and (2) above. \qedhere
  \end{mylist1}
\end{proof}

\Comment{
\begin{lem}
  If $\jdgtpred{\Gamma[x:\tau_0]}{\psi}$ and $\emptyTPEnv\vdash\val:\tau_0$ then $\EVALT{\psi[\val/x]}{}\neq \Err$. 
\end{lem}
}

\begin{lem}\label{lem:tp_T_sat_sk}
  If $\jdgtT{\sktp}{T}{\Gamma}$, then $T\sattsk \sktp$. 
\end{lem}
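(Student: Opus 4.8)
The plan is to prove this by a straightforward induction on the derivation of $\jdgtT{\sktp}{T}{\Gamma}$, taking advantage of the tight correspondence between the typing rules for templates in Figure~\ref{fig:tp_ett} and the well-sortedness rules for templates in Figure~\ref{fig:sat_sk_etpl}. I would split on the last typing rule applied, of which there are three.

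If it is the rule for a single data field, then $T$ is ${!x}$ and $\sktp$ is a multiset type $\msettp$ with $\msettp\neq\loctp$. I would then unfold the grammar of multiset types: $\msettp$ must be one of $\inttp$, $\strtp$, $\idtp$, $\msett{\inttp}$, $\msett{\strtp}$, $\msett{\loctp}$, or $\msett{\idtp}$; in each case one of the axioms $!x\sattsk\inttp$, $!x\sattsk\strtp$, $!x\sattsk\idtp$, or $!x\sattsk\msett{\dttp}$ delivers $!x\sattsk\msettp$ immediately. If it is the rule for a single locality field, then $T$ is ${!u}$, $\sktp$ is $\loctp$, and $!u\sattsk\loctp$ is itself an axiom.

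If it is the rule for a composite template, then $T = \bd_1,...,\bd_n$ and $\sktp = \sktp_1\times...\times\sktp_n$, with sub-derivations $\jdgtT{\sktp_j}{\bd_j}{\Gamma_j}$ for every $j$. I would apply the induction hypothesis to each sub-derivation to obtain $\bd_j\sattsk\sktp_j$, and then conclude $\bd_1,...,\bd_n\sattsk\sktp_1\times...\times\sktp_n$ by the composite well-sortedness rule for templates, which is exactly $T\sattsk\sktp$.

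I do not expect any genuine difficulty here: the argument is a routine structural induction whose two halves (the typing derivation and the well-sortedness derivation) are built essentially rule-for-rule in parallel. The only point calling for a moment's care is the single data field case, where one has to observe that the side condition $\msettp\neq\loctp$ in the typing rule excludes precisely the one shape of $\msettp$ for which no $!x$ well-sortedness axiom is available.
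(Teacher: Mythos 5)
Your proposal is correct and follows essentially the same route as the paper: the paper first establishes the single-field claim by the same case split on $!x$ (enumerating the multiset types excluded from $\loctp$) and $!u$, and then lifts it to composite templates exactly as in your inductive step. No gaps.
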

\begin{proof}{~}\\
We first show that if $\jdgtT{\type}{\bd}{\Gamma}$, then $W\sattsk\type$. 
There are two cases:
\begin{mylist1}
\item[$W=!x$:] We have $\jdgtT{\msettp}{!x}{[x:\msettp]}$ where $\type=\msettp$ and $\msettp\neq\loctp$. 
  The type $\msettp$ can thus be one of 
  $\inttp$, $\strtp$, $\idtp$, $\msett{\inttp}$, $\msett{\strtp}$, $\msett{\idtp}$, $\msett{\loctp}$. 
  By the definition of $\cdot\sattsk\cdot$, we can always establish 
  $!x\sattsk \msettp$, which is $!x\sattsk\type$. 
\item[$W=!u$:] We have $\jdgtT{\loctp}{!u}{[u:\loctp]}$ and $\type=\loctp$. 
  By the definition of $\cdot\sattsk\cdot$, we can directly establish $!u\sattsk\type$. 
\end{mylist1}
We now show that if $\jdgtT{\type}{T}{\Gamma}$, then $T\sattsk\type$. 
Suppose $T$ is $\bd_1,...,\bd_n$. 
By $\jdgtT{\type}{T}{\Gamma}$ we have 
$\jdgtT{\type_1}{\bd_1}{\Gamma_1}$, ..., $\jdgtT{\type_n}{\bd_n}{\Gamma_n}$
for some $\type_1$, ..., $\type_n$ and $\Gamma_1$, ..., $\Gamma_n$ such that 
$\type=\type_1\times...\times\type_n$ and $\Gamma=\Gamma_1,...,\Gamma_n$. 
Hence we have $\bd_1\sattsk \type_1$, ..., $\bd_n\sattsk \type_n$. 
By the definition of $\cdot\sattsk\cdot$ we have $\bd_1,...,\bd_n\sattsk \type_1\times...\times\type_n$, 
which is $T\sattsk\type$. 
\end{proof}

\begin{lem}\label{lem:sub_eval_n_err}
	If $t$ does not contain operators ($\strcon$, $\aop$), 
  $\jdgtt{\emptyTPEnv}{t}{\tau}$ and $\jdgtT{\tau}{T}{\Gamma}$, then $\subst{t}{T}\neq\Err$ holds. 
  Additionally, the following statements hold. 
  \begin{enumerate}
    \item If $\jdgtpred{\Gamma}{\psi}$, and $\BV{T}\supseteq \FV{\psi}$, 
      then we have $\EVALPred{\psi(\subst{t}{T})}\neq\Err$.
    \item If $\jdgtt{\Gamma}{t'}{\tau'}$, and $\BV{T}\supseteq \FV{t'}$,
      then we have $\EVALT{t'(\subst{t}{T})}\sattsk\tau'$. 
  \end{enumerate}
\end{lem}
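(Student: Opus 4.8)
The plan is to establish the first claim by a direct case analysis on the fields of $T$, and then to obtain statements (1) and (2) uniformly by chaining the substitution lemma (Lemma~\ref{lem:subst_tept}) with the agreement between typing and evaluation (Lemma~\ref{lem:tp_eval}).

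For $\subst{t}{T}\neq\Err$: from $\jdgtt{\emptyTPEnv}{t}{\tau}$ I read off $\tau=\tau_1\times\cdots\times\tau_n$ and $t=e_1,\ldots,e_n$ with $\jdgtt{\emptyTPEnv}{e_j}{\tau_j}$ for each $j$; since $t$ is operator-free and typed in the empty environment, each $e_j$ is a literal or a multiset of literals, so $\EVALT{e_j}=e_j$ and, by Lemma~\ref{lem:tp_eval}(1), $e_j\sattsk\tau_j$. From $\jdgtT{\tau}{T}{\Gamma}$ I read off $T=\bd_1,\ldots,\bd_n$ (same arity $n$), $\jdgtT{\tau_j}{\bd_j}{\Gamma_j}$, and $\Gamma=\Gamma_1,\ldots,\Gamma_n$. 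I then case on $\bd_j$: if $\bd_j={!x}$ the template typing rule forces $\tau_j\neq\loctp$, whence $e_j$ is a non-locality value or a multiset, and in every such case $\subst{e_j}{!x}=[e_j/x]\neq\Err$; if $\bd_j={!u}$ then $\tau_j=\loctp$, so $e_j$ is some locality $l$ and $\subst{l}{!u}=[l/u]\neq\Err$. Since all component matches succeed and the arities agree, $\sigma:=\subst{t}{T}\neq\Err$. The same case analysis yields the fact I reuse below: $\dom{\sigma}=\BV{T}=\dom{\Gamma}$, and for every $v\in\dom{\Gamma}$ the term $\sigma(v)$ is closed and operator-free with $\sigma(v)\sattsk\Gamma(v)$, hence $\jdgtt{\emptyTPEnv}{\sigma(v)}{\Gamma(v)}$ by the converse direction of Lemma~\ref{lem:tp_eval}(1) (for a locality variable this is immediate).

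For statement (1): since $\FV{\psi}\subseteq\BV{T}=\dom{\Gamma}$, I peel the bindings of $\Gamma$ off one at a time (reordering $\Gamma$ when necessary via Lemma~\ref{lem:reordering}), each time applying Lemma~\ref{lem:subst_tept}(5) for a data variable or Lemma~\ref{lem:subst_tept}(6) for a locality variable, using $\jdgtt{\emptyTPEnv}{\sigma(v)}{\Gamma(v)}$ as the side condition; because all the values $\sigma(v)$ are closed and the variables distinct, the resulting chain of single-variable substitutions equals the simultaneous substitution, so this transforms $\jdgtpred{\Gamma}{\psi}$ into $\jdgtpred{\emptyTPEnv}{\psi\sigma}$. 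Lemma~\ref{lem:tp_eval}(3) then gives $\EVALPred{\psi\sigma}\neq\Err$, i.e.\ $\EVALPred{\psi(\subst{t}{T})}\neq\Err$. Statement (2) is identical in form: from $\FV{t'}\subseteq\BV{T}=\dom{\Gamma}$ and repeated use of Lemma~\ref{lem:subst_tept}(7) and~(8) I obtain $\jdgtt{\emptyTPEnv}{t'\sigma}{\tau'}$, and Lemma~\ref{lem:tp_eval}(2) yields $\EVALT{t'\sigma}\sattsk\tau'$.

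The only delicate point --- and the only place a sloppy argument could fail --- is the bookkeeping around $\sigma=\subst{t}{T}$: one has to verify that the environment $\Gamma$ produced by typing $T$ assigns to each bound variable precisely the sort that the aligned component of $t$ satisfies, so that the values $\sigma(v)$ are correctly typed inputs for the substitution lemma, and one has to invoke that lemma in an order compatible with the shape of $\Gamma$. Once these alignments are pinned down, the rest is routine.
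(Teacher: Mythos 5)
Your proposal is correct and follows essentially the same route as the paper's proof: a component-wise case analysis on the template fields (using the template typing rule to rule out the $l/!x$ and $\val/!u$ mismatch cases) to show $\subst{t}{T}\neq\Err$, followed by repeated use of Lemma~\ref{lem:reordering} and Lemma~\ref{lem:subst_tept} to discharge the bindings of $\Gamma$ and then Lemma~\ref{lem:tp_eval} to pass from well-typedness in the empty environment to non-erroneous evaluation (resp.\ well-sortedness). The only cosmetic difference is that the paper first isolates the single-field case as a separate sub-claim and substitutes field-by-field ($e_j/\bd_j$) rather than variable-by-variable, which is equivalent since each field binds exactly one variable.
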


\begin{proof}\hfill
\begin{mylist1}
\item[$\bullet$] 
We first show that if $t$ does not contain operators, 
$\jdgtt{\emptyTPEnv}{t}{\type}$ and $\jdgtT{\type}{\bd}{\Gamma}$, 
then $t/\bd\neq\Err$ holds. 
There are two cases for $W$:
  \begin{mylist1}
  \item[$W=!x$:] It is not difficult to deduce from $\jdgtT{\type}{!x}{\Gamma}$ that 
    $\type$ is a multiest type $\msettp$ that cannot be $\loctp$. 
    By $\jdgtt{\emptyTPEnv}{t}{\msettp}$, the fact that $t$ does not contain operators, 
    and Lemma~\ref{lem:fv_in_dom}, 
    we know that $t$ is a $\num$, $\str$, $\TBID$ or a multiset of 
    intergers/strings/table identifiers/localities.
    By the definintion of pattern matching we have $t/\bd\neq\Err$. 
  \item[$W=!u$:] By $\jdgtT{\type}{!u}{\Gamma}$ we have $\type=\loctp$. 
    By $\jdgtt{\emptyTPEnv}{t}{\loctp}$ we have $t=l$ is a locality constant. 
    We indeed have $l/!u\neq\Err$. 
  \end{mylist1}
\item[$\bullet$] 
We then show that if $t$ does not contain operators, 
$\jdgtt{\emptyTPEnv}{t}{\type}$ and $\jdgtT{\type}{T}{\Gamma}$, 
then $t/\bd\neq\Err$ holds. 
From $\jdgtT{\type}{T}{\Gamma}$ we have $\bd_1$, ..., $\bd_n$, $\type_1$, ..., $\type_n$, and 
$\Gamma_1,...,\Gamma_n$ such that 
$T=\bd_1,...,\bd_n$, 
$\jdgtT{\type_1}{\bd_1}{\Gamma_1}$, ..., $\jdgtT{\type_n}{\bd_n}{\Gamma_n}$, 
$\type=\type_1\times...\times\type_n$ and $\Gamma=\Gamma_1,...,\Gamma_n$. 
By $\jdgtt{\emptyTPEnv}{t}{\type_1\times...\times\type_n}$ we have 
$t=e_1,...,e_n$ for expressions $e_1$, ..., $e_n$ such that 
$\jdgtt{\emptyTPEnv}{e_1}{\type_1}$, ..., $\jdgtt{\emptyTPEnv}{e_n}{\type_n}$. 
Since $t$ does not contain operators, based on the latter typing judgments for $n$ expressions
we can establish the typing of $n$ \emph{singleton tuples} $e_1$, ..., $e_n$ of the same form. 
Hence using previous result we have $e_1/\bd_1\neq\Err$, ..., $e_n/\bd_n\neq\Err$. 
We now have $(e_1,...,e_n)/(\bd_1,...,\bd_n)\neq\Err$, or $t/T\neq\Err$. 

\item[$\bullet$] We next show point (1) in the statement of the lemma. 
  Continuing the argumentation above, 
  since $\jdgtE{\emptyTPEnv}{e_1}{\type_1}$ and $\jdgtT{\type_1}{\bd_1}{\Gamma_1}$, 
  where $e_1$ does not contain variables or operators, 
  we have $\jdgtpred{\Gamma'}{\psi[e_1/\bd_1]}$ 
  using Lemma~\ref{lem:reordering} and Lemma~\ref{lem:subst_tept}, 
  where $\Gamma'$ is obtained by removing from $\Gamma$
  the binding for the only variable in $\BV{\bd_1}$.
  Similarly we have $\jdgtpred{\Gamma''}{\PRED[e_1/\bd_1,e_2/\bd_2]}$, ..., 
  $\jdgtpred{\Gamma^{(n)}}{\PRED[e_1/\bd_1,...,e_n/\bd_n]}$, 
  for some $\Gamma''$, ..., $\Gamma^{(n)}$, that are obtained from $\Gamma'$ 
  by subsequently removing the bindings for the variables of $\BV{\bd_2}$, ..., $\BV{\bd_n}$. 
  The last judgment is actually $\jdgtpred{\Gamma^{(n)}}{\PRED(t/T)}$. 
  Since $\BV{T}\supseteq\FV{\PRED}$, we have $\FV{\PRED(t/T)}=\emptyset$. 
  By Lemma~\ref{lem:smaller_dom} we have $\jdgtpred{\emptyTPEnv}{\PRED(t/T)}$. 
  By Lemma~\ref{lem:tp_eval} we have $\EVALT{\PRED(t/T)}\neq \Err$. 

\item[$\bullet$] 
  The reasoning needed to establish point (2) in the statement of the lemma is analogous. 
\end{mylist1}

\end{proof}

\begin{lem}\label{lem:tp_proj} 
  If each component of $t$ is a constant or a variable bound in $T$, and 
    $\exists \Gamma: \jdgtT{\tau}{T}{\Gamma}\land \jdgtt{\Gamma}{t}{\tau'}$, 
    then we have $\skproj{\tau}{T}{t}=\tau'$. 
\end{lem}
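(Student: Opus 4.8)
The plan is to establish the equality componentwise, by inverting the two typing derivations and then matching, index by index, the case of the tuple-typing rule that applies against the corresponding branch of Definition~\ref{def:sk_proj}.

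First I would invert $\jdgtT{\tau}{T}{\Gamma}$. Writing $T = W_1,\dots,W_k$, the only applicable rules force $\tau = \tau_1\times\cdots\times\tau_k$ and $\Gamma = \Gamma_1,\dots,\Gamma_k$ with $\jdgtT{\tau_j}{W_j}{\Gamma_j}$ for every $j$; moreover $\Gamma_j = [x_j:\tau_j]$ (with $\tau_j\neq\loctp$) when $W_j = {!x_j}$, and $\Gamma_j = [u_j:\loctp]$ with $\tau_j=\loctp$ when $W_j = {!u_j}$. By linearity of templates each bound variable occurs in exactly one $W_j$, so $\Gamma(x_j)=\tau_j$ (resp.\ $\Gamma(u_j)=\loctp$) is well defined. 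Next I would invert $\jdgtt{\Gamma}{t}{\tau'}$: writing $t = e_1,\dots,e_n$ (with $n\le k$ by hypothesis), the tuple rule forces $\tau' = \tau'_1\times\cdots\times\tau'_n$ with $\jdgtt{\Gamma}{e_i}{\tau'_i}$ for each $i$.

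Then for each $i$ I would do a case analysis on the form of $e_i$, which by hypothesis is a constant or a variable bound in $T$. If $e_i$ is a constant ($\num$, $\str$, $\TBID$, $\llocConst$, or a multiset literal of atomic values), the only applicable expression-typing rule gives $\tau'_i$ as precisely the sort for which $e_i\sattsk\tau'_i$, and this sort is uniquely determined by $e_i$; moreover no $W_j$ equals ${!e_i}$ (binders carry variables, not constants), so Definition~\ref{def:sk_proj} selects its first branch and returns exactly $\tau'_i$. If $e_i = x_j$ is a data variable bound by $W_j = {!x_j}$, then the variable rule gives $\tau'_i = \Gamma(x_j) = \tau_j$; since $x_j$ is a variable there is no sort $\sktp$ with $e_i\sattsk\sktp$, so Definition~\ref{def:sk_proj} takes its second branch, recognising $W_j = {!e_i}$ and returning $\tau_j$ as well. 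The case $e_i = u_j$ with $W_j = {!u_j}$ is analogous, both sides yielding $\loctp$. In every case the $i$-th component of $\skproj{\tau}{T}{t}$ coincides with $\tau'_i$, so $\skproj{\tau}{T}{t} = \tau'_1\times\cdots\times\tau'_n = \tau'$.

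I do not expect a genuine obstacle here: the argument is an inversion-and-match. The only points needing care are bookkeeping — checking that the two derivations have matching arity so the componentwise comparison is meaningful, using template linearity so that a bound variable determines a unique $W_j$, and verifying that the two branches of Definition~\ref{def:sk_proj} are mutually exclusive and jointly exhaustive on constants and bound variables, so each $\tau'_i$ is matched against exactly one branch. (The one degenerate corner is the empty-multiset literal $\{\}$, whose typing and well-sortedness are both sort-polymorphic; under any reading in which Definition~\ref{def:sk_proj} is well defined this case is consistent, since one may fix the same witness sort on both sides, and it can otherwise be set aside.)
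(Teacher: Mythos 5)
Your proposal is correct and follows essentially the same route as the paper's proof: invert both typing judgments componentwise and then, for each $e_i$, distinguish the constant case (where well-sortedness pins down the type, matching the first branch of Definition~\ref{def:sk_proj}) from the bound-variable case (where the template binding gives $\tau_j$ on both sides). The paper routes the constant case through Lemmas~\ref{lem:smaller_dom} and~\ref{lem:tp_eval} rather than arguing it directly, but the substance is identical, and your remark on the empty-multiset corner is a point the paper leaves implicit.
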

\begin{proof}
  Suppose $T=\bd_1,...,\bd_k$, $t=e_1,...,e_n$.
  Then $\skproj{\tau}{T}{t}$ is an $n$-ary product type.
  Assume $\skproj{\tau}{T}{t}=\type''_1\times...\times \type''_n$ 
  for some $\type''_1$, ..., $\type''_n$. 
  By $\jdgtT{\type}{\bd_1,...,\bd_k}{\Gamma}$ we have 
  $\type=\type_1\times...\times\type_k$ for some $\type_1$, ..., $\type_k$. 
  By $\jdgtt{\Gamma}{e_1,...,e_n}{\type'}$ we have $\type'=\type'_1\times...\times\type'_n$
  for some $\type'_1$, ..., $\type'_n$. 
  We show that for each $i\in\{1,...,n\}$, $\type''_i=\type'_i$, 
  which will give $\skproj{\type}{T}{t}=\type'$. 
  Fixing $i$, we have two cases for $e_i$. 
  \begin{mylist1}
  \item[$e_i$ is a constant value: ] 
    By $\jdgtE{\Gamma}{e_1,...,e_n}{\type'_1\times...\times\type'_n}$ 
    we have $\jdgtE{\Gamma}{e_i}{\type'_i}$. 
    By Lemma~\ref{lem:smaller_dom} we have $\jdgtE{\emptyTPEnv}{e_i}{\type'_i}$. 
    Since $e_i$ is a value ($\EVALT{e_i}=e_i$), by Lemma~\ref{lem:tp_eval} we have $e_i\sattsk \type'_i$. 
    On the other hand, by $\skproj{\type}{T}{t}=\type''_1\times...\times\type''_n$ 
    we have $e_i\sattsk \type''_i$. 
    Hence $\type''_i=\type'_i$ holds. 
  \item[$e_i$ is a variable bound in $\bd_j$: ] 
    By $\jdgtT{\type_1\times...\times\type_k}{T}{\Gamma}$ we have the binding $e_i:\type_j$ in $\Gamma$. 
    By $\jdgtt{\Gamma}{e_1,...,e_n}{\type'_1\times...\times\type'_n}$ we have 
    $\jdgtE{\Gamma}{e_i}{\type'_i}$ and thus $\type'_i=\type_j$. 
    By $\skproj{\type_1\times...\times_k}{T}{t}=\type''_1\times...\times\type''_n$ 
    we also have $\type''_i=\type_j$. 
    Hence $\type''_i=\type'_i$ holds. 
  \end{mylist1}
  This completes the proof. 
\end{proof}

\begin{lem}\label{lem:sk_prod}
  If $\jdgtE{\Gamma,\itfc}{\CALTB_1}{\sktp_1}$, ..., $\jdgtE{\Gamma,\itfc}{\CALTB_n}{\sktp_n}$, and 
    $\jdgtC{\Gamma,\itfc}{(I_k,R_k)}$ holds for each $(I_k,R_k)$ 
    in the list $\LSTStruct{\llocConst_i::(I_i,R_i)}{i}$, 
    then $\prodSK(\LST{\CALTB},\LSTStruct{\llocConst_i::(I_i,R_i)}{i})=
    \flattensk(\sktp_1\times...\times\sktp_n)$ 
    holds as long as $\prodSK(\LST{TB},\LSTStruct{\llocConst_i::(I_i,R_i)}{i})$ is defined. 
\end{lem}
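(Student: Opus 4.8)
The plan is to show that, under the stated hypotheses and assuming $\prodSK(\LST{\CALTB},\LSTStruct{\llocConst_i::(I_i,R_i)}{i})$ is defined, the schema $\sktp'_j$ that the second clause of Definition~\ref{def:prod_sk} assigns to position $j$ equals the type $\sktp_j$ derived by $\jdgtE{\Gamma,\itfc}{\CALTB_j}{\sktp_j}$, for every $j$. Once this is established, applying $\flattensk$ to the two now-identical product types $\sktp'_1\times\dots\times\sktp'_n$ and $\sktp_1\times\dots\times\sktp_n$ yields $\prodSK(\LST{\CALTB},\LSTStruct{\llocConst_i::(I_i,R_i)}{i})=\flattensk(\sktp_1\times\dots\times\sktp_n)$, which is the claim.

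First I would exploit the definedness assumption. By Definition~\ref{def:prod_sk}, definedness falsifies both disjuncts of the ``undef'' side condition, so $\LST{\CALTB}$ contains no table variable and no reference $\TBID@u$ with $u$ a locality variable, and every $\CALTB_j$ of the form $\TBID@l$ with $l$ a locality constant has a witnessing index $k$ with $\llocConst_k=l$ and $I_k.\TBID=\TBID$. This step is genuinely needed, since a table variable $\TBV$ or a reference $\TBID@u$ would be perfectly typable by the rules of Figure~\ref{fig:tp_ett} — it is definedness, not typability, that excludes them. Hence each $\CALTB_j$ is either a constant reference $\TBID@l$ equipped with a witnessing $k$, or a concrete table $(I_0,R_0)$.

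Then I would split on the form of $\CALTB_j$, matching the sub-cases of Definition~\ref{def:prod_sk}. If $\CALTB_j=\TBID@l$, pick a witnessing $k$; Definition~\ref{def:prod_sk} sets $\sktp'_j=I_k.\TBSK$, while the only applicable table-typing rule gives $\sktp_j=\itfc(\TBID)=\itfc(I_k.\TBID)$, and the side condition of the component-typing rule underlying the hypothesis $\jdgtC{\Gamma,\itfc}{(I_k,R_k)}$ forces $\itfc(I_k.\TBID)=I_k.\TBSK$; hence $\sktp_j=I_k.\TBSK=\sktp'_j$. If $\CALTB_j=(I_0,R_0)$, Definition~\ref{def:prod_sk} sets $\sktp'_j=I_0.\TBSK$, and the table-typing rule for concrete tables derives $\sktp_j=I_0.\TBSK$ as well, so the two agree. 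The hard part is precisely this bridge in the reference case between the type $\itfc(\TBID)$ returned by the type system and the schema $I_k.\TBSK$ read off the localized table in the semantics: it rests entirely on the side condition $\itfc(I_k.\TBID)=I_k.\TBSK$ coming from $\jdgtC{\Gamma,\itfc}{(I_k,R_k)}$, which moreover shows that the value $I_k.\TBSK$ is independent of the choice of witnessing $k$ (each such $k$ gives $\itfc(\TBID)$), so that no appeal to $\NOREP{\cdot}$ is needed. Everything else is a routine clause-by-clause comparison.
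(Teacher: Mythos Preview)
Your proposal is correct and follows essentially the same argument as the paper: a case analysis on the form of each $\CALTB_j$ (reference $\TBID@l$ versus concrete table $(I_0,R_0)$), using the side condition $\itfc(I_k.\TBID)=I_k.\TBSK$ from $\jdgtC{\Gamma,\itfc}{(I_k,R_k)}$ to reconcile the schema from Definition~\ref{def:prod_sk} with the type derived by the table-typing rules. Your treatment is slightly more explicit than the paper's in spelling out why definedness excludes the $\TBV$ and $\TBID@u$ forms and why the choice of witnessing $k$ is immaterial, but the substance is the same.
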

\begin{proof}
  According to the definiton of $\prodSK(\cdot,\cdot)$, if it is defined, 
  there exist $\type'_1$, ..., $\type'_n$ such that 
  \small
  $$\prodSK(\LST{\CALTB}, \LSTStruct{\llocConst_i::(I_i,R_i)}{i})=
  \flattensk(\type'_1\times...\times\type'_n),$$
  \normalsize
  where for each $j$, $\type'_j$ falls into one of the following two cases, 
  in each of which we show that $\type'_j=\type_j$. 
  \begin{mylist1}
    \item[$\bullet$] There exists some $k$ such that $\CALTB_j=I_k.\TBID@\llocConst_k$ 
      and $\type'_j=I_k.\TBSK$. 
      By $\jdgtP{\Gamma,\itfc}{(I_k,R_k)}$ we have $\itfc(I_k.\TBID)=I_k.\TBSK$. 
      From the conditions we also have $\jdgtT{\Gamma,\itfc}{I_k.\TBID@\llocConst_k}{\type_j}$. 
      Hence we have $\type_j=\itfc(I_k.\TBID)=I_k.\TBSK$. 
      Therefore we have $\type'_j=\type_j$. 
    \item[$\bullet$] There exists some data set $R_0$ such that $\CALTB_j=(I_0,R_0)$
      and $\type'_j=I_0.\TBSK$. 
      From the conditions we have $\jdgtT{\Gamma,\itfc}{(I_0,R_0)}{\type_j}$; 
      hence $\type_j=I_0.\TBSK$. 
      Therefore we have $\type'_j=\type_j$. 
  \end{mylist1}
  This completes the proof. 
\end{proof}

\begin{lem}\label{lem:dt_prod}
  If $\jdgtE{\Gamma,\itfc}{\CALTB_1}{\tau_1}$, ..., $\jdgtE{\Gamma,\itfc}{\CALTB_n}{\tau_n}$, and 
    $\jdgtC{\Gamma,\itfc}{(I_k,R_k)}$ holds for each $(I_k,R_k)$ in the list $\LSTStruct{\llocConst_i::(I_i,R_i)}{i}$, 
    then 
    $\forall t'\in\prodR(\LST{TB},\LSTStruct{\llocConst_i::(I_i,R_i)}{i}): 
    \jdgtt{\emptyTPEnv}{t'}{\flattensk(\tau_1\times...\times\tau_n)}$ holds
    as long as $\prodR(\LST{TB},\LSTStruct{\llocConst_i::(I_i,R_i)}{i})$ is defined. 
\end{lem}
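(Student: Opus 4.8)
The plan is to mirror the proof of Lemma~\ref{lem:sk_prod}, working with data sets in place of schemas. First I would unfold the definition of $\prodR$: since $\prodR(\LST{\CALTB},\LSTStruct{\llocConst_i::(I_i,R_i)}{i})$ is assumed defined, it equals $\flattendt(R'_1\times\cdots\times R'_n)$ where $n=|\LST{\CALTB}|$ and, for each $j$, either (a)~$\CALTB_j=I_k.\TBID@\llocConst_k$ for some $k$ with $R'_j=R_k$, or (b)~$\CALTB_j=(I_0,R_0)$ for some $I_0,R_0$ with $R'_j=R_0$ (the ``undef'' clause of Definition~\ref{def:prod_R} having ruled out the cases $\CALTB_j=\TBV$ and $\CALTB_j=\TBID@u$, and a dangling reference). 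The key claim is that in both cases every tuple $t'_j\in R'_j$ satisfies $\jdgtt{\emptyTPEnv}{t'_j}{\tau_j}$. In case~(a), the typing rule for references gives $\tau_j=\itfc(I_k.\TBID)$, while $\jdgtC{\Gamma,\itfc}{(I_k,R_k)}$ yields $\itfc(I_k.\TBID)=I_k.\TBSK$ together with $\forall t'_j\in R_k:\jdgtt{\Gamma}{t'_j}{I_k.\TBSK}$; hence $\jdgtt{\Gamma}{t'_j}{\tau_j}$. In case~(b), the typing rule for concrete tables gives $\tau_j=I_0.\TBSK$ and $\forall t'_j\in R_0:\jdgtt{\Gamma}{t'_j}{I_0.\TBSK}$, so again $\jdgtt{\Gamma}{t'_j}{\tau_j}$. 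Since the components of a tuple in a data set are constant values, $\FV{t'_j}=\emptyset$, so Lemma~\ref{lem:smaller_dom} lets me shrink $\Gamma$ to $\emptyTPEnv$, giving $\jdgtt{\emptyTPEnv}{t'_j}{\tau_j}$.

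Next I would assemble the result. Fix $t'\in\flattendt(R'_1\times\cdots\times R'_n)$; by definition of $\flattendt$ there are $t'_j=(v_{j1},\dots,v_{jn_j})\in R'_j$ for $j=1,\dots,n$ with $t'=(v_{11},\dots,v_{1n_1},\dots,v_{n1},\dots,v_{nn_n})$. Each $\tau_j$, being the schema derived by a table-typing rule, is a product type $\tau_j=\tau_j^1\times\cdots\times\tau_j^{n_j}$; and since $\jdgtt{\emptyTPEnv}{t'_j}{\tau_j}$ must have been obtained by the tuple rule of Figure~\ref{fig:tp_ett}, we get $\jdgtt{\emptyTPEnv}{v_{ji}}{\tau_j^i}$ for all $i$. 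Applying the tuple rule once more to the whole concatenation then yields $\jdgtt{\emptyTPEnv}{t'}{\tau_1^1\times\cdots\times\tau_1^{n_1}\times\cdots\times\tau_n^1\times\cdots\times\tau_n^{n_n}}$, and by the definition of $\flattensk$ this target type is exactly $\flattensk(\tau_1\times\cdots\times\tau_n)$, as required.

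The main obstacle here is essentially bookkeeping: keeping the nesting/flattening of indices aligned between $\flattendt$ on the data side and $\flattensk$ on the type side, and confirming that each $\tau_j$ really is a product type so that $\flattensk$ is applicable --- which holds because every typing rule for tables derives a schema. The one genuinely load-bearing appeal to an earlier result is the combination of Lemma~\ref{lem:smaller_dom} with the fact that data sets contain only ground tuples, which is what licenses passing from $\Gamma$ to $\emptyTPEnv$; everything else is a routine re-run of the argument used for Lemma~\ref{lem:sk_prod}.
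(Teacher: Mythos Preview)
Your proposal is correct and follows essentially the same approach as the paper's own proof, which merely says the argument is analogous to Lemma~\ref{lem:sk_prod} via a case analysis on the definition of $\prodR(\cdot,\cdot)$. You have supplied the details the paper omits---in particular the appeal to Lemma~\ref{lem:smaller_dom} using $\FV{t'_j}=\emptyset$ to pass to $\emptyTPEnv$, and the explicit alignment of $\flattendt$ with $\flattensk$ via the tuple typing rule---so your write-up is in fact more complete than the paper's sketch.
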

\begin{proofidea}
The proof idea is analogous to that of Lemma~\ref{lem:sk_prod}. 
A case analysis based on the definition of $\prodR(\cdot,\cdot)$ is needed. 
\end{proofidea}

\begin{lem}\label{lem:minus_t}
If $\jdgtT{\Gamma,\itfc}{(I,R)}{\tau}$, and $t$ is a tuple, then $\jdgtT{\Gamma,\itfc}{(I,R\setminus\{t\})}{\tau}$. 
\end{lem}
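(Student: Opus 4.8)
\textbf{Proof proposal for Lemma~\ref{lem:minus_t}.}
The plan is to proceed by inversion on the single typing rule that can derive $\jdgtT{\Gamma,\itfc}{(I,R)}{\tau}$, and then re-apply the same rule to $(I,R\setminus\{t\})$. Concretely, the only rule for a concrete table $(I,R)$ (in the ``Tables'' group of Figure~\ref{fig:tp_ett}) forces $\tau=I.\TBSK$, requires the side condition $I.\TBID\neq\bot\Rightarrow\itfc(I.\TBID)=I.\TBSK$, and has as its premise $\forall t'\in R: \jdgtt{\Gamma}{t'}{I.\TBSK}$. So from the hypothesis I would extract exactly these three facts.

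Next I would observe that removing a tuple from a multiset can only decrease multiplicities: by the definition of $\setminus$ in Appendix~\ref{app:multiset}, $M(R\setminus\{t\},s)=\mi{max}(M(R,s)-M(\{t\},s),0)\le M(R,s)$ for every $s$. Hence every tuple occurring in $R\setminus\{t\}$ also occurs in $R$, so the premise $\forall t'\in R: \jdgtt{\Gamma}{t'}{I.\TBSK}$ immediately yields $\forall t'\in R\setminus\{t\}: \jdgtt{\Gamma}{t'}{I.\TBSK}$. The interface $I$ is untouched by the operation, so $I.\TBID$ and $I.\TBSK$ are unchanged and the side condition $I.\TBID\neq\bot\Rightarrow\itfc(I.\TBID)=I.\TBSK$ still holds.

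Finally I would re-apply the table-typing rule to $(I,R\setminus\{t\})$ with these ingredients, obtaining $\jdgtT{\Gamma,\itfc}{(I,R\setminus\{t\})}{I.\TBSK}$, i.e. $\jdgtT{\Gamma,\itfc}{(I,R\setminus\{t\})}{\tau}$ since $\tau=I.\TBSK$. There is no real obstacle here: the argument is a one-step inversion followed by one-step re-application, and the only thing worth spelling out is the (trivial) monotonicity of multiset membership under $\setminus$. This lemma is presumably used in the subject-reduction case for $\REDFORTT$, where the data set of the looped-over table shrinks by one tuple $t_0$, so keeping the statement at this minimal level of generality is exactly what is needed.
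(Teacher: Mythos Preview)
Your proposal is correct and matches the paper's approach: the paper simply declares the proof ``trivial'' without further detail, and what you have written is exactly the obvious inversion-then-reapplication argument that justifies this triviality. Your additional remark about the lemma's use in the $\REDFORTT$ case of subject reduction is also accurate.
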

The proof of Lemma~\ref{lem:minus_t} is trivial. 

\begin{lem}\label{lem:subj_red}
If $\envnet$ and $N$ are closed, $\BV{N}\cap\dom{\Gamma}=\emptyset$, 
$\jdgtN{\Gamma,\itfc}{\envnet}$, $\jdgtN{\Gamma,\itfc}{N}$, 
and $\STEP{\envnet}{N}{N'}$, then $\jdgtN{\Gamma,\itfc}{N'}$.
\end{lem}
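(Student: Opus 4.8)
The plan is to prove this by a single induction on the derivation of $\STEP{\envnet}{N}{N'}$, with one case per semantic rule of Figures~\ref{fig:sem_ins_del}, \ref{fig:sem_sel_upd_aggr}, \ref{fig:sem_create_drop}, the rule $\REDEVALP$, and Figure~\ref{fig:semantics_cont}. The generalisation over an arbitrary typable environment $\envnet$, rather than the empty one of Theorem~\ref{thm:subj_red}, is precisely what makes the induction go through at the $\REDPAR$ case. A preliminary observation shapes every action case: the net typing rules of Figure~\ref{fig:tp_pcn} admit no rule for $\ErrNet$, so $\ErrNet$ is \emph{not typable}; hence in each rule whose side condition has an $\ErrNet$ branch I must first show, under $\jdgtN{\Gamma,\itfc}{N}$, that this branch is never taken, and only then type the ``otherwise'' net.

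For the action rules $\REDINS$, $\REDDEL$, $\REDUPD$, $\REDAGGR$, $\REDSEL$, I would peel the derivation of $\jdgtN{\Gamma,\itfc}{N}$ down to the typing of the action --- via the $a.P$ rule, yielding an environment $\Gamma'$ for the variables bound by $a$ with $\jdgtP{(\ext{\Gamma}{\Gamma'}),\itfc}{P}$ --- and to the typing of the involved concrete table(s) $(I,R)$, which gives $\itfc(I.\TBID)=I.\TBSK$ and $\jdgtt{\Gamma}{t}{I.\TBSK}$ for each $t\in R$. Excluding the error branch then reduces, disjunct by disjunct, to one auxiliary lemma each: a schema mismatch $T\nsattsk I.\TBSK$ (or of a tuple) contradicts Lemma~\ref{lem:tp_T_sat_sk} with the typing of $T$ (resp.\ Lemma~\ref{lem:tp_eval}, using closedness of $N$ to pass to the empty-environment form that lemma requires); an evaluation error $\EVALPred{\psi(\subst{t}{T})}=\Err$, $\EVALT{t\sigma}=\Err$ or $\subst{t}{T}=\Err$ contradicts Lemma~\ref{lem:sub_eval_n_err}; and in $\REDSEL$ the premise that table references match those in $\LSTStruct{\llocConst_i::(I_i,R_i)}{i}$, together with Lemma~\ref{lem:sk_prod} and Lemma~\ref{lem:dt_prod}, turns $\prodSK$/$\prodR$ into concrete products so that the same lemmas apply. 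To type the result net I re-type the continuation under the substitution the rule performs --- $P\sigma'$ with $\sigma'=[(I',R')/\TBV]$ for $\REDSEL$, $P\sigma'$ with $\sigma'=\subst{t'}{T'}$ for $\REDAGGR$ --- using Lemma~\ref{lem:subst_aP}(2), resp.\ Lemma~\ref{lem:subst_aP}(1) applied componentwise after Lemma~\ref{lem:subst_tept}; for $\REDSEL$ I invoke Lemma~\ref{lem:tp_proj} to show the projected schema $\skproj{\tau}{T}{t}$ recorded in $I'$ equals the type $\tau'$ the selection rule assigns to $t$, and Lemma~\ref{lem:tp_eval} with Lemma~\ref{lem:sub_eval_n_err}/\ref{lem:dt_prod} to show each tuple of $R'$ has schema $\tau'$. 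The unchanged located tables $\llocConst_i::(I_i,R_i)$ --- or, for $\REDINS$/$\REDDEL$/$\REDUPD$, those with a type-preservingly modified data set, where Lemma~\ref{lem:minus_t} and re-typing of evaluated tuples via Lemma~\ref{lem:tp_eval} suffice --- keep their typing from the hypotheses.

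For the remaining rules: $\REDFORTT$ needs Lemma~\ref{lem:subst_aP} to type $P(t_0/T)$ plus a re-use of the {\sf foreach} typing rule on the residual loop over $R\setminus\{t_0\}$ (Lemma~\ref{lem:minus_t}); $\REDFORFF$ is immediate once its error branch is ruled out by Lemma~\ref{lem:sub_eval_n_err}; $\REDSEQTT$ and $\REDSEQFF$ follow from the induction hypothesis on the premise transition and the sequential-composition rule, using that the local scoping of $P_1$'s bound variables lets $P_2$ stay typed in the unchanged $\Gamma$; $(\mrm{CALL})$ uses the procedure-call rule and Lemma~\ref{lem:subst_aP}(1) once per argument; $\REDPAR$ extracts $\jdgtN{\Gamma,\itfc}{N_1}$ and $\jdgtN{\Gamma,\itfc}{N_2}$ from $\jdgtN{\Gamma,\itfc}{N_1||N_2}$, applies the induction hypothesis with $\envnet:=N_2$ (typable, hence satisfying the hypotheses), and recombines; $\REDRES$ follows from the induction hypothesis and the restriction rule; $\REDEQUIV$ brackets the inner transition with Lemma~\ref{lem:tp_equiv}; $\REDCREATE$ types $((\TBID,\TBSK),\emptyset)$ trivially, using that the {\sf create} typing rule forces $\itfc(\TBID)=\TBSK$; $\REDDROP$ and $\REDEVALP$ are likewise easy, the latter because the {\sf eval} rule type-checks the spawned $P$ under the same $\Gamma$.

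I expect the main obstacle to be the bookkeeping in the $\REDSEL$ case: matching the projected schema $\skproj{\tau}{T}{t}$ stored in $I'$ against the type $\tau'$ derived for $t$ (Lemma~\ref{lem:tp_proj}, fed by Lemma~\ref{lem:sk_prod}), while simultaneously certifying that every tuple placed into $R'$ carries that schema (Lemma~\ref{lem:dt_prod} with Lemma~\ref{lem:sub_eval_n_err}), and keeping the various extended environments $\ext{\Gamma}{\Gamma'}$ consistent --- for which Lemma~\ref{lem:reordering} and Lemma~\ref{lem:smaller_dom} are invoked repeatedly, just as silently as in the substitution lemmas. No single step is deep; the difficulty lies in orchestrating them in the right order.
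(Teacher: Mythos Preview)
Your proposal is correct and follows essentially the same approach as the paper's proof: an induction on the semantic derivation, with each action case first excluding the $\ErrNet$ branch via the well-sortedness and evaluation lemmas (Lemmas~\ref{lem:tp_T_sat_sk}, \ref{lem:tp_eval}, \ref{lem:sub_eval_n_err}, \ref{lem:sk_prod}, \ref{lem:dt_prod}) and then re-typing the resulting net via the substitution lemmas (Lemma~\ref{lem:subst_aP}, with Lemma~\ref{lem:tp_proj} for the schema in $\REDSEL$), and with $\REDPAR$ handled exactly by the generalisation over $\envnet$ you describe. Your identification of $\REDSEL$ as the bookkeeping bottleneck, and of Lemmas~\ref{lem:reordering} and~\ref{lem:smaller_dom} as the silent workhorses for environment management, also matches the paper's emphasis.
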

\begin{proof}
  We proceed with an induction on the derivation of $\STEP{\envnet}{N}{N'}$. 
  \begin{mylist1}
  \item[\CASE $\REDINS$:] 
    The net $N$ is $l_1::\INS{\TBID}{t}{l_2}.P~\!||~\!l_2::(I,R)$. 
    By 
    \small
    $$\jdgtN{\Gamma,\itfc}{l_1::\INS{\TBID}{t}{l_2}.P~\!||~\!l_2::(I,R)}$$
    \normalsize
    we have $\jdgtP{\Gamma,\itfc}{\INS{\TBID}{t}{l_2}.P}$ and $\jdgtC{\Gamma,\itfc}{(I,R)}$;  
    hence $\jdgta{\Gamma,\itfc}{\INS{\TBID}{t}{l_2}}{\emptyTPEnv}$ and (by $\Gamma,\emptyTPEnv=\Gamma$)
    \begin{equation}\label{eq:ins_tp_P}
     \jdgtP{\Gamma,\itfc}{P}. 
    \end{equation}
    By the typing of the insertion action we have $\jdgtt{\Gamma}{t}{\itfc(\TBID)}$, 
    and $\jdgtE{\Gamma}{l_2}{\loctp}$. 
    By $\jdgtC{\Gamma,\itfc}{(I,R)}$ we have $\itfc(I.\TBID)=I.\TBSK$. 
    Since the transition can happen we have $I.\TBID=\TBID$ by the semantic rule $\REDINS$. 
    Hence $\jdgtt{\Gamma}{t}{I.\TBSK}$. 
    Since $N$ is closed, we have $\FV{t}=\emptyset$. 
    By Lemma~\ref{lem:smaller_dom} we have $\jdgtt{\emptyTPEnv}{t}{I.\TBSK}$. 
    By Lemma~\ref{lem:tp_eval} we have $\EVALT{t}\sattsk I.\TBSK$. 
    Hence by $\REDINS$ we have 
    \small
    $$N'=l_1::P~\!||~\!l_2::(I,R\uplus \{\EVALT{t}\}).$$
    \normalsize
    By $\jdgtt{\Gamma}{t}{I.sk}$ and $\jdgtC{\Gamma,\itfc}{(I,R)}$ we have 
    \small
    \begin{equation}\label{eq:ins_tp_TB'} 
    \jdgtC{\Gamma,\itfc}{(I,R\uplus \{\EVALT{t}\})} 
    \end{equation}
    \normalsize
    By \eqref{eq:ins_tp_P} and \eqref{eq:ins_tp_TB'}, it is not difficult to derive 
    $\jdgtN{\Gamma,\itfc}{N'}$. 
%    Taking $\Gamma'=\Gamma$ we then have the desired result for this case. 

  \item[\CASE $\REDDEL$:]
    The net $N$ is $l_1::\DEL{\TBID}{T}{\PRED}{l_2}.P~\!||~\!l_2::(I,R)$. 
    By 
    \small  
    $$\jdgtN{\Gamma,\itfc}{l_1::\DEL{\TBID}{T}{\PRED}{l_2}.P~\!||~\!l_2::(I,R)}$$
    \normalsize
    we have 
    $\jdgtP{\Gamma,\itfc}{\DEL{\TBID}{T}{\PRED}{l_2}.P}$ and $\jdgtC{\Gamma,\itfc}{(I,R)}$.

    Hence $\jdgta{\Gamma,\itfc}{\DEL{\TBID}{T}{\PRED}{l_2}}{\emptyTPEnv}$, and 
    \small
    \begin{equation}\label{eq:del_tp_P}
      \jdgtP{\Gamma,\itfc}{P}.
    \end{equation}
    \normalsize
    By the typing of the deletion action we have 
    $\jdgtT{\itfc(\TBID)}{T}{\Gamma''}$ for some $\Gamma''$ such that 
    $\jdgtpred{\Gamma,\Gamma''}{\PRED}$, and $\jdgtE{\Gamma}{l_2}{\loctp}$. 
    Since the transition can take place we have $I.\TBID=\TBID$. 
    By $\jdgtC{\Gamma,\itfc}{(I,R)}$ we have $\itfc(I.\TBID)=I.\TBSK$. 
    Hence we have $\jdgtT{I.\TBSK}{T}{\Gamma''}$. 
    By Lemma~\ref{lem:tp_T_sat_sk} we have $T\sattsk I.\TBSK$. 
    By $\jdgtC{\Gamma,\itfc}{(I,R)}$ we have 
    $\forall t\in R:\jdgtt{\Gamma}{t}{I.\TBSK}$. 
    Pick an arbitrary $t$ from the data set $R$, since $\FV{t}=\emptyset$, 
    by Lemma~\ref{lem:smaller_dom} we have $\jdgtt{\emptyTPEnv}{t}{I.\TBSK}$. 
    By Lemma~\ref{lem:sub_eval_n_err} we have $t/T\neq\Err$. 
    By the closedness of $N$ we have $\FV{\PRED}\subseteq \BV{T}=\dom{\Gamma''}$. 
    Hence by Lemma~\ref{lem:smaller_dom} we have $\jdgtpred{\Gamma''}{\PRED}$. 
    By Lemma~\ref{lem:sub_eval_n_err} we have $\EVALT{\PRED(t/T)}\neq\Err$. 
    We are now able to assert that 
    \small
    $$
    N'=l_1::P~\!||~\!l_2::(I,\{t\in R\mid \EVALT{\PRED(t/T)}\neq \TT\}). 
    $$
    \normalsize
    It is not difficult to derive 
    \small
    \begin{equation}\label{eq:del_tp_TB'}
      \jdgtC{\Gamma,\itfc}{(I,\{t\in R\mid \EVALT{\PRED(t/T)}\neq \TT\})}.
    \end{equation}
    \normalsize
    By \eqref{eq:del_tp_P} and \eqref{eq:del_tp_TB'} it is not difficult to derive 
    $\jdgtN{\Gamma,\itfc}{N'}$. 
%    Taking $\Gamma'=\Gamma$ we then have the desired result for this case. 

  \item[\CASE $\REDSEL$:] 
    The net $N$ is 
    \small
    $$l_0::\SEL{\LST{\CALTB}}{T}{\PRED}{t}{!\TBV}.P~\!||
        ~\!l_2::(I_1,R_1)~\!||~\!...~\!||~\!l_k::(I_k,R_k).$$ 
    \normalsize
    By the typing of $N$ under $\Gamma$ and $\itfc$ we have 
    $\jdgtP{\Gamma,\itfc}{\SEL{\LST{\CALTB}}{T}{\PRED}{t}{!\TBV}.P}$, 
    $\jdgtC{\Gamma,\itfc}{(I_1,R_1)}$, ..., and $\jdgtC{\Gamma,\itfc}{(I_n,R_n)}$; 
    hence we have 
    $\jdgta{\Gamma,\itfc}{\SEL{\LST{\CALTB}}{T}{\PRED}{t}{!\TBV}}{\Gamma''}$ for some $\Gamma''$
    such that $\jdgtP{(\Gamma,\Gamma''),\itfc}{P}$. 
    By the typing of the selection action we have 
    $\jdgtE{\Gamma,\itfc}{\CALTB_1}{\type_1}$, ..., $\jdgtE{\Gamma,\itfc}{\CALTB_n}{\type_n}$, 
    \small
    \begin{equation}\label{eq:sel_tp_T} 
      \jdgtT{\flattensk(\type_1\times...\times\type_n)}{T}{\Gamma'''}, 
    \end{equation}
    \normalsize
    for some $\Gamma'''$ such that 
    $\jdgtpred{\Gamma,\Gamma'''}{\PRED}$ and $\jdgtt{\Gamma,\Gamma'''}{t}{\type'}$, 
    for some $\type'$ such that $\Gamma''=[\TBV:\type']$. 

    Since the transition can take place, 
    $\prodSK(\LST{\CALTB}, \LSTStruct{l_i::(I_i,R_i)}{i})$ and
    $\prodR(\LST{\CALTB}, \LSTStruct{l_i::(I_i,R_i)}{i})$ are both defined. 
    By Lemma~\ref{lem:sk_prod} we have 
    $\prodSK(\LST{\CALTB}, \LSTStruct{l_i::(I_i,R_i)}{i})=\flattensk(\type_1\times...\times\type_n)$. 
    By Lemma~\ref{lem:tp_T_sat_sk} and \eqref{eq:sel_tp_T}, 
    we have $T\sattsk \flattensk(\type_1\times...\times\type_n)$; hence 
    it holds that $T\sattsk \prodSK(\LST{\CALTB}, \LSTStruct{l_i::(I_i,R_i)}{i})$. 

    Pick an arbitrary $t'\in \prodR(\LST{\CALTB}, \LSTStruct{l_i::(I_i,R_i)}{i})$. 
    By Lemma~\ref{lem:dt_prod} we have 
    \small
    $$\jdgtt{\emptyTPEnv}{t'}{\flattensk(\type_1\times...\times\type_n)}.$$ 
    \normalsize
    Hence by Lemma~\ref{lem:sub_eval_n_err} we have $t'/T\neq\Err$. 
    Since $N$ is closed we have $\FV{\PRED}\subseteq\BV{T}=\dom{\Gamma'''}$. 
    By Lemma~\ref{lem:smaller_dom}, we have $\jdgtpred{\Gamma'''}{\PRED}$. 
    By Lemma~\ref{lem:sub_eval_n_err}, we have $\EVALT{\PRED(t'/T)}\neq\Err$. 
    We also have $\FV{t}\subseteq\BV{T}=\dom{\Gamma'''}$. 
    By Lemma~\ref{lem:smaller_dom}, we have 
    \small
    \begin{equation}\label{eq:sel_tp_t}
    \jdgtt{\Gamma'''}{t}{\type'}.  
    \end{equation}
    \normalsize
    By Lemma~\ref{lem:sub_eval_n_err}, we have $\EVALT{t(t'/T)}\sattsk \type'$, 
    which implies $\EVALT{t(t'/T)}\neq\Err$. 
    We can now assert that 
    \small
    $$
    N'=l_0::P[(I',R')/\TBV] ~\!||~\!l_1::(I_1,R_1)~\!||~\!...~\!||~\!l_k::(I_k,R_k), 
    $$
    \normalsize
    where
    $$I'=(\bot,\prodSK(\LST{\CALTB}, \LSTStruct{l_i::(I_i,R_i)}{i})\downarrow^{T}_{t})$$
    and 
    $$R'=\{\EVALT{t\sigma}\mid \exists t'\in \prodR(\LST{\CALTB},\LSTStruct{l_i::(I_i,R_i)}{i}): 
            t'/T=\sigma \land \EVALT{\PRED\sigma}=\TT \}.$$ 
    We proceed with showing $\jdgtP{\Gamma,\itfc}{P[(I',R')/\TBV]}$.
    This boils down to showing $\jdgtE{\emptyTPEnv,\itfc}{(I',R')}{\type'}$ by Lemma~\ref{lem:subst_aP}
    since we already have $\jdgtP{(\Gamma,\TBV:\type'),\itfc}{P}$, 
    $\BV{P}\cap\dom{\Gamma,\TBV:\type'}=\emptyset$ 
    (which is obvious since variables are renamed apart and $\BV{N}\cap\dom{\Gamma}\neq\emptyset$ holds). 

    Take an arbitrary tuple from $R'$ it must be expressible as $\EVALT{t(t'/T)}$
    for some $t'\in \prodR(\LST{\CALTB},\LSTStruct{l_i::(I_i,R_i)}{i})$ such that 
    $\EVALT{\PRED(t'/T)}=\TT$. 
    By previous result we have $\EVALT{t(t'/T)}\sattsk\type'$. 
    It is not difficult to see that $\EVALT{\EVALT{t(t'/T)}}=\EVALT{t(t'/T)}$; 
    hence $\EVALT{\EVALT{t(t'/T)}}\sattsk\type'$. 
    By Lemma~\ref{lem:tp_eval}, we have $\jdgtt{\emptyTPEnv}{\EVALT{t(t'/T)}}{\type'}$. 
    On the other hand, we have 
    $I'.\TBSK=\prodSK(\LST{\CALTB}, \LSTStruct{l_i::(I_i,R_i)}{i})\downarrow^{T}_{t}=
      \flattensk(\type_1\times...\times\type_n)\downarrow^{T}_{t}$. 
    By \eqref{eq:sel_tp_T}, \eqref{eq:sel_tp_t} and Lemma~\ref{lem:tp_proj} we have 
    $\flattensk(\type_1\times...\times\type_n)\downarrow^T_t=\type'$. 
    Hence $I'.\TBSK=\type'$ and $\jdgtt{\emptyTPEnv}{\EVALT{t(t'/T)}}{I'.\TBSK}$. 
    It can be derived that $\jdgtE{\emptyTPEnv,\itfc}{(I',R')}{\type'}$. 
    We can now establish 
    \small
    \begin{equation}\label{eq:sel_tp_subst_P}
      \jdgtP{\Gamma,\itfc}{P[(I',R')/\TBV]}
    \end{equation}
    \normalsize

    Using \eqref{eq:sel_tp_subst_P} we can now easily derive $\jdgtN{\Gamma,\itfc}{N'}$. 

  \item[\CASE $\REDUPD$:] 
    The net $N$ is $l_1::\UPDATE{\TBID}{T}{\psi}{t}{\llocConst_2}.P~\!||~\!l_2::(I,R)$. 
    By the typing of $N$ under $\Gamma$ and $\itfc$, we have 
    $\jdgtP{\Gamma,\itfc}{\UPDATE{\TBID}{T}{\psi}{t}{\llocConst_2}.P}$ and 
    $\jdgtC{\Gamma,\itfc}{(I,R)}$; 
    hence we have $\jdgta{\Gamma,\itfc}{\UPDATE{\TBID}{T}{\psi}{t}{\llocConst_2}}{\emptyTPEnv}$
    and 
    \small
    \begin{equation}\label{eq:upd_tp_P}
    \jdgtP{\Gamma,\itfc}{P}.   
    \end{equation}
    \normalsize
    By the typing of the update action, we have 
    $\jdgtT{\itfc(\TBID)}{T}{\Gamma''}$, for some $\Gamma''$ such that 
    $\jdgtpred{\Gamma,\Gamma''}{\PRED}$ and $\jdgtt{\Gamma,\Gamma''}{t}{\itfc(\TBID)}$, 
    and $\jdgtE{\Gamma}{l_2}{\loctp}$. 

    By reasoning similar to the case $\REDDEL$, we have $T\sattsk I.\TBSK$. 
    Pick an arbitrary $t'\in R$, again by reasoning similar to the case $\REDDEL$, 
    we can deduce $t'/T\neq\Err$ and $\EVALT{\PRED(t'/T)}\neq\Err$. 
    Using $\jdgtC{\Gamma,\itfc}{(I,R)}$ we have $\jdgtt{\Gamma}{t'}{I.\TBSK}$ and 
    $\itfc(I.\TBID)=I.\TBSK$. 
    Since the transition can take place, we have $\TBID=I.\TBID$.
    Hence $\jdgtt{\Gamma}{t'}{\itfc(\TBID)}$, and $\jdgtt{\emptyTPEnv}{t'}{\itfc(\TBID)}$ holds
    by Lemma~\ref{lem:smaller_dom} and the fact that $\FV{t'}=\emptyset$. 
    Since $N$ is closed, $\FV{t}\subseteq\BV{T}=\dom{\Gamma''}$. 
    Hence $\jdgtt{\Gamma''}{t}{\itfc(\TBID)}$. 
    By Lemma~\ref{lem:sub_eval_n_err} we have $\EVALT{t(t'/T)}\sattsk \itfc(\TBID)$, 
    which gives $\EVALT{t(t'/T)}\sattsk I.\TBSK$. 
    Hence it also holds that $\EVALT{t(t'/T)}\neq\Err$. 
    We can now assert that 
    \small
    $$N'=l_1::P~\!||~\!l_2::(I,R'_1\uplus R'_2),$$ 
    \normalsize
    where $R'_1=\{t'\in R\mid \EVALT{\PRED(t'/T)}\neq\TT\}$ and 
    $R'_2=\{\EVALT{t\sigma}\mid \exists t': t'\in R\land t'/T=\sigma\land \EVALT{\PRED\sigma}=\TT\}$. 

    We proceed with showing $\jdgtC{\Gamma,\itfc}{(I,R'_1\uplus R'_2)}$. 
    Pick arbitrary $t''\in R'_1\uplus R'_2$. 
    We show that $\jdgtt{\Gamma}{t''}{I.\TBSK}$ with a case analysis on $t''$. 
    \begin{mylist1}
    \item[\CASE $t''\in R'_1$:] 
      We have $t''\in R$; hence it is obvious that $\jdgtt{\Gamma}{t''}{I.\TBSK}$ holds. 
    \item[\CASE $t''\in R'_2$:]
      We have $t''=\EVALT{t(t'/T)}$ for some $t'\in R$ such that $\EVALT{\PRED(t'/T)}=\TT$. 
      By previous results we have $\EVALT{t(t'/T)}\sattsk I.\TBSK$. 
      Using Lemma~\ref{lem:tp_eval} we have $\jdgtt{\emptyTPEnv}{\EVALT{t(t'/T)}}{I.\TBSK}$. 
      Weakening the typing environment we obtain $\jdgtt{\Gamma}{\EVALT{t(t'/T)}}{I.\TBSK}$. 
    \end{mylist1}
    It is now straightforward to obtain 
    \small
    \begin{equation}\label{eq:upd_tp_tb'}
      \jdgtC{\Gamma,\itfc}{(I,R'_1\uplus R'_2)}
    \end{equation}
    \normalsize
    Using \eqref{eq:upd_tp_P} and \eqref{eq:upd_tp_tb'} we can easily derive $\jdgtN{\Gamma,\itfc}{N'}$. 
  \item[\CASE $\REDAGGR$:] 
    The net $N$ is 
    $\LOCATED{\llocConst_1}{}{\AGGR{\TBID}{T}{\psi}{f}{T'}{\llocConst_2}.P}~\!||~\!
     \LOCATED{\llocConst_2}{}{(I,R)}$. 
    By the typing of $N$ under $\Gamma$ and $\itfc$, 
    we have $\jdgtP{\Gamma,\itfc}{\AGGR{\TBID}{T}{\psi}{f}{T'}{\llocConst_2}.P}$ and 
    $\jdgtC{\Gamma,\itfc}{(I,R)}$; 
    hence we have $\jdgta{\Gamma,\itfc}{\AGGR{\TBID}{T}{\psi}{f}{T'}{\llocConst_2}}{\Gamma''}$ 
    for some $\Gamma''$ such that 
    \small
    \begin{equation}\label{eq:aggr_tp_P}
      \jdgtP{(\Gamma,\Gamma''),\itfc}{P}. 
    \end{equation}
    \normalsize
    By the typing of the aggregation action we have 
    $\jdgtT{\itfc(\TBID)}{T}{\Gamma'''}$,  for some $\Gamma'''$ such that 
    $\jdgtpred{\Gamma,\Gamma'''}{\PRED}$, and
    $f:\msett{\itfc(\TBID)}\rightarrow\type'$, for some $\type'$ such that 
    $\jdgtT{\type'}{T'}{\Gamma''}$, 
    and $\jdgtE{\Gamma}{l_2}{\loctp}$. 

    By reasoning similar to the cases $\REDDEL$ and $\REDUPD$, 
    we have $T\sattsk I.\TBSK$. 
    Pick arbitrary $t\in R$, we also have $t/T\neq\Err$ and $\EVALT{\PRED(t/T)}\neq\Err$ analogously
    to the two aforementioned cases. 
    Picking arbitrary $\type_1$ and $\type_2$ such that $f:\msett{\type_1}\rightarrow \type_2$,
    we have $\type_1=\itfc(\TBID)$ and $\type_2=\type'$. 
    Thus by $\jdgtC{\Gamma,\itfc}{(I,R)}$ we can deduce $\jdgtt{\Gamma}{t}{\type_1}$. 
    Since $\FV{t}=\emptyset$, by Lemma~\ref{lem:smaller_dom} we have $\jdgtt{\emptyTPEnv}{t}{\type_1}$. 
    By Lemma~\ref{lem:tp_eval} and the obvious fact $\EVALT{t}=t$, we have $t\sattsk \type_1$. 
    By Lemma~\ref{lem:tp_T_sat_sk} we have $T'\sattsk \type_2$. 
    We can now assert 
    \small
    $$
      N'=l_1::P(t'/T')~\!||~\! l_2::(I,R), 
    $$
    \normalsize
    where $t'=f(\{t\in R\mid \EVALT{\PRED(t/T)}=\TT\})$. 

    Since the range of $f$ is of type $\type_2$,
    we have $t'\sattsk \type_2$, or $\EVALT{t'}\sattsk \type_2$. 
    By Lemma~\ref{lem:tp_eval} we have $\jdgtt{\emptyTPEnv}{t'}{\type_2}$. 
    Since $t'$ is a function value, its components are values. 
    Suppose $t'=\val_1,...,\val_n$. 
    We have some $\msettp^{1}$, ..., $\msettp^{n}$ such that 
    $\jdgtE{\emptyTPEnv}{\val_1}{\msettp^{1}}$, ..., $\jdgtE{\emptyTPEnv}{\val_n}{\msettp^{n}}$. 
    By $\jdgtT{\type_2}{T'}{\Gamma''}$, we have 
    $T'=!\avar_1,...,!\avar_n$, where each $\avar_j$ is some $x_j$ or $u_j$, 
    $\jdgtT{\msettp^1}{!\avar_1}{[\avar_1:\msettp^1]}$, ..., 
    $\jdgtT{\msettp^n}{!\avar_n}{[\avar_n:\msettp^n]}$, 
    $\Gamma''=[\avar_1:\msettp^1,...,\avar_n:\msettp^n]$. 
    By Lemma~\ref{lem:sub_eval_n_err} we have $t'/T'\neq\Err$; 
    hence $t'/T'=[\val_1/\avar_1,...,\val_n/\avar_n]$. 
    By \eqref{eq:aggr_tp_P}, and Lemma~\ref{lem:subst_aP}, we have 
    \small
    \begin{equation}\label{eq:aggr_tp_P'}
      \jdgtP{\Gamma,\itfc}{P(t'/T')}
    \end{equation}
    \normalsize
    Using \eqref{eq:aggr_tp_P'} we can easily derive $\jdgtN{\Gamma,\itfc}{N'}$. 

  \item[\CASE $\REDCREATE$:] Trivial. 

  \item[\CASE $\REDDROP$:] Trivial. 

  \item[\CASE $\REDEVALP$:] 
    The net $N$ is $l_1::\EVAL{P}{l_2}.P'~\!||~\!l_2::\NILP$. 
    By the typing of $N$ under $\Gamma$ and $\itfc$, we have 
    $\jdgtP{\Gamma,\itfc}{\EVAL{P}{l_2}.P'}$ and $\jdgtP{\Gamma,\itfc}{\NILP}$; 
    hence we have $\jdgta{\Gamma,\itfc}{\EVAL{P}{l_2}}{\emptyTPEnv}$ and 
    $\jdgtP{\Gamma,\itfc}{P'}$. 
    By the typing of the {\sf eval} action we have $\jdgtP{\Gamma,\itfc}{P}$, 
    and $\jdgtE{\Gamma}{l_2}{\loctp}$. 
    We have $N'=l_1::P'~\!||~\! l_2::P$, and 
    we can easily derive $\jdgtN{\Gamma,\itfc}{N'}$. 

  \item[\CASE $\REDFORTT$:]
    The net $N$ is $l::\FOREACH{(I,R)\!}{T}{\psi}{\ordr}{\!P}$. 
    By the typing of $N$ under $\Gamma$ and $\itfc$ we have 
    $\jdgtE{\Gamma,\itfc}{(I,R)}{\type}$ for some $\type$ such that 
    $\jdgtT{\type}{T}{\Gamma''}$, for some $\Gamma''$ such that $\jdgtpred{\Gamma,\Gamma''}{\PRED}$
    and $\jdgtP{(\Gamma,\Gamma''),\itfc}{P}$. 

    On the other hand, according to the rule $\REDFORTT$ we have 
    \small
    $$N'=l::P(t_0/T); (\FOREACH{(I,R\setminus\{t_0\})\!}{T}{\PRED}{\ordr}{\! P}),$$
    \normalsize
    where $t_0\in\minimal(R,\ordr)$. 

    Since $t_0\in R$, by $\jdgtE{\Gamma,\itfc}{(I,R)}{\type}$ we have 
    $\jdgtt{\Gamma}{t_0}{I.\TBSK}$ and $\type=I.\TBSK$. 
    By reasoning similar to that in the case $\REDAGGR$, we can deduce 
    $\jdgtP{\Gamma,\itfc}{P(t_0/T)}$. 
    It is not difficult to deduce $\jdgtE{\Gamma,\itfc}{(I,R\setminus\{t_0\})}{\type}$; 
    hence $\jdgtP{\Gamma,\itfc}{\FOREACH{(I,R\setminus\{t_0\})\!}{T}{\PRED}{\ordr}{\! P}}$ can be established.
    Now it is straightforward to derive $\jdgtN{\Gamma,\itfc}{N'}$. 

  \item[\CASE $\REDFORFF$:]
    Similar to the case $\REDFORTT$, we have 
    $\jdgtE{\Gamma,\itfc}{(I,R)}{\type}$ for some $\type$ such that 
    $\jdgtT{\type}{T}{\Gamma''}$, for some $\Gamma''$ such that $\jdgtpred{\Gamma,\Gamma''}{\PRED}$
    and $\jdgtP{(\Gamma,\Gamma''),\itfc}{P}$. 

    Pick an arbitrary $t_0\in R$. 
    By $\jdgtE{\Gamma,\itfc}{(I,R)}{\type}$ we have $\jdgtt{\Gamma}{t_0}{I.\TBSK}$ and $\type=I.\TBSK$. 
    By reasoning analogous to that used in the case $\REDDEL$, 
    we can deduce that $t_0/T\neq \Err$ and $\EVALT{\PRED(t_0/T)}\neq \Err$. 
    Hence $N'=l::\NILP$ and it is trvial to establish $\jdgtN{\Gamma,\itfc}{N'}$. 

  \item[\CASE $\REDSEQTT$:] Straightforward. 

  \item[\CASE $\REDSEQFF$:] Straightforward. 

  \item[\CASE $(\mrm{CALL})$:]
    The net $N$ is $l::A(\LST{e})$. 
    By $\jdgtN{\Gamma,\itfc}{l::A(\LST{e})}$ we have 
    $\jdgtE{\Gamma}{e_1}{\type_1}$, ..., $\jdgtE{\Gamma}{e_n}{\type_n}$, 
    and $\jdgtP{(\avar_1:\type_1,...,\avar_n:\type_n),\itfc}{P}$. 
    Since $N$ is closed, none of $e_1$, ..., and $e_n$ contains variables. 
    By Lemma~\ref{lem:smaller_dom} we have 
    $\jdgtt{\emptyTPEnv}{e_1}{\type_1}$, ..., $\jdgtt{\emptyTPEnv}{e_n}{\type_n}$. 
    By Lemma~\ref{lem:tp_eval}, we have $\EVALT{e_1}\sattsk \type_1$, ..., $\EVALT{e_n}\sattsk \type_n$, 
    or $v_1\sattsk\type_1$, ..., $v_n\sattsk\type_n$. 
    Using Lemma~\ref{lem:tp_eval} again, we have 
    $\jdgtt{\emptyTPEnv}{v_1}{\type_1}$, ..., $\jdgtt{\emptyTPEnv}{v_n}{\type_n}$. 
    By Lemma~\ref{lem:subst_aP} we have $\jdgtP{\emptyTPEnv,\itfc}{P[v_1/\avar_1]...[v_n/\avar_n]}$. 
    By Lemma~\ref{lem:fv_in_dom} we have $\FV{P[v_1/\avar_1]...[v_n/\avar_n]}=\emptyset$. 
    Weakening the typing environment we get $\jdgtP{\Gamma,\itfc}{P[v_1/\avar_1]...[v_n/\avar_n]}$. 
    Hence it is straightforward to establish $\jdgtN{\Gamma,\itfc}{l::P[v_1/\avar_1]...[v_n/\avar_n]}$. 

  \item[\CASE $\REDPAR$:] 
    The net $N$ is $N_1||N_2$. 
    By $\jdgtN{\Gamma,\itfc}{N_1||N_2}$, 
    we have $\jdgtN{\Gamma,\itfc}{N_1}$ and $\jdgtN{\Gamma,\itfc}{N_2}$. 
    By the rule $\REDPAR$ we have $\STEP{N_2}{N_1}{N'_1}$. 
    By the \IH we have $\jdgtN{\Gamma,\itfc}{N'_1}$. 
    For $N'=N'_1||N_2$ it is straightforward to establish $\jdgtN{\Gamma,\itfc}{N'}$. 

  \item[\CASE $\REDRES$:] Trivial. 

  \item[\CASE $\REDEQUIV$:] Straightforward with the help of Lemma~\ref{lem:tp_equiv}. 

  \end{mylist1}
  This completes the proof. 
\end{proof}

We restate Theorem~\ref{thm:subj_red} and point out that 
it is a consequence of Lemma~\ref{lem:subj_red}. 
\begingroup
\def\thethm{\ref{thm:subj_red}}
\begin{thm}[Subject Reduction] 
For a closed net $N$, if $\BV{N}\cap\dom{\Gamma}=\emptyset$, 
$\jdgtN{\Gamma,\itfc}{N}$, and $\vdash N\rightarrow N'$, 
then $\jdgtN{\Gamma,\itfc}{N'}$. 
\end{thm}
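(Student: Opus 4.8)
The plan is to obtain Theorem~\ref{thm:subj_red} as the special case $\envnet=\NILN$ of a strengthened statement (namely Lemma~\ref{lem:subj_red}) in which an arbitrary \emph{environment net} $\envnet$ is threaded through: assuming $\envnet$ and $N$ closed, $\BV{N}\cap\dom{\Gamma}=\emptyset$, $\jdgtN{\Gamma,\itfc}{\envnet}$, $\jdgtN{\Gamma,\itfc}{N}$, and $\envnet\vdash N\rightarrow N'$, one concludes $\jdgtN{\Gamma,\itfc}{N'}$. The generalization is what makes an induction on the transition go through: the rule $\REDPAR$ builds a transition of $N_1||N_2$ from one of $N_1$ using $N_2$ as environment, so the induction hypothesis must tolerate a non-trivial environment. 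Since $\NILN$ is trivially closed and trivially typable and $\vdash N\rightarrow N'$ abbreviates $\NILN\vdash N\rightarrow N'$, the theorem follows at once from the lemma.

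First I would prove the strengthened lemma by induction on the derivation of $\envnet\vdash N\rightarrow N'$, one case per semantic rule. For the action rules ($\REDINS$, $\REDDEL$, $\REDSEL$, $\REDUPD$, $\REDAGGR$, $\REDCREATE$, $\REDDROP$, $\REDEVALP$) the common recipe is: invert the typing of $N$ to extract the typing of the prefixed action and of its continuation; then use the premise of the semantic rule (agreement of table identifiers, definedness of $\prodSK$/$\prodR$, and the like) together with the auxiliary lemmas to show that the \emph{otherwise} branch is taken, i.e.\ that $\ErrNet$ is \emph{not} produced, and that the resulting net is again typable. The supporting results I would invoke are: Lemma~\ref{lem:tp_eval} (for closed terms, typability matches well-sortedness of the evaluation result, in both directions), Lemma~\ref{lem:tp_T_sat_sk} (a template typed under a schema is well-sorted under it), Lemma~\ref{lem:sub_eval_n_err} (matching a well-typed closed tuple against a well-typed template never fails, and the ensuing predicate and tuple evaluations succeed with the expected sorts), Lemma~\ref{lem:smaller_dom} and Lemma~\ref{lem:reordering} (shrinking and reordering of typing environments), Lemma~\ref{lem:subst_tept} and Lemma~\ref{lem:subst_aP} (substitution preserves typing for expressions/predicates/tuples/tables and for actions/processes), and, for the join case, Lemma~\ref{lem:sk_prod} and Lemma~\ref{lem:dt_prod} (the schema and data-set products coincide with the flattened products of the component schemas).

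The process and net rules are lighter. $\REDFORTT$ and $\REDFORFF$ are treated like the action rules, additionally using Lemma~\ref{lem:minus_t} to re-type the loop body against the shrunk data set; $\REDSEQTT$ and $\REDSEQFF$ follow directly from the induction hypothesis since the variables bound in $P_1$ have scope local to $P_1$, so no substitution on $P_2$ is needed; $(\mrm{CALL})$ uses Lemma~\ref{lem:subst_aP} to push the evaluated arguments into the (closed) procedure body, re-typed with the parameter-to-type environment from the procedure signature; $\REDPAR$ and $\REDRES$ are the induction hypothesis; and $\REDEQUIV$ uses Lemma~\ref{lem:tp_equiv} (typability is invariant under structural congruence).

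I expect the $\REDSEL$ case to be the main obstacle. There the result table $(I',R')$ substituted for $\TBV$ must be shown to carry exactly the schema $\tau'$ that the typing rule for {\sf select} assigned to $\TBV$, so that Lemma~\ref{lem:subst_aP} applies to the continuation. This forces one to combine (i) Lemma~\ref{lem:sk_prod}, to rewrite $\prodSK(\LST{\CALTB},\LSTStruct{\llocConst_i::(I_i,R_i)}{i})$ as a flattened product of the component schemas; (ii) Lemma~\ref{lem:tp_proj}, to identify $I'.\TBSK=\skproj{\tau}{T}{t}$ with the type $\tau'$ derived for $t$ under the environment that $T$ generates; and (iii) Lemma~\ref{lem:sub_eval_n_err} together with Lemma~\ref{lem:tp_eval} to show that every element of $R'$, being $\EVALT{t\sigma}$ for a matching $\sigma$, is well-sorted under $\tau'$. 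The delicate part is the bookkeeping about which variables occur free in $T$ as opposed to in $\Gamma$ (needed to justify each application of Lemma~\ref{lem:smaller_dom} and Lemma~\ref{lem:reordering}); the same, but simpler, reasoning recurs in $\REDUPD$ and $\REDAGGR$.
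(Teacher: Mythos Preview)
Your proposal is correct and matches the paper's approach essentially line for line: the paper proves Theorem~\ref{thm:subj_red} by instantiating Lemma~\ref{lem:subj_red} with $\envnet=\NILN$, and proves that lemma by induction on the derivation of $\envnet\vdash N\rightarrow N'$ using exactly the auxiliary results you list (including Lemmas~\ref{lem:sk_prod}, \ref{lem:dt_prod}, and \ref{lem:tp_proj} for the $\REDSEL$ case, which the paper also treats as the most involved). Your identification of the bookkeeping around Lemma~\ref{lem:smaller_dom} and Lemma~\ref{lem:reordering} as the delicate part is accurate.
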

\addtocounter{thm}{-1}
\endgroup
\begin{proof}
Theorem~\ref{thm:subj_red} can be directly obtained from Lemma~\ref{lem:subj_red} 
by instantiating the latter with $\envnet=\NILN$. 
\end{proof}

\begin{lem}\label{lem:efficiency}
Suppose the well-typedness of each procedure body has already been decided
with a typing environment that associates its formal parameters 
to their delcared types. 
With a given $\itfc$, 
and the assumptions that it takes $O(1)$ time to 
\begin{itemize}
  \item produce the types of all constant expressions, 
  \item decide the equality/inequality of table identifiers and types, 
  \item construct a singleton typing environment, 
  \item construct the extension $(\Gamma_1,\Gamma_2)$ of a typing environment $\Gamma_1$, and
  \item look up environments $\Gamma$ (resp. $\itfc$) for variables (resp. table identifiers), 
\end{itemize}
and that it takes $O(n)$ time to 
\begin{itemize}
  \item form an $n$-ary product type, and 
  \item perform the operation $\flattensk(\type_1\times...\times\type_n)$, 
\end{itemize}
the following results hold
\begin{enumerate}
  \item $\jdgtE{\Gamma}{e}{\type}$ can be decided in time $O(\sz{e})$
  \item $\jdgtpred{\Gamma}{\PRED}$ can be decided in time $O(\sz{\PRED})$
  \item $\jdgtt{\Gamma}{t}{\type}$ can be decided in time $O(\sz{t})$
  \item $\jdgtT{\type}{T}{\Gamma'}$ can be decided in time $O(\sz{T})$ 
  \item $\jdgtE{\Gamma,\itfc}{\CALTB}{\type}$ can be decided in time $O(\sz{\CALTB})$
  \item $\jdgta{\Gamma,\itfc}{a}{\Gamma'}$ can be decided in time $O(\sz{a})$, and \\
        $\jdgtP{\Gamma,\itfc}{P}$ can be decided in time $O(\sz{P})$
  \item $\jdgtC{\Gamma,\itfc}{C}$ can be decided in time $O(\sz{C})$
  \item $\jdgtC{\Gamma,\itfc}{N}$ can be decided in time $O(\sz{N})$
\end{enumerate}
\end{lem}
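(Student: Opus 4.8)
The plan is to prove claims (1)--(8) by one simultaneous structural induction over the grammar of Figure~\ref{fig:syntax}, following the shape of the typing rules in Figures~\ref{fig:tp_ett}--\ref{fig:tp_pcn}, and then to read Theorem~\ref{thm:efficiency} off as a corollary. The invariant driving the induction is that for every typing rule the \emph{node-local} work --- every environment lookup, singleton-environment construction, environment extension, type- or identifier-equality test, $n$-ary product-type formation, and application of $\flattensk$ --- costs either $O(1)$ or $O(n)$ for an arity $n$ that is itself at most the $\sz{\cdot}$-size of the rule's subject, while the rule's (boundedly many) premises speak about proper syntactic subterms whose sizes sum to no more than the size of the subject. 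Hence the cost $\mi{Time}(o)$ of checking a typing judgment whose subject is $o$ satisfies $\mi{Time}(o)\le c\cdot(1+\sz{o})+\sum_i \mi{Time}(o_i)$ over the immediate subterms $o_i$, which solves to $\mi{Time}(o)=O(\sz{o})$. It matters that all the $O(1)$ assumptions concern operations whose cost is independent of the size of the ambient $\Gamma$ and of the schemas passed around, because the induction hypotheses are invoked at \emph{extended} environments $(\ext{\Gamma}{\Gamma'})$ (prefixing, {\sf foreach}, and most actions), at restricted environments, and at the parameter environments of procedures; thus the claimed bounds must, and do, hold uniformly over all $\Gamma$ for the fixed $\itfc$.

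First I would dispatch the base categories (1)--(5). For expressions, predicates and tuples every rule is either a leaf with $O(1)$ node cost (constants and variables, by the lookup assumption), a fixed-fan-in composite (string concatenation, arithmetic, negation, conjunction, comparison, membership), or an $n$-ary case ($\{e_1,\dots,e_n\}$, or $e_1,\dots,e_n$) whose node cost is the $O(n)$ formation of a product or multiset type together with $O(1)$ equality checks and whose premises are exactly the $e_j$; summing gives the linear bound. Template typing (4) is the first place to be careful: the rule for $\bd_1,\dots,\bd_n$ has $n$ premises, but matching the supplied product type $\sktp_1\times\cdots\times\sktp_n$ against the template costs only $O(1)$ per position --- one arity check on the product type, then for each leaf an $O(1)$ test such as $\sktp_j\neq\loctp$, an $O(1)$ singleton-environment construction, and an $O(1)$ extension --- so the total is $O(n)=O(\sz{T})$ and is \emph{independent of} $|\sktp_1\times\cdots\times\sktp_n|$, exactly as the statement demands. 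For tables (5), references and table variables have $O(1)$ node cost and one $O(1)$ sub-premise; for a concrete table $(I,R)$ we type each $t\in R$ against $I.\TBSK$ at cost $O(\sz{t})$ by (3) plus an $O(1)$ equality check, and since the definition of $\sz{\cdot}$ spells out the full text of $I$ and $R$ these sum to $O(\sz{(I,R)})$.

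Next I would handle actions and processes (6). The database-oriented actions are all of the same shape: type the referenced table(s) by (5), possibly form an $n$-ary product and apply $\flattensk$ (each $O(n)$), derive $\Gamma'$ from a template by (4), type a predicate by (2) and a tuple by (3) under an $O(1)$-extended environment, and emit an output environment that is either an $O(1)$ singleton or the template-derived one; since $n$ is bounded by $\sz{\cdot}$ of the action, the sum is linear. Among processes, $\NILP$, prefixing, sequential composition and the {\sf foreach} loop follow the same pattern, the loop body being typed under $(\ext{\Gamma}{\Gamma'})$ by the induction hypothesis; the genuinely different case is a procedure call $A(\tilde e)$, where I would invoke the standing hypothesis of the lemma that each procedure body has \emph{already} been checked against the environment assigning its formal parameters their declared types --- so the body premise of the rule reduces to comparing, in $O(n)$ time, the argument types $\tau_i$ derived for the $e_i$ with the declared parameter types, and the body is not re-scanned. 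Cases (7) and (8) are then immediate: $C_1\mid C_2$, $\RESNET{l}{N}$ and $\PARNET{N_1}{N_2}$ have $O(1)$ node cost over subterms, and $\LOCATED{l}{}{C}$ additionally checks $\jdgtE{\Gamma}{l}{\loctp}$ in $O(1)$. Finally, since $\size{N}=\sz{N}+\sum_{A(\dots)\triangleq P}\sz{P}$, type-checking the whole system costs $O(\sz{N})$ by (8) plus $O(\sz{P})$ for the one-time check of each procedure body by (6), i.e.\ $O(\size{N})$, which is Theorem~\ref{thm:efficiency}.

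The part I expect to require the most care is not any single rule but keeping the bound in (4) --- and, through it, the bounds for {\sf select}, {\sf aggr}, the {\sf foreach} loop, and the concrete-table rule --- honestly \emph{independent of the size of the schemas flowing downward}. This forces one to fix a representation of product types in which arity and each factor are reachable in $O(1)$, to ensure that componentwise matching against a template never inspects more of a schema than the template itself has positions, and to be sure that the $O(1)$ assumptions on type equality and on environment extension are really used rather than quietly broken (for instance by copying a schema into a singleton binding, or by re-traversing $\Gamma$ on extension). Once that representation and the node-local cost accounting are pinned down, the remaining cases are routine.
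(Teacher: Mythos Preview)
Your proposal is correct and follows essentially the same approach as the paper: a syntax-directed structural induction over the grammar, exploiting that each syntactic entity has exactly one applicable typing rule (so no backtracking is needed) and that node-local costs are bounded by the subject's size. The paper's own proof is in fact only a two-sentence sketch --- it records the key observation that the system has no subtyping and is syntax-directed, then defers to ``straightforward induction'' --- whereas you carry out the case analysis in detail and are explicit about the one point that genuinely needs care, namely that the bound for template typing (and hence for {\sf select}, {\sf aggr}, {\sf foreach}, and concrete tables) must be independent of the size of the incoming schema.
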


\begin{proofsketch} 
  A general observation is that the type system does not make use of subtyping and
  for each syntactical entity, there can only be one typing rule applicable. 
  On this basis, the proofs of (1)--(8) are by straightforward induction on 
  the structure of the corresponding syntactical entities. 
\end{proofsketch}

\begingroup
\def\thethm{\ref{thm:efficiency}}
\begin{thm}[Efficiency of Type Checking]
With a given $\itfc$, 
the time complexity of type checking net $N$ is linear in $\mi{size}(N)$, provided that 
it takes $O(1)$ time to 
\begin{itemize}
  \item determine the types of all constant expressions, 
  \item decide the equality/inequality of table identifiers and types, 
  \item construct a singleton typing environment, 
  \item construct the extension $(\Gamma_1,\Gamma_2)$ of a typing environment $\Gamma_1$ with $\Gamma_2$, and
  \item look up environments $\Gamma$ (resp. $\itfc$) for variables (resp. table identifiers), 
\end{itemize}
and that it takes $O(n)$ time to 
\begin{itemize}
  \item form an $n$-ary product type, and 
  \item perform the operation $\flattensk(\type_1\times...\times\type_n)$.
\end{itemize}
\end{thm}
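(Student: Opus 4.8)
The plan is to obtain Theorem~\ref{thm:efficiency} as an almost immediate corollary of Lemma~\ref{lem:efficiency}, the only real work being to account for the one ingredient that Lemma~\ref{lem:efficiency} takes for granted: that the well-typedness of every procedure body has already been decided under the typing environment associating its formal parameters to their declared types. I would therefore organise the type-checking procedure for $N$ into two phases. In the first phase, for each procedure definition $A(\avar_1:\tau_1,\ldots,\avar_n:\tau_n)\triangleq P$ occurring in the system, I would construct the environment $[\avar_1:\tau_1,\ldots,\avar_n:\tau_n]$ --- which by the assumed $O(1)$ costs of building a singleton environment and of forming an extension takes $O(n)$ time --- and then invoke Lemma~\ref{lem:efficiency}(6) to decide $\jdgtP{[\avar_1:\tau_1,\ldots,\avar_n:\tau_n],\itfc}{P}$ in time $O(\sz{P})$, caching the verdict under the procedure's name. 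Summed over all procedures, the first phase runs in time $O\!\big(\sum_{A(\LSTStruct{\avar_i:\tau_i}{i})\triangleq P}\sz{P}\big)$, the $O(n)$ overheads being dominated by this sum under the mild convention that procedure headers are counted into $\size{N}$.

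In the second phase I would apply Lemma~\ref{lem:efficiency}(8) to $N$, concluding that the relevant typing judgment for $N$ is decidable in time $O(\sz{N})$. The one place where this invocation leans on the first phase is the procedure-call case of the induction underlying Lemma~\ref{lem:efficiency}: for $A(\LST{e})$ the algorithm type-checks only the actual arguments $\LST{e}$ --- in time $O(\sz{A(\LST{e})})$ by Lemma~\ref{lem:efficiency}(3), together with an $O(1)$ equality test per argument to confirm that the derived argument types match the declared parameter types --- and then performs a constant-time lookup of the cached verdict for the body $P$, rather than re-traversing $P$. Consequently no procedure body is ever examined more than once, and the linear bound of Lemma~\ref{lem:efficiency}(8) is genuinely attained. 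Adding the two phases yields a total running time of $O\!\big(\sz{N}+\sum_{A(\LSTStruct{\avar_i:\tau_i}{i})\triangleq P}\sz{P}\big)=O(\size{N})$, as required.

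I expect the main obstacle to be exactly this caching argument rather than any genuine complexity-theoretic difficulty. Read literally, the typing rule for $A(\LST{e})$ lists $\jdgtP{(\avar_1:\tau_1,\ldots,\avar_n:\tau_n),\itfc}{P}$ among its premises, and reconstructing that premise at every call site would let nested --- in particular mutually recursive --- procedure invocations re-check the same body arbitrarily (indeed exponentially) often; the separation ``check all bodies once, then check $N$'' is what rules this out, and it mirrors the hypothesis already built into the statement of Lemma~\ref{lem:efficiency}. The other thing to keep in mind, though the sketch of Lemma~\ref{lem:efficiency} already supplies it, is that the type system is syntax-directed --- a single applicable rule per construct, no subtyping, no backtracking --- so that type checking any composite entity costs only $O(1)$ of local bookkeeping (environment extension, type equality, per-component product formation, $\flattensk$) plus the costs for its immediate subterms, a recurrence whose solution is linear in $\sz{\cdot}$.
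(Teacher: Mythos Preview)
Your proposal is correct and follows essentially the same two-phase strategy as the paper: first type-check each procedure body once under its declared parameter environment, then type-check $N$ while skipping procedure bodies at call sites, and add the resulting $O\!\big(\sum_{A(\LSTStruct{\avar_i:\tau_i}{i})\triangleq P}\sz{P}\big)$ and $O(\sz{N})$ bounds to obtain $O(\size{N})$. You are more explicit than the paper about the caching mechanism and the danger of exponential blowup from re-checking bodies, but the underlying argument is the same.
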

\addtocounter{thm}{-1}
\endgroup
\proof
  We type check $N$ in two steps:
  \begin{enumerate}
    \item we type check all the bodies of the procedures defined, 
    with typing environments that associate their formal parameters to their declared types; 
    \item if one of these bodies fails to type check, we abort the task, declaring that 
    $N$ does not type check;
    otherwise we type check $N$. 
  \end{enumerate}
  In both steps, when a procedure call is encountered (in a procedure body or $N$), 
  we simply ignore the procedure body. 

  It takes $O(\sum_{A(\LSTStruct{\avar_i:\tau_i}{i})\triangleq P} \sz{P})$ time to perform step (1). 
  By Lemma~\ref{lem:efficiency}, it takes $O(\sz{N})$ time to perform step (2). 
  Therefore it is obvious that the time complexity of type checking $N$ is linear in $\mi{size}(N)$, 
  which is 
  \small
  $$\sz{N}+\sum_{A(\LSTStruct{\avar_i:\tau_i}{i})\triangleq P} \sz{P}.\eqno{\qEd}$$
  \normalsize

%%% Local Variables:
%%% mode: latex
%%% TeX-master: "paper"
%%% End:

\end{document}